\documentclass[11pt,fleqn]{article}
\usepackage{amssymb,latexsym,amsmath,amsfonts,graphicx}
\usepackage{pictex,color,comment}
\usepackage{pict2e}

     \topmargin -15mm
    \textwidth 165 true mm
    \textheight 230 true mm
    \oddsidemargin 4mm
    \evensidemargin 4mm
    \marginparwidth 19mm
    \advance\textheight by \topskip

    \numberwithin{equation}{section}

\def\beq{\begin{equation}}
\def\eeq{\end{equation}}
\def\bdf{\begin{definition}}
\def\edf{\end{definition}}

\def\bit{\begin{itemize}}
\def\eit{\end{itemize}}

\makeatletter
\def\eqalign#1{\null\vcenter{\def\\{\cr}\openup\jot\m@th
  \ialign{\strut$\displaystyle{##}$\hfil&$\displaystyle{{}##}$\hfil
      \crcr#1\crcr}}\,}
\makeatother

\newcommand{\bea}{\begin{align}}
\newcommand{\eea}{\end{align}}

\newcommand{\be}{\begin{equation}}
\newcommand{\ee}{\end{equation}}

    \def\e{{\epsilon}}

    \def\Re{{\rm Re \,}}
    \def\Im{{\rm Im \,}}

    \def\bigO{{\cal O}}

    \def\P2n{{\rm P}_{{\rm II}}^{(n)}}

    \newtheorem{theorem}{Theorem}[section]
    \newtheorem{lemma}[theorem]{Lemma}
    \newtheorem{corollary}[theorem]{Corollary}

    \newtheorem{Definition}[theorem]{Definition}
    \newenvironment{definition}{\begin{Definition}\rm}{\end{Definition}}
    \newtheorem{Remark}[theorem]{Remark}
    \newenvironment{remark}{\begin{Remark}\rm}{\end{Remark}}
    \newtheorem{Example}[theorem]{Example}
    
    \newtheorem{Assumptions}[theorem]{Assumptions}
    
    \newtheorem{Notation}[theorem]{Notation}

    \newenvironment{proof}%
    {\rm \trivlist \item[\hskip \labelsep{\bf Proof. }]}%
    {\hspace*{\fill}$\Box$\endtrivlist}
    {\rm \trivlist \item[\hskip \labelsep{\bf Proof}]}%
    {\hspace*{\fill}$\Box$\endtrivlist}

\begin{document}
\title{On the ratio probability of the smallest eigenvalues in the \\ Laguerre Unitary Ensemble}
\author{Max R. Atkin\footnote{Institut de Recherche en Math\'ematique et Physique,  Universit\'e
catholique de Louvain, Chemin du Cyclotron 2, B-1348
Louvain-La-Neuve, BELGIUM}, Christophe Charlier\footnotemark[\value{footnote}], Stefan Zohren\footnote{Quantum  and  Nanotechnology  Theory  Group,  Department  of  Materials,  and
Machine  Learning  Research  Group,  Department  of  Engineering  Science,  University  of  Oxford,  UK}}

\maketitle

\begin{abstract}
We study the probability distribution of the ratio between the second smallest and smallest eigenvalue in the $n\times n$ Laguerre Unitary Ensemble. The probability that this ratio is greater than $r>1$ is expressed in terms of an $n \times n$ Hankel determinant with a perturbed Laguerre weight. The limiting probability distribution for the ratio as $n\to\infty$ is found as an integral over $(0,\infty)$ containing two functions $q_{1}(x)$ and $q_{2}(x)$. These functions satisfy a system of two coupled Painlev\'{e} V equations, which are derived from a Lax pair of a Riemann-Hilbert problem. We compute asymptotic behaviours of these functions as $rx \to 0_{+}$ and $(r-1)x \to \infty$, as well large $n$ asymptotics for the associated Hankel determinants in several regimes of $r$ and $x$.
\end{abstract}

%


\section{Introduction and main results}

The Laguerre Unitary Ensemble (LUE) consists of the space of $n\times n$ complex positive definite Hermitian matrices endowed with the distribution
\begin{equation}\label{intro LUE}
\frac{1}{\widetilde{Z}_{n,\alpha}} (\det M)^{\alpha} e^{-n \rm{ Tr} M}  dM, \qquad \alpha > -1,
\end{equation}
where $\widetilde{Z}_{n,\alpha}$ is the normalisation constant and $dM$ is the Lebesgue measure
\begin{equation}
dM = \prod_{i=1}^{n} dM_{ii} \prod_{1 \leq i<j \leq n } d \mathrm{Re} M_{ij}d \mathrm{Im} M_{ij}.
\end{equation}
The probability measure \eqref{intro LUE} is invariant under unitary conjugation and induces a joint probability distribution on the eigenvalues $\lambda_{1},...,\lambda_{n}$ on $(\mathbb{R}^{+})^{n}$ given by 
\begin{equation} \label{eq:LUE}
\frac{1}{n! \widehat Z_{n,\alpha}} \Delta_n(\lambda)^2  \prod^n_{i=1}e^{-n \lambda_i} \lambda_i^\alpha \chi_{\mathbb{R}^+}(\lambda_i) d\lambda_i,
\end{equation}
where $\Delta_n(\lambda)$ denotes the Vandermonde determinant
\begin{equation}
\Delta_n(\lambda)\equiv  \Delta_n(\lambda_{1},...,\lambda_{n}) := \prod_{1 \leq i<j \leq n} (\lambda_j-\lambda_i)
\end{equation}
and $\chi_{\mathbb{R}^+}$ is the characteristic or indicator function with support on the positive half line.
The normalisation constant $\widehat Z_{n,\alpha}$, also known as the partition function, is given by
\begin{equation}\label{lol43}
\widehat Z_{n,\alpha} =  \frac{1}{n!} \int^{\infty}_{0} ... \int_{0}^{\infty} \Delta_n(\lambda)^2 \prod^{n}_{i=1}  e^{-n \lambda_i}  \lambda_i^\alpha d\lambda_i.
\end{equation}


The theory of random matrices has enjoyed a growing interest and study for a number of decades in part due to the surprising number of connections between it and seemingly unrelated topics, see for example \cite{KeaSna}. Some classical references in the fields are \cite{AkeBaiFra,AndGuiZei,Mehta,Tao}. Probably the earliest appearance of random matrix theory dates back to Wishart in 1928 in the context of multivariate data analysis \cite{Wishart}. He was studying matrices of the form $X^{*}X$, where $X$ is an $m \times n$ matrix ($m \geq n$) whose entries $X_{ij}$ are independent and identically distributed complex Gaussian variables
\begin{equation}
\Re X_{ij}, \Im X_{ij} \sim \mathcal{N}\left(0,\sigma^{2} = \frac{1}{2n}\right),
\end{equation}
and $X^{*}$ is the conjugate transpose of $X$. Such positive semi-definite matrices, called Wishart matrices, possess eigenvalues which are also distributed according to \eqref{eq:LUE}, where $\alpha = m-n$ is an integer. Since then, the LUE has been studied a lot, and has found applications in different areas, for example in finance (see e.g. \cite{BaiSil}, Section 12.2).

\subsection*{Some known results}

The limiting mean eigenvalue density function is given by the Marchenko-Pastur \cite{MarPas} law
\begin{equation}
d\mu(x) = \rho(x)dx = \frac{1}{2\pi} \sqrt{\frac{4-x}{x}} dx,
\end{equation}
supported on the interval $[0,4]$. Besides the eigenvalue density, other quantities of interest are related to the extreme value statistics of the eigenvalues. The most well-known extreme value statistics in the field of random matrix theory is the Tracy-Widom distribution \cite{TraWidGUE} which describes the properly rescaled fluctuations of the largest (or smallest) eigenvalue of a matrix from the Gaussian Unitary Ensemble (GUE). This distribution is given by $\widetilde F_{2}(x) = \exp \left(- \int_{x}^{\infty} (s-x)u^{2}(s)ds \right)$ and $u$ is the Hastings-McLeod \cite{HasMcL} solution of the Painlev\'e II equation $u^{\prime\prime}(s) = su(s) + 2u(s)^{3}$ satisfying the boundary condition $u(s) \sim \mathrm{Ai}(s)$ as $s \to \infty$, where $\rm{Ai}$ is the Airy function. For the case of the LUE, the distribution of the rescaled fluctuations of the largest eigenvalue at the soft edge, at $x=4$, is also given by the Tracy-Widom distribution $\widetilde F_{2}$. 

In this paper we focus on the hard edge, at $x=0$, where the distribution of the smallest eigenvalue (denoted by $\lambda_{\mathrm{min}}$) is given by another distribution $F_{\alpha}(x)$ in terms of a transcendent of the Painlev\'e V equation, also proved by Tracy and Widom \cite{TraWidLUE}. More precisely, one has
\begin{equation} \label{intro4 lambdamindist}
F_{\alpha} (x)=\lim_{n\to\infty} \mathbb{P}_{n,\alpha}(4 n^2 \lambda_{\mathrm{min}} > x  ) = \exp\left( - \frac{1}{4} \int_0^x \log\left(\frac{x}{\xi}\right) q^2(\xi) d\xi \right)
\end{equation}
where $q(x)$ is the solution of the equation
\begin{equation}\label{Painleve V lol}
xq\big(1-q^{2}\big)\big(xqq^{\prime}\big)^{\prime} + x\left( (xq^{\prime})^{\prime} + \frac{q}{4} \right)\big(1-q^{2}\big)^{2} + x^{2} q \left( qq^{\prime} \right)^{2} = \alpha^{2} \frac{q}{4},
\end{equation}
with boundary condition $q(\xi)\sim J_{\alpha}(\sqrt{\xi})$ for $\xi \to 0_{+}$, and $J_{\alpha}$ is the Bessel function of the first kind of order $\alpha$ (see \cite[Section 10.2 and and Section 10.7]{NIST} for definition and properties of this function). 

That $q(\xi)$ is indeed a transcendent of the Painlev\'e V equation can be seen from the following transformation
\begin{equation*}
q(\xi) = \frac{1+y(x)}{1-y(x)}, \quad \xi= x^2,
\end{equation*}
from which it can be readily checked that $y(x)$ is a solution to the Painlev\'e V equation with suitable chosen coefficients. 

Another quantity of interest is the gap probability
\begin{equation}
G_{n,\alpha} (d) = \mathbb{P}_{n,\alpha} \left( \lambda_{\mathrm{smin}} - \lambda_{\mathrm{min}} > d \right), \qquad d > 0,
\end{equation}
between the smallest and second smallest (denoted by $\lambda_{\mathrm{smin}}$) eigenvalue of the LUE. Some results were obtained in \cite{ForWit}, where it was shown that the density of $G_{\alpha}(x) = \lim_{n\to\infty} G_{n,\alpha} (\frac{x}{4n^{2}})$ exists and is characterised by the solution of a Painlev\'e III equation and its associated linear isomonodromic system. For results on the gap probability but at the soft edge, see \cite{PerSch} and \cite{WitBorFor}. In \cite{PerSch2}, using heuristics arguments and numerical simulations, the authors generalized the results obtained in \cite{PerSch} to a more general context.

The main focus of this paper is the distribution of the ratio 
\begin{equation}
Q_{n,\alpha}(r) = \mathbb{P}_{n,\alpha}  \left(\frac{\lambda_{\mathrm{smin}}}{\lambda_{\mathrm{min}}} > r\right), \qquad r > 1,
\end{equation}
between the second smallest and the smallest eigenvalue of the LUE. Note that the ratio distribution $Q_{n,\alpha}$ cannot be straightforwardly related to the gap $G_{n,\alpha}$ and the distribution of the smallest eigenvalue, since the three variables $\lambda_{\min}$, $\lambda_{\mathrm{smin}} - \lambda_{\mathrm{min}}$ and $\frac{\lambda_{\mathrm{smin}}}{\lambda_{\mathrm{min}}}$ are not independent. 

Our techniques differ from the techniques used in \cite{ForWit}, \cite{PerSch} and \cite{WitBorFor}. We express the ratio probability in terms of a Hankel determinant and then apply well-known rigorous techniques from Riemann-Hilbert (RH) problems analysis to derive asymptotics. We obtain a description of this quantity in terms of a solution $(q_{1},q_{2})$ to a system of two coupled Painlev\'{e} V equations arising from a Lax pair of a RH problem.


\subsection*{Statement of results}

We begin the calculation of the ratio probability between the second smallest and smallest eigenvalue for general $\alpha > - 1$ by writing the quantity $Q_{n,\alpha}(r)$ as an integral over a Hankel determinant. For $r > 1$, by definition of \eqref{eq:LUE} we have
\begin{equation*}
\label{P def}
Q_{n,\alpha} (r) =  n   \int_{0}^\infty e^{-ny}y^{\alpha} \left(   \int_{yr}^\infty ... \int_{yr}^\infty  \frac{1}{n! \widehat Z_{n,\alpha}} \Delta_{n-1}^2(\lambda)  \prod_{i=1}^{n-1}e^{-n \lambda_i} \lambda_i^\alpha (\lambda_{i}-y)^{2} d\lambda_i   \right) dy,
\end{equation*} 
where $y$ can be interpreted as the smallest eigenvalue. By changing variables $n\lambda_{i} = (n-1)\tilde\lambda_{i}$, we can then write 
\begin{equation} \label{Qn}
Q_{n,\alpha} (r) = \frac{1}{\widehat{Z}_{n,\alpha}} \left( \frac{n-1}{n} \right)^{(n-1)(n+1+\alpha)} \int_{0}^\infty y^\alpha e^{-n y} Z_{n-1,\alpha}\left(\frac{n}{n-1} y;r \right) dy,
\end{equation}
where we defined $Z_{n,\alpha}(y;r)$ by
\begin{equation}\label{def of Zn}
Z_{n,\alpha}(y;r) = \frac{1}{n!} \int_{yr}^\infty ... \int_{yr}^\infty \Delta_n(\lambda)^2 \prod_{i=1}^{n} (\lambda_i - y)^2 \lambda_i^\alpha e^{- n \lambda_{i}} d\lambda_i.
\end{equation}
Note that $Z_{n,\alpha}(y;r)$ is the Hankel determinant (\cite[equations (2.2.7) and (2.2.11)]{Szego OP}) with respect to the weight $w(x)$ defined by
\begin{equation} \label{weight}
w(x) = (x-y)^{2} x^{\alpha} e^{-n x} \chi_{[yr,\infty)}(x),
\end{equation}
where $\chi_{[yr,\infty)}(x)$ is the characteristic function of $[yr,\infty)$, i.e. we have
\begin{equation}
Z_{n,\alpha}(y;r) = \det \left( \int_{yr}^{\infty} x^{i+j}w(x)dx \right)_{i,j=0,...,n-1}.
\end{equation}

Our main results concern asymptotics for the Hankel determinants $Z_{n,\alpha}(y;r)$ and the limiting distribution as $n\to \infty$ of the ratio probability $Q_{n,\alpha}(r)$ which can be expressed in a compact form through a solution $(q_{1},q_{2})$ of a system of coupled Painlev\'{e} V equations. \vspace*{0.4cm}

\hspace{-0.6cm}\textbf{Asymptotics for the Hankel determinant $Z_{n,\alpha}(y;r)$}

\vspace{0.2cm}\hspace{-0.6cm}Let us define $s := 4n^{2}y$, which is a rescaling of $y$. We provide large $n$ asymptotics for $Z_{n,\alpha}(y;r)$ in three different regimes:
\begin{align}
& \mbox{Case 1: }(s,r) \mbox{ are in a compact subset of } (0,\infty)\times (1,\infty), \label{Case 1} \\
& \mbox{Case 2: }rs \to 0, \label{Case 2} \\
& \mbox{Case 3: }\frac{rs}{n} \to 0 \mbox{ and } (r-1)s \to \infty. \label{Case 3}
\end{align}
\begin{theorem}\label{P asymptotics} 
Let $\alpha >-1$ be fixed. As $n\rightarrow \infty$ and simultaneously $s$ and $r$ satisfy one of the three cases presented in \eqref{Case 1}, \eqref{Case 2} and \eqref{Case 3}, we have
\begin{equation}\label{asymptotics result for Hankel determinant}
\log Z_{n,\alpha}\left(\frac{s}{4n^{2}};r\right)  - \log (\widehat Z_{n,\alpha+2})  =   I(s;r) + \left\{ \begin{array}{l l}
\bigO(n^{-1}), & \mbox{ for Case }1, \\
\bigO(n^{-1}), & \mbox{ for Case }2, \\
\bigO\big( \tfrac{(rs)^{2}}{n} \big), & \mbox{ for Case }3,
\end{array} \right. 
\end{equation}
where 
\begin{equation}\label{def of I}
I(s;r) = \displaystyle -\frac{1}{4} \int_{0}^{s} (q_{1}^{2}(x;r)+rq_{2}^{2}(x;r))\log \left( \frac{s}{x}\right) dx.
\end{equation}
The functions $q_{1}^{2}(x;r)$ and $q_{2}^{2}(x;r)$ are real and analytic for $x \in (0,\infty)$ and $r \in (1,\infty)$, and they satisfy the following system of coupled Painlev\'{e} V equations:
\begin{align}
& \hspace{-0.7cm} xq_{1}\bigg(1-\sum_{j=1}^{2}q_{j}^{2}\bigg) \sum_{j=1}^{2}(xq_{j}q_{j}^{\prime})^{\prime} + \bigg[ x \left( (xq_{1}^{\prime})^{\prime} + \frac{q_{1}}{4}\right) + \frac{1}{q_{1}^{3}} \bigg]\bigg(1-\sum_{j=1}^{2}q_{j}^{2}\bigg)^{2} + x^{2}q_{1}\bigg( \sum_{j=1}^{2}q_{j}q_{j}^{\prime} \bigg)^{2} = \frac{\alpha^{2} q_{1}}{4}, \nonumber \\
& \hspace{-0.7cm} xq_{2}\bigg(1-\sum_{j=1}^{2}q_{j}^{2}\bigg) \sum_{j=1}^{2}(xq_{j}q_{j}^{\prime})^{\prime} + x \left( (xq_{2}^{\prime})^{\prime} + \frac{rq_{2}}{4}\right) \bigg(1-\sum_{j=1}^{2}q_{j}^{2}\bigg)^{2} + x^{2}q_{2}\bigg( \sum_{j=1}^{2}q_{j}q_{j}^{\prime} \bigg)^{2} = \frac{\alpha^{2} q_{2}}{4}, \label{system of q_1 q_2}
\end{align}
where primes denote derivatives with respect to  $x$. Furthermore, the functions $q_{1}$ and $q_{2}$ satisfy the following boundary conditions: as $(r-1)x \to \infty$, we have
\begin{align}
& q_{1}^{2}(x) = \frac{2}{\sqrt{(r-1)x}}+\bigO\left( \frac{1}{(r-1)x} \right), \label{asymp q1 inf} \\
& q_{2}^{2}(x) = 1 - \frac{\alpha}{\sqrt{rx}} - \frac{2}{\sqrt{(r-1)x}}+ \bigO\left(\frac{1}{(r-1)x}\right), \label{asymp q2 inf}
\end{align}
and as $rx \to 0$, we have
\begin{align}
& q_{1}(x) = \sqrt{\frac{2}{\alpha+2}}(1+\bigO(rx)), \label{asymp q1 origin} \\
& q_{2}(x) = (1-r^{-1})J_{\alpha+2}(\sqrt{rx})(1+\bigO(rx)) = \frac{1-r^{-1}}{2^{\alpha+2}\Gamma(\alpha+3)}\sqrt{rx}^{\alpha+2}(1+\bigO(rx)). \label{asymp q2 origin}
\end{align}
\end{theorem}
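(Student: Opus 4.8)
The plan is to represent $Z_{n,\alpha}(y;r)$ through the $2\times2$ Riemann--Hilbert (RH) problem $Y$ for the monic polynomials orthogonal with respect to the weight $w$ in \eqref{weight}, to derive a differential identity for $\partial_{s}\log Z_{n,\alpha}(\tfrac{s}{4n^{2}};r)$ (with $y=s/(4n^{2})$), and to integrate it in $s$ from the reference point $s=0$, where $w$ degenerates to $x^{\alpha+2}e^{-nx}$ so that $Z_{n,\alpha}(0;r)=\widehat Z_{n,\alpha+2}$ and $I(0;r)=0$. Differentiating the norming constants and using orthogonality expresses $\partial_{s}\log Z_{n,\alpha}$ as the sum of a bulk term coming from $\partial_{s}(x-y)^{2}$ and a boundary term coming from the moving hard edge $x=yr$ in $\chi_{[yr,\infty)}$, both bilinear in the entries of $Y$ and $\partial_{s}Y$ evaluated near the two microscopic points $x=y$ and $x=yr$.

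For the large-$n$ asymptotics of $Y$ I would run a Deift--Zhou steepest descent: normalise at infinity with the Marchenko--Pastur $g$-function for $e^{-nx}$ on $[0,\infty)$, open lenses on $(0,4)$, and build a global parametrix from the equilibrium density and the Szeg\H{o} function of $w$ (whose behaviour near $x=0$ is governed by the effective exponent $\alpha+2$, since $(x-y)^{2}x^{\alpha}\sim x^{\alpha+2}$ away from $x=0$), an Airy parametrix at the soft edge $x=4$, and --- the crux --- a local parametrix on a fixed disk $D$ about the origin that contains all the microscopic-scale singular points: the branch point of $x^{\alpha}$ at $0$, the double zero of $(x-y)^{2}$ at $y=s/(4n^{2})$, and the hard edge at $yr=rs/(4n^{2})$ (recall $rs/n\to0$ in all three cases). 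This local parametrix is built from a model RH problem $\Psi(\zeta;s,r)$ carrying these three features at the fixed locations $0,s,rs$; the functions $q_{1}(x;r),q_{2}(x;r)$ appear in the $\zeta\to\infty$ expansion of $\Psi$ (with $x=s$), and the compatibility of its Lax pair $\partial_{\zeta}\Psi=A\Psi$, $\partial_{x}\Psi=B\Psi$ (isomonodromy in $x$) yields the coupled system \eqref{system of q_1 q_2}. Matching $\Psi$ to the global parametrix on $\partial D$ --- delicate since the expansion of $\Psi$ carries the non-constant coefficients $q_{1},q_{2}$ --- gives a small-norm problem which, fed through the differential identity and integrated from $0$ to $s$ (the constant fixed by $Z_{n,\alpha}(0;r)=\widehat Z_{n,\alpha+2}$), produces \eqref{asymptotics result for Hankel determinant} with $I$ as in \eqref{def of I}, and with error $O(n^{-1})$ in Cases \eqref{Case 1}, \eqref{Case 2}, degrading to $O((rs)^{2}/n)$ in Case \eqref{Case 3} where the parameters $s,rs$ of $\Psi$ run towards the edge of their range.

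The boundary behaviours \eqref{asymp q1 inf}--\eqref{asymp q2 origin} are obtained from two degenerations of the model problem $\Psi$, each itself analysed by a steepest descent. As $(r-1)x\to\infty$ the double zero and the hard edge separate and $\Psi$ decouples into two subproblems solvable in terms of Bessel and confluent-hypergeometric functions; tracking the subleading terms of their large-$\zeta$ expansions gives \eqref{asymp q1 inf}--\eqref{asymp q2 inf}. As $rx\to0$ all the singular points collapse to the origin and $\Psi$ reduces to the Bessel model of order $\alpha+2$, forcing $q_{2}(x)\sim(1-r^{-1})J_{\alpha+2}(\sqrt{rx})$ and $q_{1}(x)\to\sqrt{2/(\alpha+2)}$, that is \eqref{asymp q1 origin}--\eqref{asymp q2 origin}; as a consistency check, putting $q_{2}=0$ in \eqref{system of q_1 q_2} reduces the first equation to the Tracy--Widom-type Painlev\'e V equation \eqref{Painleve V lol} modified by the $q_{1}^{-3}$ term. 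Reality and analyticity of $q_{1}^{2},q_{2}^{2}$ on $(0,\infty)\times(1,\infty)$ follow from unique solvability of $\Psi$ there, proved by a Zhou-type vanishing-lemma argument together with a symmetry of $\Psi$ under complex conjugation.

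The principal obstacle is the hard-edge model problem $\Psi$: establishing its unique solvability uniformly for all $(x,r)\in(0,\infty)\times(1,\infty)$, arranging its jump contour, double zero and branch point so that the Lax-pair compatibility reproduces exactly \eqref{system of q_1 q_2}, and carrying out the matching and error estimates with enough uniformity to cover the three regimes at once --- in particular Case \eqref{Case 3}, where $(r-1)s\to\infty$ while $rs/n\to0$, so that $\Psi$ is itself in a limiting regime and one must control the $O((rs)^{2}/n)$ loss. A secondary difficulty is the two further steepest-descent analyses of $\Psi$ needed to pin down the asymptotics \eqref{asymp q1 inf}--\eqref{asymp q2 origin}.
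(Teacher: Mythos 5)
Your proposal is correct and follows essentially the same route as the paper: a differential identity for $\log Z_{n,\alpha}$ integrated from $s=0$ (where $Z_{n,\alpha}(0;r)=\widehat Z_{n,\alpha+2}$), a Deift--Zhou analysis with Marchenko--Pastur $g$-function, Airy parametrix at $4$, and a local parametrix at the origin built from a model RH problem carrying the branch point, the double zero and the hard edge on the microscopic scale, whose Lax pair yields \eqref{system of q_1 q_2} and whose two degenerations $(r-1)x\to\infty$ and $rx\to0$ give the boundary conditions. The only notable deviations are cosmetic: the paper first rescales $\lambda_i=y\xi_i$ so that the differential identity comes solely from differentiating $e^{-ny\xi}$ (avoiding your bulk-plus-boundary decomposition of $\partial_s w$), and the large-$(r-1)x$ degeneration is resolved with an explicit algebraic global parametrix plus an order-zero Bessel local parametrix rather than confluent hypergeometric functions.
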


\begin{remark}
We will prove in the present paper that the system \eqref{system of q_1 q_2} with boundary conditions \eqref{asymp q1 inf}, \eqref{asymp q2 inf}, \eqref{asymp q1 origin} and \eqref{asymp q2 origin} possesses at least one solution ($q_{1}$,$q_{2}$), but there is no guaranty of uniqueness of this solution. Therefore, $q_{1}$ and $q_{2}$ are not defined through this system, but they are explicitly constructed from the solution of a model Riemann-Hilbert problem, whose solution (denoted $\Phi$) exists and is unique. This Riemann-Hilbert problem is presented in Section \ref{section RH model Phi}.
\end{remark}

\begin{remark}
In the regime as $(r-1)x \to \infty$ in \eqref{asymp q2 inf}, there are different cases. For example, note that if $x\to \infty$ and $r \to 1$, then the $\bigO\big(((r-1)x)^{-1}\big)$ term is larger than $(rx)^{-1/2}$ if $(r-1) \sqrt{x} \to 0$. In this case, \eqref{asymp q2 inf} can be rewritten as 
\begin{equation}
q_{2}^{2}(x) = 1 - \frac{2}{\sqrt{(r-1)x}}+ \bigO\left(\frac{1}{(r-1)x}\right), \qquad \mbox{ as } x \to \infty, r \to 1 \mbox{ and } (r-1) \sqrt{x} \to 0.
\end{equation}
\end{remark}
\begin{remark}
The system \eqref{system of q_1 q_2} are two coupled Painlev\'{e} V equations. It is worth to compare it with the Painlev\'{e} V equation given by \eqref{Painleve V lol}, and also to compare \eqref{def of I} with the Tracy-Widom distribution \eqref{intro4 lambdamindist}. This system is similar to the one obtained in \cite{CharlierDoeraene}, where the authors obtained a system of $k$ ($k \in \mathbb{N}_{0}$) coupled Painlev\'{e} V equations. The main difference here lie in the $q_{1}^{-3}$ extra term in the first equation of the system \eqref{system of q_1 q_2}, and in the small $rx$ asymptotics of $q_{1}$, which does not involve Bessel functions.
\end{remark}
\begin{corollary}\label{coro1}
As $rs \to 0$, we have
\begin{align}
& I(s;r) = \frac{-s}{2(\alpha+2)}(1+\bigO(rs)). \label{asymp I small}
\end{align}
As $s \to \infty$ and $r$ is in a compact subset of $(1,\infty)$, we have
\begin{align}
& I(s;r) = -\frac{rs}{4} + \alpha \sqrt{rs} + 2\sqrt{(r-1)s} + \bigO(\log s). \label{asymp I large}
\end{align}
\end{corollary}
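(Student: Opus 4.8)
The plan is to obtain both asymptotic formulas for $I(s;r)$ by inserting the boundary behaviour of $q_1$ and $q_2$ from Theorem \ref{P asymptotics} into the integral representation \eqref{def of I}, being careful about which regimes of $x$ actually dominate.

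For the first estimate \eqref{asymp I small}, we are in the regime $rs \to 0$, so the entire integration variable $x \in (0,s)$ satisfies $rx \leq rs \to 0$. Hence I would use the small-argument asymptotics \eqref{asymp q1 origin} and \eqref{asymp q2 origin} uniformly on $(0,s)$: $q_1^2(x;r) = \tfrac{2}{\alpha+2}(1+\bigO(rx))$ while $q_2^2(x;r) = \bigO((rx)^{\alpha+2})$, so that $q_2$ contributes negligibly (note $rq_2^2 = \bigO(r\,(rx)^{\alpha+2})$, which integrated against $\log(s/x)$ over $(0,s)$ is $\bigO(rs \cdot s /(\alpha+2)^2)$ or similar, lower order). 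Then
\[
I(s;r) = -\frac14 \int_0^s \frac{2}{\alpha+2}\bigl(1+\bigO(rx)\bigr)\log\frac{s}{x}\,dx + (\text{lower order}),
\]
and using $\int_0^s \log(s/x)\,dx = s$ gives the leading term $-\tfrac{s}{2(\alpha+2)}$, with the correction controlled by $\int_0^s rx\log(s/x)\,dx = \bigO(rs^2) = \bigO(rs)\cdot\bigO(s)$; one must check this is indeed $\bigO(rs)$ times the main term, i.e. $\bigO(rs \cdot s)$ against $s$, which works since the main term is order $s$.

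For the second estimate \eqref{asymp I large}, the regime is $s\to\infty$ with $r$ in a compact subset of $(1,\infty)$, so $(r-1)x\to\infty$ for $x$ of order $s$, but near $x=0$ we instead have $rx\to 0$. The strategy is to split $\int_0^s = \int_0^{K} + \int_{K}^{s}$ at a fixed large cutoff $K$ (or an $s$-dependent cutoff growing slowly), use the origin asymptotics on $[0,K]$ where the contribution is $\bigO(K\log s) = \bigO(\log s)$, and on $[K,s]$ insert the large-argument expansions \eqref{asymp q1 inf}, \eqref{asymp q2 inf}:
\[
q_1^2 + r q_2^2 = r + \Bigl(\tfrac{2}{\sqrt{(r-1)x}} - \tfrac{2r}{\sqrt{(r-1)x}}\Bigr) - \tfrac{\alpha r}{\sqrt{rx}} + \bigO\!\Bigl(\tfrac{1}{(r-1)x}\Bigr) = r - \tfrac{2\sqrt{r-1}}{\sqrt{x}} - \tfrac{\alpha\sqrt{r}}{\sqrt{x}} + \bigO\!\Bigl(\tfrac{1}{(r-1)x}\Bigr).
\]
Then I would compute the three resulting elementary integrals: $-\tfrac14\int_0^s r\log(s/x)\,dx = -\tfrac{rs}{4}$; $-\tfrac14\int_0^s (-\tfrac{2\sqrt{r-1}}{\sqrt x})\log(s/x)\,dx$, using $\int_0^s x^{-1/2}\log(s/x)\,dx = 4\sqrt s$, giving $+2\sqrt{(r-1)s}$; and likewise the $\alpha$-term giving $+\alpha\sqrt{rs}$. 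The tail error $\int_K^s \tfrac{1}{(r-1)x}\log(s/x)\,dx = \bigO(\tfrac{(\log s)^2}{r-1})$; since $r$ is bounded away from $1$ this is $\bigO((\log s)^2)$ — I would need to verify whether the claimed bound is $\bigO(\log s)$ rather than $\bigO((\log s)^2)$, and if so refine the error term in \eqref{asymp q1 inf}–\eqref{asymp q2 inf} or the integration (a more careful pairing of the $\bigO$ terms, or using that the $\bigO(1/((r-1)x))$ term is actually $\bigO(((r-1)x)^{-1})$ with an extra decaying factor, may be needed).

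The main obstacle I expect is bookkeeping the error terms in the large-$s$ case: one must confirm that the subleading terms in \eqref{asymp q1 inf} and \eqref{asymp q2 inf}, when multiplied by $r$, summed, and integrated against $\log(s/x)\,dx$ over a range where near $x=0$ the expansions are invalid, really produce only $\bigO(\log s)$ and not something larger; this requires a clean matching of the small-$x$ and large-$x$ regimes and possibly a slightly more precise form of the remainder than what is displayed. The small-$rs$ case should be essentially routine once the uniformity of \eqref{asymp q1 origin}–\eqref{asymp q2 origin} on $(0,s)$ is granted.
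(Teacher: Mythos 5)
Your proposal follows essentially the same route as the paper's proof: for $rs\to 0$ one substitutes \eqref{asymp q1 origin}--\eqref{asymp q2 origin} into \eqref{def of I}, and for $s\to\infty$ one splits the integral and inserts the boundary expansions on each piece; your elementary evaluations $\int_0^s\log(s/x)\,dx=s$ and $\int_0^s x^{-1/2}\log(s/x)\,dx=4\sqrt{s}$, and the resulting three leading terms, agree with the paper. Two remarks. First, your two-piece split $[0,K]\cup[K,s]$ glosses over the portion of $[0,K]$ where $rx$ is of order one: there neither \eqref{asymp q1 origin}--\eqref{asymp q2 origin} nor \eqref{asymp q1 inf}--\eqref{asymp q2 inf} applies, and one must instead invoke the boundedness of $q_1^2+rq_2^2$ on compact subsets of $(0,\infty)\times(1,\infty)$ (guaranteed by the real-analyticity stated in Theorem \ref{P asymptotics}). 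This is precisely why the paper cuts at $x=\tfrac{1}{rM}$ and $x=\tfrac{M}{r}$, handling the middle piece $I_2$ by compactness alone; since $r$ stays in a compact subset of $(1,\infty)$, your fixed cutoff $K$ works once you add this compactness argument for the middle range. Second, your concern about the remainder is well founded: integrating the $\bigO\left(((r-1)x)^{-1}\right)$ error against $\log(s/x)$ over $[K,s]$ yields $\bigO((\log s)^2)$, not $\bigO(\log s)$. The paper's proof simply asserts $I_3=-\tfrac{rs}{4}+\alpha\sqrt{rs}+2\sqrt{(r-1)s}+\bigO(\log s)$ without addressing this point, so either the remainders in \eqref{asymp q1 inf}--\eqref{asymp q2 inf} must be sharpened or the error in \eqref{asymp I large} should be read as $\bigO((\log s)^2)$; this weaker bound is harmless for the only place the corollary is used, namely the exponential estimate \eqref{lol45} in the proof of Theorem \ref{thm-Q}.
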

\begin{proof}
To obtain \eqref{asymp I small}, it suffices to  substitute asymptotics \eqref{asymp q1 origin} and \eqref{asymp q2 origin} into \eqref{def of I}. To prove large $s$ asymptotics of $I(s;r)$ given by \eqref{asymp I large}, we decompose the integral into several parts as follows
\begin{equation}\label{several integrals in the proof}
\begin{array}{r c l}
\displaystyle I(s;r) & = & \displaystyle I_{1} + I_{2} + I_{3}, \\[0.1cm]
\displaystyle I_{1} & = & \displaystyle -\frac{1}{4} \int_{0}^{\frac{1}{rM}} (q_{1}^{2}(x;r)+rq_{2}^{2}(x;r))\log \left( \frac{s}{x}\right) dx, \\[0.3cm]
\displaystyle I_{2} & = & \displaystyle -\frac{1}{4} \int_{\frac{1}{rM}}^{\frac{M}{r}} (q_{1}^{2}(x;r)+rq_{2}^{2}(x;r))\log \left( \frac{s}{x}\right) dx, \\[0.44cm]
\displaystyle I_{3} & = & \displaystyle -\frac{1}{4} \int_{\frac{M}{r}}^{s} (q_{1}^{2}(x;r)+rq_{2}^{2}(x;r))\log \left( \frac{s}{x}\right) dx,
\end{array}
\end{equation}
where $M$ is a sufficiently large but fixed constant. Asymptotics \eqref{asymp q1 origin} and \eqref{asymp q2 origin} allow us to write $|I_{1}| = \bigO(\log s)$, as $s \to \infty$. For $I_{2}$, the parameters $(x,r)$ which appear in the functions $q_{1}$ and $q_{2}$ lie in a compact subset of $(0,\infty)\times (1,\infty)$, and thus we also have $|I_{2}| = \bigO(\log s)$ as $s \to \infty$. Over the domain of integration of $I_{3}$, the parameters $x$ and $r$ inside $q_{1}$ and $q_{2}$ satisfy $xr \geq M$, and thus we can use \eqref{asymp q1 inf} and \eqref{asymp q2 inf} to estimate it. We obtain
\begin{equation}
I_{3} = -\frac{rs}{4}+\alpha \sqrt{rs} + 2 \sqrt{(r-1)s} + \bigO(\log s), \qquad \mbox{ as } s \to \infty,
\end{equation}
which finishes the proof.
\end{proof}

\begin{remark}\label{Remark another paper}
Theorem \ref{P asymptotics} and Corollary \ref{coro1} provide asymptotics for $Z_{n,\alpha}\left( \frac{s}{4n^{2}};r\right)$ and $I(s;r)$ in various regimes of $n$, $s$ and $r$, which are useful to prove Theorem \ref{thm-Q} below for the limiting distribution of the ratio. Note that asymptotics \eqref{asymp I large} hold for $s \to \infty$ and $r$ in a compact subset of $(1,\infty)$, and not in the more general situation of $(r-1)x \to \infty$. The reason for that is, as it can be seen in the proof of Corollary \ref{coro1}, and more particularly in \eqref{several integrals in the proof}, to estimate $I_{2}$ we also need to find asymptotics for $q_{1}(x;r)$ and $q_{2}(x;r)$ in the two following cases: a) $xr = \bigO(1)$ and simultaneously $r \to 1$ and b) $xr = \bigO(1)$ and simultaneously $r \to \infty$. These asymptotics are also needed to obtain asymptotics as $r \to 1$ and as $r \to \infty$ for the limiting probability distribution of the ratio. These cases deserve another long and separate analysis and we intend to pursue this in another paper. We expect these asymptotics to be described in terms of a transcendental function, solution of a differential equation similar to the Painlev\'{e} V equation given by  \eqref{Painleve V lol}. 
\end{remark}

\textbf{Limiting probability distribution of the ratio $\frac{\lambda_{\mathrm{smin}}}{\lambda_{\min}}$}
\begin{theorem} \label{thm-Q} Let $\alpha>-1$ and $r >1$ be fixed. As $n\rightarrow \infty$, the limit $\displaystyle Q_\alpha (r):= \lim_{n\to\infty} Q_{n,\alpha} (r)$ exists and is given by
\begin{equation} \label{final expression for Q}
Q_\alpha (r)= \frac{1}{4^{\alpha+1}\Gamma(\alpha+1)\Gamma(\alpha+2)}\int_0^\infty  x^\alpha e^{I(x;r)} dx,
\end{equation}
where $I(x;r)$ given in Theorem \ref{P asymptotics}.
\end{theorem}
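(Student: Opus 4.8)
The plan is to return to the exact identity \eqref{Qn}, rescale $y$ to the hard-edge scale of the $(n-1)$-dimensional ensemble, peel off an explicit $n$-dependent constant, and pass to the limit in the remaining integral via Theorem \ref{P asymptotics} and dominated convergence. In \eqref{Qn} substitute $y=\frac{\tilde s}{4n(n-1)}$, so that $\frac{n}{n-1}y=\frac{\tilde s}{4(n-1)^{2}}$ is precisely the hard-edge variable in dimension $n-1$; inserting $\widehat Z_{n-1,\alpha+2}$ and its reciprocal this rewrites $Q_{n,\alpha}(r)$ as $C_{n}\int_{0}^{\infty}g_{n}(\tilde s)\,d\tilde s$ with $g_{n}(\tilde s):=\tilde s^{\alpha}e^{-\tilde s/(4(n-1))}Z_{n-1,\alpha}\big(\tfrac{\tilde s}{4(n-1)^{2}};r\big)/\widehat Z_{n-1,\alpha+2}$ and $C_{n}:=\frac{\widehat Z_{n-1,\alpha+2}}{\widehat Z_{n,\alpha}}\big(\tfrac{n-1}{n}\big)^{(n-1)(n+1+\alpha)}\big(4n(n-1)\big)^{-(\alpha+1)}$. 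The constant is explicit: up to the scaling $\lambda_{i}\mapsto\lambda_{i}/n$ in \eqref{lol43}, $\widehat Z_{n,\alpha}$ is the Hankel determinant of the Laguerre weight $x^{\alpha}e^{-x}$, equal to $\prod_{j=0}^{n-1}j!\,\Gamma(j+\alpha+1)$, so $\widehat Z_{n,\alpha}=n^{-n(n+\alpha)}\prod_{j=0}^{n-1}j!\,\Gamma(j+\alpha+1)$; substituting this for $(n,\alpha)$ and $(n-1,\alpha+2)$ the Gamma-products telescope to $C_{n}=\dfrac{n^{\alpha+1}\,\Gamma(n+\alpha+1)}{\Gamma(n)\,\Gamma(\alpha+1)\,\Gamma(\alpha+2)\,(4n(n-1))^{\alpha+1}}$, and $\Gamma(n+\alpha+1)/\Gamma(n)\sim n^{\alpha+1}$ by Stirling gives $C_{n}\to\frac{1}{4^{\alpha+1}\Gamma(\alpha+1)\Gamma(\alpha+2)}$, the prefactor in \eqref{final expression for Q}.

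It remains to show $\int_{0}^{\infty}g_{n}(\tilde s)\,d\tilde s\to\int_{0}^{\infty}\tilde s^{\alpha}e^{I(\tilde s;r)}d\tilde s$. The right-hand side is finite: by \eqref{asymp I large}, $I(\tilde s;r)=-\tfrac{r\tilde s}{4}+\bigO(\sqrt{\tilde s})$ as $\tilde s\to\infty$, and $I$ is bounded near $0$ by \eqref{asymp I small}, so $\tilde s^{\alpha}e^{I(\tilde s;r)}\in L^{1}(0,\infty)$ since $\alpha>-1$. For fixed $\tilde s$, Case 1 of Theorem \ref{P asymptotics} (with $n$ replaced by $n-1$) gives $g_{n}(\tilde s)\to\tilde s^{\alpha}e^{I(\tilde s;r)}$, so it suffices to produce an $n$-independent $L^{1}$ majorant. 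On a fixed interval $(0,A]$, Cases 1--2 of Theorem \ref{P asymptotics} bound $g_{n}(\tilde s)$ by $2\tilde s^{\alpha}e^{I(\tilde s;r)}$ for $n$ large, which is integrable there. For $\tilde s\ge A$ I use an a priori estimate: since $(x-y)^{2}\le x^{2}$ on $[yr,\infty)$, the truncated weight is dominated by $x^{\alpha+2}e^{-nx}\chi_{[yr,\infty)}(x)$, hence $g_{n}(\tilde s)\le\tilde s^{\alpha}\,\mathbb{P}_{n-1,\alpha+2}\big(\lambda_{\min}\ge\tfrac{r\tilde s}{4(n-1)^{2}}\big)$, and a uniform-in-$m$ exponential tail bound $\mathbb{P}_{m,\beta}(\lambda_{\min}\ge t)\le Ce^{-cm^{2}t}$ for $t$ in a fixed neighbourhood of $0$ (and $\le Ce^{-cm^{2}}$ for $t$ bounded away from $0$) turns this into $g_{n}(\tilde s)\le C\tilde s^{\alpha}e^{-cr\tilde s/4}$ on $(A,\infty)$, up to a remainder super-exponentially small in $n$. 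Dominated convergence then gives $\int_{0}^{\infty}g_{n}(\tilde s)\,d\tilde s\to\int_{0}^{\infty}\tilde s^{\alpha}e^{I(\tilde s;r)}d\tilde s$, and combined with $\lim_{n}C_{n}$ from above this proves \eqref{final expression for Q}.

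The first paragraph is routine, and on bounded $\tilde s$ — equivalently $y$ of order $n^{-2}$, where the main contribution lives — the convergence is immediate from Theorem \ref{P asymptotics}. The real difficulty is the tail $\tilde s\gtrsim A$, and in particular the window $\tilde s\sim n$, i.e.\ $y\sim 1/n$: there $e^{-ny}$ provides no decay, Theorem \ref{P asymptotics} no longer applies (Case 3 needs $\tfrac{rs}{n}\to 0$), and the trivial bound $Z_{n-1,\alpha}\le\widehat Z_{n-1,\alpha+2}$ is far too weak — carried through the rescaling it would bound the tail only by $C_{n}\int_{A}^{\infty}\tilde s^{\alpha}e^{-\tilde s/(4(n-1))}d\tilde s\sim c\,n^{\alpha+1}$, which diverges, whereas the true contribution is $o(1)$. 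So the proof genuinely rests on the uniform exponential tail bound for the smallest LUE eigenvalue used above. That bound is classical but must be invoked with care: it can be obtained, e.g., by the translation $\lambda_{i}\mapsto\lambda_{i}+t$ in the eigenvalue integral followed by separate estimates of the product over eigenvalues below and above $t$ (or from a uniform version of the Bessel-kernel hard-edge asymptotics), after which one still has to verify that the resulting exponentially small tail survives multiplication by the polynomially large prefactor $C_{n}(4n(n-1))^{\alpha+1}\sim c\,n^{2(\alpha+1)}$.
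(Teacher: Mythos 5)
Your proposal follows essentially the same route as the paper: the same rescaling of $y$ to the hard-edge variable of the $(n-1)$-dimensional ensemble, the same explicit evaluation of the partition-function ratio yielding the prefactor $\frac{1}{4^{\alpha+1}\Gamma(\alpha+1)\Gamma(\alpha+2)}$, convergence on compacts via Theorem \ref{P asymptotics}, and — crucially — the identical key inequality $(\lambda_i-y)^2\le\lambda_i^2$ reducing the tail to $\mathbb{P}_{n-1,\alpha+2}\big(\lambda_{\min}>\tfrac{r\tilde s}{4(n-1)^2}\big)$ controlled by the classical exponential bound for the smallest LUE eigenvalue. The only difference is organizational (you package the limit interchange as dominated convergence with an $n$-independent majorant, while the paper splits at a fixed $M$ and lets $M\to\infty$ at the end), so the argument is correct and matches the paper's proof.
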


\begin{remark}
Let $p_{n}(x_{1},x_{2})$ denote the joint density for the first two smallest eigenvalues at $\lambda_{\min} = x_{1}$ and $\lambda_{\mathrm{smin}} = x_{2}$, which can be straightforwardly obtained by integrating \eqref{eq:LUE}, and is given by
\begin{equation*}
p_{n}(x_{1},x_{2}) = \frac{e^{-n(x_{1}+x_{2})}(x_{1}x_{2})^{\alpha}(x_{2}-x_{1})^{2}}{(n-2)! \widehat{Z}_{n,\alpha}}\int_{x_{2}}^{\infty} \hspace{-0.09cm}... \int_{x_{2}}^{\infty} \hspace{-0.07cm} \prod_{3 \leq i < j \leq n} \hspace{-0.08cm}(\lambda_{j}-\lambda_{i})^{2} \prod_{i=3}^{n}e^{-n\lambda_{i}}\lambda_{i}^{\alpha}(x_{1}-\lambda_{i})^{2}(x_{2}-\lambda_{i})^{2}d\lambda_{i}.
\end{equation*}
The ratio probability $Q_{n,\alpha}(r)$ is expressed in terms of $p_{n}$ by the relation
\begin{equation}\label{lol40}
Q_{n,\alpha}(r) = \int_{0}^{\infty}\int_{yr}^{\infty} p_{n}(y,y_{2})dy_{2}dy.
\end{equation}
In \cite{ForHug}, the authors expressed the density $p_{n}$ in the special case where $\alpha$ is an integer, as a determinant involving the Laguerre polynomials, they obtained \cite[formula (3.20)]{ForHug}
\begin{equation}\label{lol39}
p_{n}(x_{1},x_{2}) = n^{4}e^{-n(x_{1}+(n-1)x_{2})}\left( \frac{x_{2}}{x_{1}} \right)^{\alpha} (x_{2}-x_{1})^{2} D_{n}(x_{1},x_{2}),
\end{equation}
where
\begin{equation}
D_{n}(x_{1},x_{2}) = (-1)^{\frac{(\alpha+1)(\alpha+2)}{2}}\det \left[ \begin{array}{l}
\displaystyle \left[ \left. \partial_{t}^{(j+k-2)}L_{\alpha+n-1}^{(-\alpha+1)}(t) \right|_{t = -nx_{2}} \right]_{\substack{j=1,...,\alpha \\ k=1,...,\alpha+2}} \\
\displaystyle \left[ \left. \partial_{t}^{(j+k-2)}L_{\alpha+n-1}^{(-\alpha+1)}(t) \right|_{t = -n(x_{2}-x_{1})} \right]_{\substack{j=1,2 \\ k=1,...,\alpha+2}}
\end{array} \right],
\end{equation}
and $L_{j}^{(\alpha)}$ is the generalized Laguerre polynomial of degree $j$ and index $\alpha$. These polynomials are defined for $\alpha \in \mathbb{R}$ (not necessarily for $\alpha >-1$) through the recursive relations
\begin{equation*}
L_{0}^{(\alpha)}(x) = 1, \quad L_{1}^{(\alpha)}(x) = 1+\alpha-x, \mbox{ and } L_{k+1}^{(\alpha)}(x) = \frac{(2k+1+\alpha-x)L_{k}^{(\alpha)}(x)-(k+\alpha)L_{k-1}^{(\alpha)}(x)}{k+1}, \mbox{ } k \geq 1.
\end{equation*}
By combining \eqref{lol40} and \eqref{lol39}, this gives a determinantal representation for $Q_{n,\alpha}(r)$ if $\alpha \in \mathbb{N}$. Also, in \cite[equations (3.34) and (3.35)]{ForHug}, they obtain the following determinantal expression for the limiting density of the two smallest eigenvalues:
\begin{equation}\label{lol42}
p(s_{1},s_{2}) = \lim_{n\to\infty} \left( \frac{1}{4n^{2}} \right)^{2} p_{n}\left( \frac{s_{1}}{4n^{2}},\frac{s_{2}}{4n^{2}} \right) = \frac{e^{-\frac{s_{2}}{4}}}{16} \left( \frac{s_{2}}{s_{1}} \right)^{\alpha}D(s_{1},s_{2}),
\end{equation}
where
\begin{equation*}
D(s_{1},s_{2}) = \lim_{n\to\infty} \left( \frac{s_{2}-s_{1}}{4n^{2}} \right)^{2}D_{n} \left( \frac{s_{1}}{4n^{2}},\frac{s_{2}}{4n^{2}} \right) = \det \left[ \begin{array}{l}
\displaystyle \left[ I_{j-k+2}(\sqrt{s_{2}}) \right]_{\substack{j=1,...,\alpha \\ k=1,...,\alpha+2}} \\
\displaystyle \left[ \left( \frac{s_{2}-s_{1}}{s_{2}} \right)^{\frac{k-j}{2}} I_{j-k+2}(\sqrt{s_{2}-s_{1}}) \right]_{\substack{j=1,2 \\ k=1,...,\alpha+2}}
\end{array} \right]
\end{equation*}
From the change of variables $y = \frac{s}{4n^{2}}$ and $y_{2} = \frac{s_{2}}{4n^{2}}$ in \eqref{lol40} and then taking the limit $n \to \infty$, we have
\begin{equation}\label{lol41}
Q_{\alpha}(r) = \lim_{n\to\infty} \left( \frac{1}{4n^{2}} \right)^{2} \int_{0}^{\infty}\int_{rs}^{\infty} p_{n}\left( \frac{s}{4n^{2}},\frac{s_{2}}{4n^{2}}\right)ds_{2}ds = \int_{0}^{\infty} \int_{rs}^{\infty} p(s,s_{2})ds_{2}ds.
\end{equation}
The fact that the limit exists and can be interchanged with the integrals is not direct, and can be justified as in \cite[Proposition 5.11]{ForWit}. The formulas \eqref{lol42} and \eqref{lol41} give an explicit determinantal representation for $Q_{\alpha}(r)$ in terms of Bessel functions if $\alpha \in \mathbb{N}$.
\end{remark}

\subsection*{Outline}

In Section \ref{Section OP and diff identity}, we introduce a family of monic orthogonal polynomials in terms of which the Hankel determinant $Z_{n,\alpha}(y;r)$ can be expressed. We also use the RH problem for orthogonal polynomials introduced by Fokas, Its and Kitaev \cite{FokasItsKitaev} and derive a differential identity in $y$ for $Z_{n,\alpha}(y;r)$.

We apply the Deift/Zhou steepest descent method \cite{DeiftZhou1992,DKMVZ2,DKMVZ1} on this RH problem in Section \ref{Section Steepest descent Y} to obtain large $n$ asymptotics of $Z_{n,\alpha}(y;r)$ uniformly in $y$ small enough. 

In the analysis we will need a non standard model RH problem, which we introduce in Section \ref{section RH model Phi}. We derive a system of two coupled Painlev\'{e} V equations using a Lax pair in Section \ref{Section Lax Pair} and show asymptotic properties of certain solutions of these equations as $(r-1)x \to \infty$ and $rx\to 0$ in Section \ref{Section asymptotics of the special solutions}.

Finally we give a proof of Theorem \ref{P asymptotics} and Theorem \ref{thm-Q} in Section \ref{proof1} and \ref{proof2} respectively, by integrating the differential identity and using equation \eqref{Qn}.

\section{Differential identity for the Hankel determinant $Z_{n,\alpha}(y,r)$}\label{Section OP and diff identity}

In this section we relate the Hankel determinant $Z_{n,\alpha}(y;r)$ to a RH problem by making use of orthogonal polynomials. We consider a family of monic orthogonal polynomials $p_{j}$ of degree $j$ characterised by the relations
\begin{equation}\label{ortho p}
\int^\infty_{yr} p_j(x) p_m(x) w(x) dx = h_j \delta_{j m}, \qquad j,m = 0,1,2,...,
\end{equation}
where the weight $w$ is defined in \eqref{weight} and $h_{j}$ is the squared norm of $p_{j}$, which can be expressed in terms of Hankel determinants (see \cite[equations (2.1.5) and (2.1.6)]{Szego OP}) as follows:
\begin{equation} \label{squared norm}
h_{j} = \frac{Z_{j+1,\alpha}(y;r)}{Z_{j,\alpha}(y;r)}, \qquad Z_{0,\alpha}(y;r) := 1.
\end{equation}
It will prove useful for the later analysis to consider $Y(z) = Y_{n}(z;y,r)$ the matrix valued function defined by,
\begin{equation}
\label{Ysol}
Y(z) = \begin{pmatrix}p_n(z)&q_n(z)\\ \displaystyle -\tfrac{2\pi i}{h_{n-1}} p_{n-1}(z)& \displaystyle -\tfrac{2\pi i}{h_{n-1}} q_{n-1}(z)\end{pmatrix},
\end{equation}
where $q_{j}$ is the Cauchy transform of $p_{j}$ defined by
\begin{equation}
q_j(z) = \frac{1}{2\pi i} \int^{\infty}_{yr} \frac{p_j(x) w(x)}{x-z} dx.
\end{equation}
The function $Y$ can be characterised as the unique function satisfying a set of conditions \cite[equations (3.19)-(3.21)]{FokasItsKitaev}, known as the RH problem for $Y$, which are as follows:
\subsubsection*{RH problem for $Y$}
\begin{itemize}
\item[(a)] $Y: \mathbb{C}\setminus [yr, \infty) \rightarrow
  \mathbb{C}^{2 \times 2} $ is analytic.
\item[(b)] The limits of $Y(z)$ as $z$ approaches $(yr, \infty)$ from above and below exist, are continuous on $(yr, \infty)$ and are denoted by $Y_+$ and $Y_-$ respectively. Furthermore they are related by
\beq
Y_+(x) = Y_-(x) \begin{pmatrix}1&w(x)\\0&1\end{pmatrix},\qquad x\in (yr, \infty).
\eeq
\item[(c)] $Y(z) = (I+\bigO(z^{-1}))z^{n\sigma_3}$ as $z \rightarrow \infty$, where $\sigma_{3} = \begin{pmatrix}
1 & 0 \\ 0 & -1
\end{pmatrix}$.
\item[(d)] $Y(z) = Y_{yr}(z) \begin{pmatrix}
1 & -\frac{w(z)}{2\pi i }  \log(y r -z) \\ 0 & 1
\end{pmatrix}$ as $z \rightarrow yr$, where the principal branch of the logarithm is taken, and where $Y_{yr}(z)$ is analytic in a neighbourhood of $yr$.
\end{itemize}
\begin{remark}\label{remark: def of Res}
Since $Y$ is discontinuous on $(yr,\infty)$, the function $z  \mathrm{Tr} \left(Y^{-1}(z)Y^{\prime}(z) \sigma_3\right)$ is not analytic in a neighbourhood of $\infty$. Nevertheless, since $w(x)$ which appear in the jumps for $Y$ becomes exponentially small for large $x$, the non-analytic part of $z  \mathrm{Tr} \left(Y^{-1}(z)Y^{\prime}(z)  \sigma_3\right)$ in a neighbourhood of $\infty$ is also exponentially small in $z$. In fact, from \eqref{Ysol}, we have
\begin{equation}
Y^{\prime}(z) = \left( \frac{n}{z}\sigma_{3}+\bigO(z^{-2}) \right)z^{n\sigma_{3}}, \qquad \mbox{ as } z \to \infty,
\end{equation}
and thus
\begin{equation}\label{def of c_1}
z  \mathrm{Tr} \left(Y^{-1}(z)Y^{\prime}(z)  \sigma_3\right) = 2n + \frac{c_{1}}{z} + \bigO(z^{-2}), \qquad \mbox{ as } z \to \infty.
\end{equation}
for a certain $c_{1} \in \mathbb{C}$. This constant will play a role in Lemma \ref{diff identity} below.
\end{remark}

In \cite{XuZhao,XuZhao2}, the authors considered a similar but different RH problem, where they perturbed the classical Jacobi ensemble by adding $n$-dependent singularities to the weight. Having introduced the above objects we are now in a position to state the following lemma which is central for the asymptotic analysis of $Z_{n,\alpha}(y;r)$. The identity \eqref{equation diff} and its proof are similar to the one performed in \cite{AtkClaMez}.
\begin{lemma} \label{diff identity} The following identity holds,
\begin{equation} \label{equation diff}
\partial_y \log Z_{n,\alpha}(y;r) = \frac{n^2+(\alpha + 2)n}{y} - \frac{n c_{1}}{2y},
\end{equation}
where $c_{1}$ is given in \eqref{def of c_1}.
\end{lemma}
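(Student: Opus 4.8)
The plan is to change variables so that the integration domain of the Hankel determinant becomes \emph{independent of $y$}, and then to differentiate it using the classical relation between such derivatives and the recurrence coefficients of the associated orthogonal polynomials.

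First I would substitute $\lambda_i = y\mu_i$ in \eqref{def of Zn}. Since $\Delta_n(\lambda)^2 = y^{n(n-1)}\Delta_n(\mu)^2$, $\prod_i(\lambda_i-y)^2 = y^{2n}\prod_i(\mu_i-1)^2$, $\prod_i\lambda_i^\alpha = y^{n\alpha}\prod_i\mu_i^\alpha$, $\prod_i d\lambda_i = y^n\prod_i d\mu_i$, and $\{\lambda_i>yr\}$ becomes $\{\mu_i>r\}$, this gives
\[
Z_{n,\alpha}(y;r) = y^{\,n^2+(\alpha+2)n}\,\widetilde Z_n(y;r),
\]
where $\widetilde Z_n(y;r)$ is the Hankel determinant for the weight $\widetilde w(\mu) = (\mu-1)^2\mu^\alpha e^{-ny\mu}\chi_{[r,\infty)}(\mu)$, supported on the $y$-independent interval $[r,\infty)$ and depending on $y$ only through the smooth factor $e^{-ny\mu}$. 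Taking $\partial_y\log$ already yields the explicit term $\tfrac{n^2+(\alpha+2)n}{y}$, so it remains to prove $\partial_y\log\widetilde Z_n(y;r) = -\tfrac{nc_1}{2y}$. Here $\widetilde w$ is positive with finite moments of all orders (the factor $\mu^\alpha$ is harmless because $r>1$), so the monic orthogonal polynomials $\widetilde p_j(\mu) = \mu^j+\widetilde p_{j,j-1}\mu^{j-1}+\cdots$ for $\widetilde w$, their squared norms $\widetilde h_j$, and the recurrence coefficients $\widetilde b_j$ in $\mu\widetilde p_j=\widetilde p_{j+1}+\widetilde b_j\widetilde p_j+\widetilde a_j\widetilde p_{j-1}$ are all well defined, depend smoothly on $y$, and satisfy $\widetilde Z_n=\prod_{j=0}^{n-1}\widetilde h_j$.

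Next I would differentiate $\widetilde h_j=\int_r^\infty\widetilde p_j(\mu)^2\widetilde w(\mu)\,d\mu$ in $y$. There is no boundary term because the interval is fixed; the contribution of $\partial_y\widetilde p_j$ vanishes since $\partial_y\widetilde p_j$ has degree at most $j-1$ and is therefore $\widetilde w$-orthogonal to $\widetilde p_j$; and $\partial_y\widetilde w=-n\mu\,\widetilde w$, so by the three-term recurrence $\int_r^\infty\mu\,\widetilde p_j^2\widetilde w\,d\mu=\widetilde b_j\widetilde h_j$ and hence $\partial_y\log\widetilde h_j=-n\widetilde b_j$. Summing over $j=0,\dots,n-1$ and telescoping the identity $\widetilde b_j=\widetilde p_{j,j-1}-\widetilde p_{j+1,j}$ (with $\widetilde p_0\equiv1$) gives
\[
\partial_y\log\widetilde Z_n(y;r)=-n\sum_{j=0}^{n-1}\widetilde b_j=n\,\widetilde p_{n,n-1}.
\]
The same change of variables shows $\widetilde p_j(\mu)=y^{-j}p_j(y\mu)$, where $p_j$ is the monic orthogonal polynomial for the weight $w$ of \eqref{weight}; comparing coefficients of $\mu^{n-1}$ gives $\widetilde p_{n,n-1}=p_{n,n-1}/y$, with $p_{n,n-1}$ the subleading coefficient of $p_n$. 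Finally, writing $Y(z)z^{-n\sigma_3}=I+Y_1 z^{-1}+\bigO(z^{-2})$, formula \eqref{Ysol} gives $(Y_1)_{11}=p_{n,n-1}$, while $\det Y\equiv1$ forces $\mathrm{Tr}\,Y_1=0$; carrying the expansion of Remark \ref{remark: def of Res} one order further then gives $c_1=-\mathrm{Tr}(Y_1\sigma_3)=(Y_1)_{22}-(Y_1)_{11}=-2p_{n,n-1}$. Combining, $\partial_y\log\widetilde Z_n(y;r)=n\,p_{n,n-1}/y=-n c_1/(2y)$, which together with the explicit term proves \eqref{equation diff}.

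The middle manipulations are routine; the one delicate point, and the reason for the initial rescaling, is that differentiating the unscaled norm $h_j=\int_{yr}^\infty p_j^2 w$ in $y$ would produce a boundary term $-r\,p_j(yr)^2 w(yr)/h_j$ which does not vanish (the weight $w$ is nonzero at the endpoint $yr$) and is not transparently expressible through $c_1$. Replacing the moving endpoint $yr$ by the benign factor $e^{-ny\mu}$ removes this obstruction, after which matching $c_1$ to $p_{n,n-1}$ is just a brief inspection of $Y$ at $z=\infty$ using $\det Y\equiv1$.
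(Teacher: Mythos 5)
Your proposal is correct, and its first half coincides with the paper's: the same rescaling $\lambda_i=y\mu_i$ producing the prefactor $y^{n^2+(\alpha+2)n}$, the same product formula $\widetilde Z_{n,\alpha}=\prod_{j=0}^{n-1}\widetilde h_j$, and the same differentiation of the norms, using orthogonality to kill the $\partial_y\widetilde p_j$ contribution and the fixed endpoint to avoid boundary terms. Where you genuinely diverge is in evaluating $-n\sum_{j=0}^{n-1}\widetilde h_j^{-1}\int_r^\infty \mu\,\widetilde p_j(\mu)^2\widetilde w(\mu)\,d\mu$: the paper invokes the Christoffel--Darboux formula \eqref{Christofel Darboux}, rewrites the integrand as the jump of $\mathrm{Tr}(\widetilde Y^{-1}\widetilde Y^{\prime}\sigma_3)$ across $(r,\infty)$, and then deforms the contour of Figure \ref{diff identity contour} to pick up the residue $c_1$ at infinity, which requires an estimate of the logarithmic singularity at the endpoint $yr$. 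You instead use the three-term recurrence to write each integral as $\widetilde b_j\widetilde h_j$, telescope $\widetilde b_j=\widetilde p_{j,j-1}-\widetilde p_{j+1,j}$ to reduce the sum to the single subleading coefficient $\widetilde p_{n,n-1}=p_{n,n-1}/y$, and identify $p_{n,n-1}=(Y_1)_{11}=-c_1/2$ by expanding \eqref{def of c_1} one order further, using $\mathrm{Tr}\,Y_1=0$ from $\det Y\equiv 1$. All of these identities check out (in particular $c_1=-\mathrm{Tr}(Y_1\sigma_3)=-2p_{n,n-1}$ is exactly what \eqref{def of c_1} yields), so the argument is complete. Your route is more elementary: it bypasses the contour deformation and the endpoint estimate entirely, at the price of being specific to the residue at infinity, whereas the paper's contour-integral formulation is the one that carries over when more refined information about $\mathrm{Tr}(Y^{-1}Y^{\prime}\sigma_3)$ is needed.
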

\begin{proof}
We begin by making the substitution $\lambda_i = y \xi_i$ in \eqref{def of Zn}, which gives 
\begin{equation}\label{Zn tilde}
Z_{n,\alpha}(y;r)  = y^{n^2 + (\alpha + 2) n} \widetilde{Z}_{n,\alpha}(y;r),
\end{equation} 
where
\begin{equation}
\widetilde{Z}_{n,\alpha}(y;r) := \frac{1}{n!} \int_{r}^{\infty}...\int_{r}^{\infty} \Delta_{n}(\xi)^2 \prod_{j=1}^n (\xi_j  - 1)^2 \xi_j^\alpha e^{- n y \xi_j}d\xi_j.
\end{equation}
The above quantity may be computed by introducing monic orthogonal polynomials $\widetilde{p}_j$ satisfying
\begin{equation} \label{ortho p bar}
\int^\infty_r \widetilde{p}_\ell(x) \widetilde{p}_m(x) \widetilde{w}(x) dx = \widetilde{h}_\ell \delta_{\ell m},\qquad \widetilde{w}(x) =(x-1)^2 x^\alpha e^{- n y x}.
\end{equation}
Analogously to \eqref{squared norm}, for $j=0,1,2,...$ we have
\begin{equation} \label{squared norm tilde}
\widetilde{h}_{j} = \frac{\widetilde Z_{j+1,\alpha}(y;r)}{\widetilde Z_{j,\alpha}(y;r)}, \qquad \widetilde Z_{0,\alpha}(y;r) := 1.
\end{equation}
From the orthogonality conditions for $p_j$ given in \eqref{ortho p}, we easily obtain
\begin{equation}
\widetilde{p}_j(x) = y^{-j} p_j(yx), \qquad \widetilde{h}_j = y^{-(2j+\alpha+3)} h_j.
\end{equation}
A similar calculation for the Cauchy transform $q_j$ appearing in \eqref{Ysol} shows that
\begin{equation}
\widetilde{q}_j(z) = \frac{1}{2\pi i} \int^\infty_r \frac{\widetilde{p}_j(x) \widetilde{w}(x)}{x-z} dx = y^{-(j+\alpha+2)} q_j(y z).
\end{equation}
Summarising, if we define
\begin{equation}
\widetilde{Y}(z) := \begin{pmatrix}\widetilde{p}_n(z) & \widetilde{q}_n(z) \\ -\frac{2\pi i}{\widetilde{h}_{n-1}} \widetilde{p}_{n-1}(z)& -\frac{2\pi i}{\widetilde{h}_{n-1}} \widetilde{q}_{n-1}(z)\end{pmatrix},
\end{equation}
we obtain the relationship
\begin{equation}\label{Y tilde in term of Y}
\widetilde{Y}(z) = y^{-n \sigma_3} y^{-\frac{\alpha+2}{2}\sigma_3} Y(yz) y^{\frac{\alpha +2 }{2}\sigma_3}.
\end{equation}
We now use the well known relation, which can be straightforwardly deduced from \eqref{squared norm tilde},
\begin{equation}
\widetilde{Z}_{n,\alpha}(y;r) = \prod_{i=0}^{n-1} \widetilde{h}_i,
\end{equation}
from which it follows that
\begin{equation} \label{d log of Z}
\partial_y \log \widetilde Z_{n,\alpha}(y;r) = \sum_{i=0}^{n-1} \frac{\partial_y \widetilde{h}_i}{\widetilde{h}_i}.
\end{equation}
Note that
\begin{equation}
\partial_y \widetilde{h}_i =\int^\infty_{r} \widetilde{p}_i(x)^2 \partial_y \widetilde{w}(x) dx,
\end{equation}
where in the above line we have used the fact that $\widetilde{p}_{i}$ and $\partial_y \widetilde{p}_{i}$ are orthogonal. Combining the above expressions then yields,
\begin{equation}
\partial_y \widetilde{h}_i = -n \int^\infty_{r} x \widetilde{p}_i(x)^2 \widetilde{w}(x) dx.
\end{equation}
Using the above in \eqref{d log of Z} we have,
\begin{equation}\label{lol18}
\partial_y \log \widetilde{Z}_{n,\alpha}(y;r) =  - n \int^\infty_{r} x \widetilde{w}(x) \sum_{i=0}^{n-1} \frac{\widetilde{p}_i(x)^2}{\widetilde{h}_i} dx.
\end{equation}
The summation can now be removed by use of the Christoffel-Darboux formula (see \cite[equation (3.2.4)]{Szego OP})
\begin{equation}\label{Christofel Darboux}
\sum_{i=0}^{n-1} \frac{\widetilde{p}_i(x)^2}{\widetilde{h}_i} = \frac{\widetilde{p}_{n}(x)^{\prime}\widetilde{p}_{n-1}(x)-\widetilde{p}_{n-1}(x)^{\prime}\widetilde{p}_{n}(x)}{\widetilde{h}_{n-1}}.
\end{equation}
In order to simplify by a contour deformation the integral in the right-hand side of \eqref{lol18}, we will use the formula
\begin{equation}
\widetilde{w}(x) \sum_{i=0}^{n-1} \frac{\widetilde{p}_i(x)^2}{\widetilde{h}_i} = -\frac{1}{4\pi
i}\left(\mathrm{Tr}\left(\widetilde{Y}^{-1}_{+}(x)\widetilde{Y}^{\prime}_{+}(x)\sigma_3 \right) -
\mathrm{Tr}\left(\widetilde{Y}^{-1}_{-}(x)\widetilde{Y}^{\prime}_{-}(x) \sigma_3  \right)\right), \qquad x \in (r,\infty),
\end{equation}
which can be obtained from \eqref{Christofel Darboux} and from the relation $\widetilde{Y}_{+}(x) = \widetilde{Y}_{-}(x) \begin{pmatrix}
1 & \widetilde{w}(x) \\ 0 & 1
\end{pmatrix}$ for $x \in (r,\infty)$, and where $\widetilde{Y}_\pm$ correspond to the limiting values of $\widetilde{Y}$ from above and below $(r,\infty)$. We now obtain,
\begin{equation}\label{lol19}
\partial_y \log \widetilde Z_{n,\alpha}(y;r) =  \frac{n}{4\pi i} \int^\infty_{r}  x \left(\mathrm{Tr}\left(\widetilde{Y}^{-1}_{+}(x)\widetilde{Y}^{\prime}_{+}(x)\sigma_3 \right) -
      \mathrm{Tr}\left(\widetilde{Y}^{-1}_{-}(x)\widetilde{Y}^{\prime}_{-}(x) \sigma_3  \right)\right) dx.
\end{equation}
Note also that \eqref{Y tilde in term of Y} implies that
\begin{equation}\label{lol20}
\mathrm{Tr}(\widetilde{Y}^{-1}(z)\widetilde{Y}^{\prime}(z)\sigma_3) = y\mathrm{Tr}(Y^{-1}(y z)Y^{\prime}(y z)\sigma_3).
\end{equation}
Therefore, combining \eqref{lol19} with \eqref{lol20} gives after a change of variables
\begin{equation}\label{lol21}
\partial_y \log \widetilde Z_{n,\alpha}(y;r) =  \frac{n}{4\pi i y} \int^\infty_{yr}  x \left(\mathrm{Tr}\left({Y}^{-1}_{+}(x){Y}^{\prime}_{+}(x)\sigma_3 \right) -
      \mathrm{Tr}\left({Y}^{-1}_{-}(x){Y}^{\prime}_{-}(x) \sigma_3  \right)\right) dx.
\end{equation}
\begin{figure}[h]
\begin{center}
\setlength{\unitlength}{1truemm}
\begin{picture}(100,55)(-5,10)

\put(50,40){\arc[11.5,348.5]{25}}
\put(50,40){\arc[90,270]{5}}
\put(50,40){\line(1,0){28}}
\put(50,45){\line(1,0){24.5}}
\put(50,35){\line(1,0){24.5}}
\put(50,15){\thicklines\vector(1,0){.0001}}
\put(50,65){\thicklines\vector(-1,0){.0001}}
\put(65,45){\thicklines\vector(1,0){.0001}}
\put(62.5,35){\thicklines\vector(-1,0){.0001}}
\put(45,41.5){\thicklines\vector(0,1){.0001}}
\put(61,47){$\mathcal{C}_{+}$}
\put(61,31.5){$\mathcal{C}_{-}$}

\put(49,36.8){$yr$}\put(50,40){\thicklines\circle*{1.2}}

\put(74,36.8){$yr+R$}\put(75,40){\thicklines\circle*{1.2}}\put(50,67){$\mathcal{C}_{R}$}

\put(44,46.5){$yr+i\epsilon$}\put(50,45){\thicklines\circle*{1.2}}\put(40,39){$\mathcal{C}_{\epsilon}$}
\end{picture}
\caption{The contour $\mathcal{C} = \mathcal{C}_{\epsilon}\cup\mathcal{C}_{+}\cup\mathcal{C}_{R}\cup\mathcal{C}_{-} $ used in establishing a differential identity for the Hankel determinant $Z_{n,\alpha}(y,r)$.\label{diff identity contour}}
\end{center}
\end{figure}
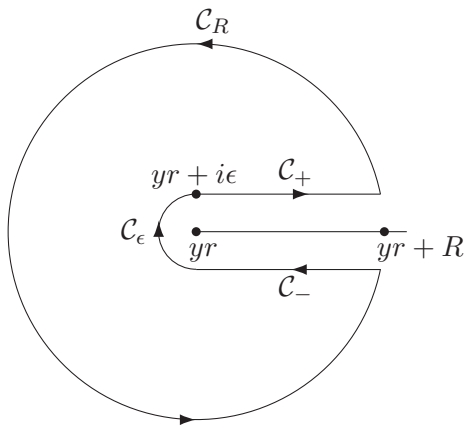
Consider the integral of $z  \mathrm{Tr} \left(Y^{-1}(z)Y^{\prime}(z)  \sigma_3 \right)$ over the contour $\mathcal C$ shown in Figure \ref{diff identity contour}. As $Y$ is analytic in $\mathbb{C}\setminus [yr,\infty)$ and $\mathcal{C}$ does not enclose any singularities of $Y$, this integral is zero. Therefore, we have
\begin{equation}\label{lol22}
\int_{\mathcal{C}_{+}\cup \mathcal{C}_{-}}   z \mathrm{Tr}\left({Y}^{-1}(z){Y}^{\prime}(z)\sigma_3 \right) dz = -\int_{\mathcal{C}_\epsilon+ \mathcal{C}_R}  z \mathrm{Tr}\left({Y}^{-1}(z){Y}^{\prime}(z)\sigma_3 \right)dz.
\end{equation}
Property $(d)$ in the RH problem for $Y$ implies that $Y^{-1}(z)Y^{\prime}(z) = \bigO((\log(yr-z))^{2})$ as $z \to yr$, and thus we have
\begin{equation}
\lim_{\epsilon \to 0} \int_{\mathcal{C}_{\epsilon}} z \mathrm{Tr}\left({Y}^{-1}(z){Y}^{\prime}(z)\sigma_3 \right)dz = 0.
\end{equation}
Thus, by Remark \ref{remark: def of Res} and by taking first $\epsilon \to 0$ and then $R \to \infty$ in \eqref{lol22}, only the term containing $c_{1}$ in \eqref{def of c_1} contributes to the limit, one has
\begin{equation*}
\int^\infty_{yr}  x \left(\mathrm{Tr}\left({Y}^{-1}_{+}(x){Y}^{\prime}_{+}(x)\sigma_3 \right) -
      \mathrm{Tr}\left({Y}^{-1}_{-}(x){Y}^{\prime}_{-}(x) \sigma_3  \right)\right) dx = -2\pi i c_{1},
\end{equation*}
and equation \eqref{lol21} becomes
\begin{equation}
\partial_y \log \widetilde Z_{n,\alpha}(y;r) =  -\frac{nc_{1}}{2 y}.
\end{equation}
The result follows from \eqref{Zn tilde}.
\end{proof}

\section{A Riemann-Hilbert problem related to the system of ODEs}
\label{section RH model Phi}

We first introduce some notations for the sake of convenience. We define the piecewise constant function
\begin{equation}
\theta(z) = \left\{ \begin{array}{l l}
+1 & \mbox{ if }\mbox{ Im}z>0, \\
-1 & \mbox{ if }\mbox{ Im}z<0, \\
\end{array} \right.
\end{equation}
and for $t \in \mathbb{R}$, we define also
\begin{equation}
H_{t}(z) = \left\{  \begin{array}{l l}

I, & \mbox{ for } -\frac{2\pi}{3}< \arg(z-t)< \frac{2\pi}{3},\\

\begin{pmatrix}
1 & 0 \\
-e^{\pi i \alpha} & 1 \\
\end{pmatrix}, & \mbox{ for } \frac{2\pi}{3}< \arg(z-t)< \pi, \\

\begin{pmatrix}
1 & 0 \\
e^{-\pi i \alpha} & 1 \\
\end{pmatrix}, & \mbox{ for } -\pi< \arg(z-t)< -\frac{2\pi}{3}, \\

\end{array} \right.
\end{equation}
where the principal branch is chosen for the argument, such that $\arg(z-t) = 0$ if $z >t$.
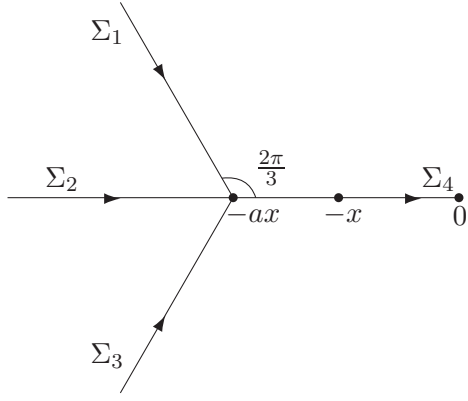
\begin{figure}[h]
    \begin{center}
    \setlength{\unitlength}{1truemm}
    \begin{picture}(100,55)(-5,10)
        \put(50,40){\line(1,0){30}}
        \put(50,40){\line(-1,0){30}}
        \put(50,40){\thicklines\circle*{1.2}}
        \put(50,40){\line(-0.5,0.866){15}}
        \put(50,40){\line(-0.5,-0.866){15}}
        \qbezier(53,40)(52,43)(48.5,42.598)
        \put(53,43){$\frac{2\pi}{3}$}
        \put(49,36.8){$-ax$}
        \put(62,36.8){$-x$}
        \put(79.2,36.4){$0$}
        \put(31,61){$\Sigma_{1}$}
        \put(31,18){$\Sigma_{3}$}
        \put(25,41.5){$\Sigma_{2}$}
        \put(75,41.5){$\Sigma_{4}$}
        \put(75,39.9){\thicklines\vector(1,0){.0001}}
        \put(35,39.9){\thicklines\vector(1,0){.0001}}
        \put(41,55.588){\thicklines\vector(0.5,-0.866){.0001}}
        \put(41,24.412){\thicklines\vector(0.5,0.866){.0001}}
        \put(64,40){\thicklines\circle*{1.2}}
        \put(80,40){\thicklines\circle*{1.2}}
    \end{picture}
    \caption{The jump contours appearing in the Riemann-Hilbert problem for $\Phi$. The arcs $\Sigma_{j}$, $j = 1,2,3,4$ depend on $x$ and $a$. For convenience, they are chosen such that they do not contain $-ax$, $-x$ and $0$. \label{model jump contours}}
\end{center}
\end{figure}
In the course of computing the asymptotics for the Hankel determinants $Z_{n,\alpha}(y;r)$, one is led to consider a model RH problem, which we will denote by $\Phi$ its unique solution. The matrix-valued function $\Phi$ depends on parameters $x > 0$ and $a > 1$. In the RH analysis of $Y$ in Section \ref{Section Steepest descent Y}, $x$ and $a$ will be related to $y$ and $r$ through the relations
\begin{equation}\label{lol17}
x = n^{2}f(y) \qquad \mbox{ and } \qquad a = \frac{f(yr)}{f(y)},
\end{equation}
where $f$ is a conformal map from $0$ to a neighbourhood of $0$, satisfying $f^{\prime}(0) = 4$, see \eqref{relations a r and x y} and \eqref{conformal map at 0}. In particular, if we can write $y = \frac{s}{4n^{2}}$ for a fixed $s$, and if $r$ is fixed, equation \eqref{lol17} implies that as $n \to \infty$, we have
\begin{equation}
x = s+\bigO(n^{-2}), \qquad a = r + \bigO(n^{-2}).
\end{equation}

\subsubsection*{RH problem for $\Phi(z) = \Phi(z;x,a)$}
\label{PhiRH}
\begin{itemize}
\item[(a)] $\Phi:\mathbb C\setminus\Sigma_{x,ax} \to\mathbb C^{2\times 2}$ analytic, with $\Sigma_{x,ax} = \cup^{4}_{i=1} \Sigma_i \cup \{-ax,-x,0\}$ as illustrated in Figure \ref{model jump contours}. 
\item[(b)] $\Phi$ has continuous boundary values $\Phi_\pm(z)$ as $z\in\Sigma_{x,ax}\setminus\{-ax,-x,0 \}$ is approached from the left ($+$) or right ($-$) side of $\Sigma_{x,ax}$, and they are related by
\begin{align}
&\Phi_+(z)=\Phi_-(z)\begin{pmatrix}1&0\\e^{\pi i \alpha}&1\end{pmatrix}, &z\in\Sigma_1,\\
&\Phi_+(z)=\Phi_-(z)\begin{pmatrix}0&1\\-1&0\end{pmatrix}, &z\in \Sigma_2,\\
&\Phi_+(z)=\Phi_-(z)\begin{pmatrix}1&0\\e^{-\pi i \alpha}&1\end{pmatrix}, &z\in\Sigma_3,\\
&\Phi_+(z)=\Phi_-(z) e^{\pi i \alpha \sigma_3}, &z\in\Sigma_4.
\end{align}
\item[(c)] As $z\to\infty$, there exist functions $p(x), q(x)$ and $v(x)$ (these functions also depend on $a$), such that $\Phi$ has the asymptotic behaviour
\begin{equation}\label{Psic}
\Phi(z)=\left(I+\frac{1}{z} \Phi_{1}(x) + \bigO(z^{-2})\right) z^{-\frac{1}{4}\sigma_3}Ne^{z^\frac{1}{2}\sigma_3},
\end{equation}
where $N=\frac{1}{\sqrt{2}}(I+i\sigma_1)$, $\sigma_1=\begin{pmatrix}0&1\\1&0\end{pmatrix}$ and 
\begin{equation*}
\Phi_{1}(x) = \begin{pmatrix}q(x)&i v(x)\\i p(x)&-q(x)\end{pmatrix}.
\end{equation*}
\item[(d)] As $z\to -ax$, $\Phi$ has the asymptotic behaviour
\begin{equation}
\label{Psia}
\Phi(z)=\bigO(1) \begin{pmatrix}
1 & \frac{1}{2\pi i} \log(z+ax) \\ 0 & 1
\end{pmatrix}e^{\frac{\pi i \alpha}{2}\theta(z)\sigma_{3}}H_{-ax}(z).
\end{equation}
As $z\to -x$, $\Phi$ has the asymptotic behaviour
\begin{equation}
\label{Psi1}
\Phi(z)=\bigO(1) e^{\frac{\pi i \alpha}{2}\theta(z)\sigma_{3}} (z+x)^{\sigma_3}.
\end{equation} 
As $z\to 0$, $\Phi$ has the asymptotic behaviour
\begin{equation}
\label{Psi0}
\Phi(z)=\bigO(1) z^{\frac{\alpha\sigma_3}{2}}.
\end{equation}
\end{itemize}

\begin{remark}
It can be verified by deleting the jumps around the singularities that the $\bigO(1)$ terms in asymptotics \eqref{Psia}, \eqref{Psi1} and \eqref{Psi0} are analytic functions.
\end{remark}

\begin{remark}
\label{Remark: uniqueness and existence}
The uniqueness of the solution $\Phi$ follows by standard arguments, based on the fact that $\det \Phi \equiv 1$ (see e.g. \cite[Theorem 7.18]{Deift}). It is in general a more difficult task to prove existence of a given RH problem, this relies on showing a so-called ``vanishing lemma".
The existence of $\Phi$ has been proved for $\alpha \geq 0$ in \cite[Lemma 2.6]{Atkin} (our situation corresponds to $I = (-\infty,-ax)$, $B = \{-x,0\}$, $\hat \alpha_{-x} = 1$, $\hat \alpha_{0} = \frac{\alpha}{2}$ and $\tau_{\infty,0} = -\frac{1}{2}$ in the language of \cite{Atkin}). Nevertheless, the proof of the vanishing lemma \cite[Lemma 2.6]{Atkin} does not require the assumptions $\alpha \geq 0$ and holds more generally for $\alpha >-1$. Thus, $\Phi$ exists and is unique for $\alpha >-1$.
\end{remark}

\begin{remark}
The fact that $\Phi_{1}(x)$ is traceless follows immediately from $\det \Phi \equiv 1$.
\end{remark}

\begin{remark}\label{Remark: symmetry}
Since $\alpha \in \mathbb{R}$, we can check that $\sigma_{3} \overline{\Phi(\overline{z})}\sigma_{3}$ is also a solution of the RH problem for $\Phi$. Thus, by uniqueness of the solution (see Remark \ref{Remark: uniqueness and existence}), we have
\begin{equation}
\Phi(z) = \sigma_{3} \overline{\Phi(\overline{z})}\sigma_{3}.
\end{equation}
In particular, this implies that all the functions $p(x), q(x)$ and $v(x)$ are real.
\end{remark}

\section{The Lax pair and two coupled Painlev\'{e} V equations}\label{Section Lax Pair}
To derive the system of two coupled Painlev\'{e} V equations \eqref{system of q_1 q_2}, we will use a well-known method of isomonodromic deformation theory \cite{FokasItsKapaevNovokshenov}. We begin by making the transformation
\begin{equation}
\widetilde\Phi(z;x,a) := \begin{pmatrix} 1&0\\-\frac{v(x^{2})}{x}&1 \end{pmatrix} x^\frac{\sigma_3}{2}e^{\frac{\pi i}{4} \sigma_3}\Phi(x^2 z;x^2,a).
\end{equation}
The jump contour for $\widetilde\Phi(z)$ is $\Sigma_{1,a}$ and is independent of $x$, while its asymptotic expansion as $z \to \infty$ is given by
\begin{equation}\label{large z asymp for tilde phi}
\widetilde{\Phi}(z)=\begin{pmatrix}
1 & 0 \\ -\frac{v(x^{2})}{x} & 1
\end{pmatrix} \left(I+\frac{1}{z} \widetilde{\Phi}_{1}(x) + \bigO(z^{-2})\right) e^{\frac{\pi i}{4}\sigma_{3}} z^{-\frac{1}{4}\sigma_3}Ne^{x z^\frac{1}{2}\sigma_3},
\end{equation}
where 
\begin{equation}
\widetilde{\Phi}_{1}(x) = 
\begin{pmatrix}
\frac{q(x^{2})}{x^{2}} & -\frac{v(x^{2})}{x} \\  \frac{p(x^{2})}{x^{3}} & -\frac{q(x^{2})}{x^{2}}
\end{pmatrix}.
\end{equation}
By standard arguments of RH analysis, $\tilde{\Phi}$ is analytic in $x \in (0,\infty)$. The Lax pair $(A,B) = (A(z;x,a),B(z;x,a))$ is defined by,
\begin{align}
\partial_z \widetilde\Phi(z) = A(z) \widetilde\Phi(z), \label{A equation in the lax pair}\\
\partial_x \widetilde\Phi(z) = B(z) \widetilde\Phi(z). \label{B equation in the lax pair}
\end{align}
The fact that $\det \widetilde{\Phi}$ is constant implies that $A$ and $B$ are traceless. Also, since the jump matrices for $\widetilde\Phi(z)$ are independent of $z$ and $x$, $A$ and $B$ are analytic on $\mathbb{C}\setminus \{-a,-1,0\}$. Using the asymptotic behaviour of  $\widetilde\Phi(z)$ as $z\to 0$, $z\to -1$, $z\to -a$ and $z\to \infty$, it is easy to show that $A(z)$ is meromorphic on $\mathbb{C}$ with single poles at $-a$, $-1$ and $0$ while $B(z)$ is an entire function. One has
\begin{align}
\label{ALax}
&A(z)=A_{\infty,0}(x) + A_{0,1}(x)z^{-1} + A_{1,1}(x)(z+1)^{-1} + A_{a,1}(x)(z+a)^{-1}, \\[0.3cm]
\label{BLax}
&B(z)=\begin{pmatrix}0&1\\z+u(x)&0\end{pmatrix}, \qquad \mbox{where} \qquad u(x) = \frac{-2 v^{\prime}(x^{2})x^{2}+v(x^{2})^{2}-2q(x^{2})+v(x^{2})}{x^{2}},
\end{align}
the matrices $A_{0,1}(x)$, $A_{1,1}(x)$, $A_{a,1}(x)$ are analytic in $x \in (0,\infty)$ and $A_{\infty,0}(x)=\begin{pmatrix}
0 & 0 \\ \frac{x}{2} & 0
\end{pmatrix}$. There are infinitely many non-trivial relations between the functions appearing in \eqref{large z asymp for tilde phi}. They can be found using the fact that $B$ is entire. For example, by expending $B_{12}(z)$ as $z \to \infty$ using \eqref{large z asymp for tilde phi}, we find
\begin{equation}
B_{12}(z) = 1 + \frac{-v(x^{2})^{2}+v(x^{2})+2q(x^{2})-2x^{2}v^{\prime}(x^{2})}{x^{2}z} + \bigO(z^{-2}), \qquad \mbox{ as } z \to \infty,
\end{equation}
from which we obtain the relation $q(x) = \frac{1}{2} (2 x v^{\prime}(x)  + v^{2}(x) - v(x))$, and thus $u(x)$ can be rewritten more simply as $u(x) = -2\left( v(x^{2})x^{-1} \right)^{\prime}$. We now turn to the compatibility condition
\begin{equation}
\partial_z \partial_x \widetilde\Phi = \partial_x \partial_z \widetilde\Phi,
\end{equation}
which upon rewriting the derivatives in terms of the Lax matrices becomes
\begin{equation} \label{ZCeqn}
\partial_x A - \partial_z B + AB - BA = 0.
\end{equation}
If we parameterise $A$ as
\beq\label{para A}
A(z;x,a) = \begin{pmatrix}d(z;x,a)&b(z;x,a)\\c(z;x,a)&-d(z;x,a)\end{pmatrix},
\eeq
then \eqref{ZCeqn} is equivalent to three coupled ODEs, 
\begin{align}
\label{asol}
&d = -\frac{b^{\prime}}{2} ,\\
\label{csol}
&c = (z+u) b - \frac{b^{\prime\prime}}{2} ,\\
\label{aceqn}
&c^{\prime} = 1 + 2(z+u)d,
\end{align}
where primes denote again derivatives with respect to $x$, and where the dependence of the functions in $z$, $x$ and $a$ have been omitted. The first two equations provide $d$ and $c$ in terms of $b$. Taking the determinant of $A$ yields
\begin{equation}
\label{beqn}
\det A = -\frac{(b^{\prime})^{2}}{4} - (z+u) b^2 + \frac{b b^{\prime\prime}}{2}.
\end{equation}
From \eqref{ALax} we have that $b(z)$ is of the form
\begin{equation}\label{b in the lax pair}
b(z) = b_0 z^{-1} + b_1 (z+1)^{-1} + b_2 (z+a)^{-1},
\end{equation}
where $b_{0}$, $b_{1}$ and $b_{2}$ only depend on $x$ and $a$. Note that Remark \ref{Remark: symmetry} implies that $A(z) \in \mathbb{R}$ and $B(z) \in \mathbb{R}$ for $z \in \mathbb{R}$. In particular, $b_{0}$, $b_{1}$ and $b_{2}$ are real. Equation \eqref{b in the lax pair} together with \eqref{beqn}, allow us to compute the asymptotics of $\det A$ at $z=0$, $z=-1$, $z=-a$ and $z=\infty$ in terms of $b_0$, $b_1$ and $b_2$. Alternatively, we may compute $\det A$ at these four points using the asymptotic expansion of $\widetilde\Phi$. Equating these asymptotics with those expressed in terms of $b_0$, $b_{1}$ and $b_2$ we arrive at the equations
\begin{align}
&\frac{\alpha ^2}{4}-b_0(x){}^2 u(x)-\frac{1}{4} b_0'(x){}^2+\frac{1}{2} b_0(x) b_0''(x)=0, \label{eq diff u}\\
&b_1(x){}^2 (1-u(x))-\frac{1}{4} b_1'(x){}^2+\frac{1}{2} b_1(x) b_1''(x)+1=0,\label{eq diff b1}\\
&b_2(x){}^2 (a-u(x))-\frac{1}{4} b_2'(x){}^2+\frac{1}{2} b_2(x) b_2''(x)=0,\label{eq diff b2}\\
\label{4th of Det A}&\frac{x^2}{4}-\left(b_0(x)+b_1(x)+b_2(x)\right){}^2=0.
\end{align}
By expanding the expression $A_{12}(z) = b(z)$ in a Laurent series about $z=\infty$, we get the following identities between $b_{0}$, $b_{1}$, $b_{2}$ and $v^{\prime}$:
\begin{align}
& b_{0}(x) + b_{1}(x) + b_{2}(x) = \frac{x}{2}, \label{system for the b first} \\[0.3cm]
& b_{1}(x) + a b_{2}(x) = - x v^{\prime}(x^{2}). \label{system for the b second}
\end{align}
Note that \eqref{system for the b first} is a better version of \eqref{4th of Det A}. If we express $b_{0}$ and $u$ in terms of $b_{1}$ and $b_{2}$ from \eqref{eq diff u} and  \eqref{system for the b first}, and if we define 
\begin{equation}\label{change of functions}
q_{j}^{2}(x) = \frac{2b_{j}(\sqrt{x})}{\sqrt{x}}, \qquad j = 1,2,
\end{equation}
we obtain the system \eqref{system of q_1 q_2} with $a=r$ from \eqref{eq diff b1} and \eqref{eq diff b2}. The same change of functions \eqref{change of functions} was used in \cite{CharlierDoeraene}, where the authors obtained a system of $k$ ($k \in \mathbb{N}_{0}$) coupled Painlev\'{e} V equations.

\section{Further properties of the special solutions}\label{Section asymptotics of the special solutions}

Our main goal in this section is to get asymptotics for $b_{0}(x)$, $b_{1}(x)$ and $b_{2}(x)$ as $(a-1)x \to \infty$ and $ax \to 0_{+}$. 

\subsection{Asymptotic analysis when $(a-1)x \rightarrow \infty$}

\subsubsection{Re-scaling of the model problem}
\label{Subsection: Re-scaling of the model}

In order to have a jump contour independent of $x$, we make the transformation $C(z;x,a) = (ax)^{\frac{1}{4} \sigma_{3}} \Phi(axz;x,a)$. $C$ is the solution of the following RH problem:

\subsubsection*{RH problem for $U$}
\begin{itemize}
\item[(a)] $C : \mathbb{C} \setminus \Sigma_{a^{-1},1} \rightarrow \mathbb{C}^{2 \times 2}$ is analytic.
\item[(b)] $C$ has the following jumps
\begin{align}
&C_+(z)=C_-(z)\begin{pmatrix}1&0\\e^{\pi i \alpha}&1\end{pmatrix}, &z\in\Sigma_1,\\
&C_+(z)=C_-(z)\begin{pmatrix}0&1\\-1&0\end{pmatrix}, &z\in \Sigma_2,\\
&C_+(z)=C_-(z)\begin{pmatrix}1&0\\e^{-\pi i \alpha}&1\end{pmatrix}, &z\in\Sigma_3,\\
&C_+(z)=C_-(z) e^{\pi i \alpha \sigma_3}, &z\in\Sigma_4.
\end{align}
\item[(c)] As $z \rightarrow \infty$,
\begin{equation}
C(z) = \left(I+\frac{1}{z}C_{1}(x;a)+ \bigO(z^{-2})\right) z^{-\frac{1}{4}\sigma_3}Ne^{\sqrt{ax}z^{\frac{1}{2}}\sigma_3},
\end{equation}
where 
\begin{equation}
C_{1}(x;a) = \begin{pmatrix}
\frac{q(x;a)}{ax} & \frac{iv(x;a)}{(ax)^{1/2}} \\
\frac{ip(x;a)}{(ax)^{3/2}} & -\frac{q(x;a)}{ax}
\end{pmatrix}.
\end{equation}
\item[(d)] As $z\to -1$, $C$ has the asymptotic behaviour
\begin{equation}
C(z) = \bigO(1) \begin{pmatrix}
1 & \frac{1}{2\pi i} \log (z+1) \\ 0 & 1
\end{pmatrix}e^{\frac{\pi i \alpha}{2}\theta(z)\sigma_{3}}H_{-1}(z).
\end{equation}
As $z\to -a^{-1}$, $C$ has the asymptotic behaviour
\begin{equation}
C(z)=\bigO(1)e^{\frac{\pi i \alpha}{2}\theta(z)\sigma_{3}}\left(z+a^{-1}\right)^{\sigma_3}.
\end{equation}
As $z\to 0$, $C$ has the asymptotic behaviour
\begin{equation}
C(z)=\bigO(1)z^{\frac{\alpha\sigma_3}{2}}.
\end{equation}
\end{itemize}

\subsubsection{Normalisation at $\infty$ of the RH problem}

We define the $g$-function by 
\begin{equation}
g(z) = \sqrt{z+1}
\end{equation}
where the principal branch is taken for the square root. As $z \to \infty$, $g$ has the asymptotic behaviour
\begin{equation}
g(z) = z^{1/2} + g_{1} z^{-1/2} + \bigO(z^{-3/2}), \qquad g_{1} = \frac{1}{2}.
\end{equation}
Now we define 
\begin{equation}\label{lol12}
W(z) = \begin{pmatrix}
1 & 0 \\ ig_{1} \sqrt{ax} & 1
\end{pmatrix} C(z) e^{-\sqrt{ax}g(z)\sigma_{3}}.
\end{equation}

\subsubsection*{RH problem for $W$}
\begin{itemize}
\item[(a)] $W : \mathbb{C} \setminus \Sigma_{a^{-1},1} \rightarrow \mathbb{C}^{2 \times 2}$ is analytic.
\item[(b)] $W$ has the following jumps:
\begin{align}
&W_{+}(z)=W_{-}(z) \begin{pmatrix} 1&0\\e^{\pi i \alpha}e^{-2\sqrt{ax}g(z)}&1\end{pmatrix}, &z\in\Sigma_1, \label{Jump1 of W}\\
&W_{+}(z) = W_{-}(z) \begin{pmatrix}0&1\\-1&0\end{pmatrix}, &z\in \Sigma_2,\\
&W_{+}(z)=W_{-}(z)\begin{pmatrix}1&0\\e^{-\pi i \alpha}e^{-2\sqrt{ax}g(z)}&1\end{pmatrix}, &z\in\Sigma_3, \label{Jump3 of W}\\
&W_{+}(z)=W_{-}(z) e^{\pi i \alpha \sigma_3}, &z\in\Sigma_4.
\end{align}
\item[(c)] As $z \rightarrow \infty$,
\begin{equation} \label{W asymp inf}
W(z) = \left(I+\frac{1}{z}W_{1}(x;a) + \bigO(z^{-2})\right) z^{-\frac{1}{4}\sigma_3} N,
\end{equation}
where 
\begin{equation} \label{W1 equation}
(W_{1}(x;a))_{12} = \frac{i v(x;a)}{\sqrt{ax}} + i g_{1}\sqrt{ax}.
\end{equation}
\item[(d)] As $z\to -1$, 
\begin{equation}
W(z)=\bigO(1) \begin{pmatrix}
1 & \frac{\log(z+1)}{2\pi i} \\ 0 & 1
\end{pmatrix}
e^{ \frac{\pi i \alpha}{2}\theta(z)\sigma_{3}} H_{-1}(z)e^{-\sqrt{ax}g(z)\sigma_{3}}.
\end{equation}
As $z\to -a^{-1}$,
\begin{equation}
W(z)=\bigO(1)e^{\frac{\pi i \alpha}{2}\theta(z)\sigma_{3}}\left(z+a^{-1} \right)^{\sigma_3}.
\end{equation}
As $z\to 0$,
\begin{equation}
W(z)=\bigO(1)z^{\frac{\alpha\sigma_3}{2}}.
\end{equation}
\end{itemize}
For $z \in \Sigma_{1} \cup \Sigma_{3}$, $\mbox{Re(}g(z)\mbox{)} > 0$ and therefore the jumps of $W$ on $\Sigma_{1} \cup \Sigma_{3}$ are exponentially close to the identity matrix as $(a-1)x \to \infty$. Since $g(-1)=0$, this convergence is not uniform as $z$ approaches $-1$. Therefore we will construct a global parametrix which will be a good approximation of $W$ as $z$ stays away from a neighbourhood of $-1$, and a local parametrix around $-1$.

\subsubsection{Global parametrix}

Ignoring exponentially small entries in the jumps and a small neighbourhood of $-1$, we are led to consider the following RH problem.

\subsubsection*{RH problem for $P^{(\infty)}$}
\begin{itemize}
\item[(a)] $P^{(\infty)} : \mathbb{C} \setminus \mathbb{R}^{-}$ is analytic.
\item[(b)] $P^{(\infty)}$ has the following jumps on $\mathbb{R}^{-}$:
\begin{align}
&P^{(\infty)}_{+}(z) = P^{(\infty)}_{-}(z) \begin{pmatrix}0&1\\-1&0\end{pmatrix}, &z\in (-\infty,-1),\\
&P^{(\infty)}_{+}(z) = P^{(\infty)}_{-}(z) e^{\pi i \alpha \sigma_3}, & z \in (-1,0)\setminus \{-a^{-1}\},
\end{align}
\item[(c)] As $z \rightarrow \infty$,
\begin{equation} \label{Pinf asymp inf}
P^{(\infty)}(z) = \left(I+\frac{1}{z}P^{(\infty)}_{1}(a) + \bigO(z^{-2})\right) z^{-\frac{1}{4}\sigma_3} N.
\end{equation}
\item[(d)] As $z\to -1$,
\begin{equation}
P^{(\infty)}(z)=\bigO\left((z+1)^{-1/4}\right).
\end{equation}
As $z\to -a^{-1}$,
\begin{equation}
P^{(\infty)}(z)=\bigO(1)e^{\frac{\pi i \alpha}{2}\theta(z)\sigma_{3}}\left(z+a^{-1}\right)^{\sigma_3}.
\end{equation}
As $z\to 0$,
\begin{equation}
P^{(\infty)}(z)=\bigO(1) z^{\frac{\alpha\sigma_3}{2}}.
\end{equation}
\end{itemize}
We can check that the solution of this RH problem is explicitly given by
\begin{multline}\label{lol11}
P^{(\infty)}(z) = \begin{pmatrix}
1 & 0 \\ i(\alpha + 2 \sqrt{1-a^{-1}}) & 1
\end{pmatrix} (z+1)^{-\frac{\sigma_{3}}{4}} N \\ \times \left( \frac{\sqrt{z+1}+1}{\sqrt{z+1}-1} \right)^{-\frac{\alpha}{2}\sigma_{3}} \left( \frac{\sqrt{z+1} + \sqrt{1-a^{-1}}}{\sqrt{z+1} - \sqrt{1-a^{-1}}} \right)^{-\sigma_{3}},
\end{multline}
where the principal branch has been chosen for each root. Note that 
\begin{equation}\label{lol24}
(P^{(\infty)}_{1}(a))_{12} = i(\alpha + 2 \sqrt{1-a^{-1}}).
\end{equation}
\subsubsection{Local parametrix near $-1$}

We want to construct a function $P^{(-1)}$ defined in a open disk $D_{-1}$ around $-1$ of radius $\frac{1}{3}(1-a^{-1})$ which satisfies the following RH conditions.

\subsubsection*{RH problem for $P^{(-1)}$}
\begin{itemize}
\item[(a)] $P^{(-1)} : D_{-1} \setminus \Sigma_{a^{-1},1}$ is analytic.
\item[(b)] $P^{(-1)}$ has the following jumps:
\begin{align}
&P^{(-1)}_{+}(z)=P^{(-1)}_{-}(z) \begin{pmatrix} 1&0\\e^{\pi i \alpha}e^{-2\sqrt{ax}g(z)}&1\end{pmatrix}, &z\in\Sigma_1 \cap D_{-1}, \label{Jump1 of P^{(-a)}}\\
&P^{(-1)}_{+}(z) = P^{(-1)}_{-}(z) \begin{pmatrix}0&1\\-1&0\end{pmatrix}, &z\in \Sigma_2\cap D_{-1},\\
&P^{(-1)}_{+}(z)=P^{(-1)}_{-}(z)\begin{pmatrix}1&0\\e^{-\pi i \alpha}e^{-2\sqrt{ax}g(z)}&1\end{pmatrix}, &z\in\Sigma_3\cap D_{-1}, \label{Jump3 of P^{(-a)}}\\
&P^{(-1)}_{+}(z)=P^{(-1)}_{-}(z) e^{\pi i \alpha \sigma_3}, &z\in\Sigma_4\cap D_{-1}.
\end{align}
\item[(c)] As $x \rightarrow \infty$,
\begin{equation} \label{P(-a) on the boundary}
P^{(-1)}(z) = \left(I + \bigO\left(((a-1)x)^{-1/2}\right)\right) P^{(\infty)}(z)
\end{equation}
uniformly for $z\in \partial D_{-1}$.
\item[(d)] As $z\to -1$,
\begin{equation}
P^{(-1)}(z) = \bigO(1) \begin{pmatrix}
1 & \frac{\log(z+1)}{2\pi i} \\ 0 & 1
\end{pmatrix}
e^{\frac{\pi i \alpha}{2}\theta(z)\sigma_{3}} H_{-1}(z)e^{-\sqrt{ax}g(z)\sigma_{3}}.
\end{equation}
\end{itemize}
The solution of this RH problem can be constructed in terms of the Bessel model RH problem with parameter $\alpha = 0$, which is presented in the appendix (see Subsection \ref{Subsection: Bessel model RH problem}), and whose solution is denoted $\Upsilon^{(0)}$. The local parametrix is given by
\begin{equation}
P^{(-1)}(z) = E(z) \Upsilon^{(0)}(axf(z)) e^{\frac{\pi i \alpha}{2}\theta(z) \sigma_{3}} e^{-\sqrt{ax}g(z)\sigma_{3}},
\end{equation}
where
\begin{equation}
f(z) = g(z)^{2} = z+1 \quad \mbox{ and } \quad E(z) = P^{(\infty)}(z)e^{-\frac{\pi i \alpha}{2}\theta(z)\sigma_{3}}N^{-1} f(z)^{\frac{\sigma_{3}}{4}}(ax)^{\frac{1}{4}\sigma_{3}}.
\end{equation}
It can be verified that $E$ is analytic in $D_{-1}$.

\subsubsection{Small norm RH problem} \label{Small norm RH problem for x small}

Define
\begin{equation}\label{lol10}
R(z) = \left\{ \begin{array}{l l}
W(z)P^{(\infty)}(z)^{-1}, & \mbox{for } z \in \mathbb{C}\setminus (\overline{D_{-1}} \cup \Sigma_{1} \cup \Sigma_{3}), \\[0.1cm]
W(z)P^{(-1)}(z)^{-1}, & \mbox{for } z \in D_{-1}.
\end{array} \right.
\end{equation}
Since $W$ and $P^{(-1)}$ have the same jumps inside $D_{-1}$ and the same behaviour near $-1$, $R$ is analytic inside $D_{-1}$. Also, $W$ and $P^{(\infty)}$ have the same jumps on $\mathbb{R}^{-}$, and the same behaviour near $-a^{-1}$ and $0$. Therefore, $R$ is analytic on $\mathbb{C}\setminus ((\partial D_{-1} \cup \Sigma_{1} \cup \Sigma_{3}) \setminus D_{-1})$. Let us put the clockwise orientation on $\partial D_{-1}$. On $\partial D_{-1}$ by \eqref{P(-a) on the boundary}, we have $R_{-}(z)^{-1}R_{+}(z) = I + \bigO\left(((a-1)x)^{-1/2}\right)$ and by \eqref{Jump1 of W} and \eqref{Jump3 of W}, on $(\Sigma_{1} \cup \Sigma_{3})\setminus D_{-1}$, $R_{-}(z)^{-1}R_{+}(z) = I + \bigO(e^{-c \sqrt{(a-1)x}})$ where $c > 0$ is a constant. From \eqref{W asymp inf} and \eqref{Pinf asymp inf}, as $z \to \infty$ one has $R(z) = I + \bigO(z^{-1})$. By small norm theory for RH problems, it follows that $R$ exists for sufficiently large $(a-1)x$ and satisfies $R(z) = I + \bigO(((a-1)x)^{-1/2})$ uniformly in $z \in \mathbb{C}\setminus ((\partial D_{-1} \cup \Sigma_{1} \cup \Sigma_{3}) \setminus D_{-1})$ and as $z \to \infty$, 
\begin{equation}\label{lol37}
R(z) = I + \frac{R_{1}(x;a)}{z} + \bigO(z^{-2}), \qquad R^{\prime}(z) = \bigO\left(((a-1)x)^{-1/2}\right)
\end{equation}
We have in particular that $R_{1}(x;a) = \bigO(((a-1)x)^{-1/2})$. For $z \in \mathbb{C}\setminus (\overline{D_{-1}} \cup \Sigma_{1} \cup \Sigma_{3})$, we have $W(z) = R(z) P^{(\infty)}(z)$ and therefore 
\begin{equation} \label{W1 R P}
W_{1}(x;a) = R_{1}(x;a) + P_{1}^{(\infty)}(a) = P_{1}^{(\infty)}(a) + \bigO(((a-1)x)^{-1/2}),
\end{equation}
Using \eqref{W1 equation}, \eqref{lol24} and \eqref{W1 R P}, we obtain 
\begin{equation}\label{v asymptotics for x to inf}
v(x) = -\frac{1}{2}ax + (\alpha + 2 \sqrt{1-a^{-1}}) \sqrt{ax} + \bigO((1-a^{-1})^{-1/2}), \qquad \mbox{ as } (a-1)x \to \infty.
\end{equation}
For the asymptotics for $b_{0}$ and $b_{1}$, from \eqref{A equation in the lax pair}, \eqref{para A} and \eqref{b in the lax pair}, we can use
\begin{align}
& b_{0}(\sqrt{x}) = \lim_{z \to 0} i \sqrt{x} z \left[ \partial_{z}\Phi(axz;x,a)\Phi^{-1}(axz;x,a) \right]_{12}, \label{b0 large} \\
& b_{1}(\sqrt{x}) = \lim_{z \to -a^{-1}} i \sqrt{x} (z+a^{-1}) \left[ \partial_{z}\Phi(axz;x,a)\Phi^{-1}(axz;x,a) \right]_{12}. \label{b1 large}
\end{align}
To compute these limits, we will need the global parametrix. For $z \in \mathbb{C}\setminus (\overline{D_{-1}} \cup \Sigma_{1} \cup \Sigma_{3})$, by the definition of $C$ given at the beginning of Subsection \ref{Subsection: Re-scaling of the model}, \eqref{lol12} and \eqref{lol10}, we have
\begin{equation}
\Phi(axz;x,a) = (ax)^{-\frac{\sigma_{3}}{4}}\begin{pmatrix}
1 & 0 \\ - \frac{i}{2}\sqrt{ax} & 1
\end{pmatrix}R(z)P^{(\infty)}(z)e^{\sqrt{ax}g(z)\sigma_{3}}.
\end{equation}
Using the definition of the global parametrix given by \eqref{lol11}, together with the asymptotics for $R$ \eqref{lol37} and the above equation, the limits \eqref{b0 large} and \eqref{b1 large} are straightforward to compute. As $(a-1)x \to \infty$, we find
\begin{align}
& b_{0}(\sqrt{x}) = \frac{\alpha}{2\sqrt{a}} \left( 1+ \bigO\left(((a-1)x)^{-1/2}\right) \right), \\
& b_{1}(\sqrt{x}) = \frac{1}{\sqrt{a-1}}\left( 1+ \bigO\left(((a-1)x)^{-1/2}\right) \right). \label{lol27}
\end{align}
Asymptotics for $b_{2}$ can be obtained directly from the relation \eqref{system for the b first}, as $(a-1)x \to \infty$ we have
\begin{equation}
b_{2}(\sqrt{x}) = \frac{\sqrt{x}}{2}-\frac{\alpha}{2\sqrt{a}}-\frac{1}{\sqrt{a-1}}+\bigO \bigg( \frac{1}{(a-1)\sqrt{x}} \bigg). \label{lol29}
\end{equation}
Large $(a-1)x$ asymptotics for $q_{1}(x)^{2}$ and $q_{2}(x)^{2}$ are immediate to obtain from \eqref{change of functions}, \eqref{lol27} and  \eqref{lol29} and are given in \eqref{asymp q1 inf} and \eqref{asymp q2 inf} with $a = r$.

\vspace{0.2cm}\hspace{-0.6cm}We will need later the asymptotics for $\Phi(z;x,a)$ when $z \to \infty$ and simultaneously $(a-1)x \to \infty$. Note that the global parametrix $P^{(\infty)}$ defined in \eqref{lol11} only depends on $a$, and is such that its behaviour at $\infty$ \eqref{Pinf asymp inf} has the form
\begin{equation}
P^{(\infty)}(z) = (I+\bigO(z^{-1}))z^{-\frac{1}{4}\sigma_{3}}N, \qquad \mbox{ as } z \to \infty \quad \mbox{ and } \quad (a-1)x \to \infty.
\end{equation}
Thus from \eqref{lol12}, \eqref{lol10} and the definition of $C$ given at the beginning of Subsection \ref{Subsection: Re-scaling of the model}, we have as $\frac{z}{ax} \to \infty$ and simultaneously $(a-1)x \to \infty$ that
\begin{equation}\label{lol13}
\begin{array}{r c l}
\Phi(z;x,a) & = & (ax)^{-\frac{\sigma_{3}}{4}} \begin{pmatrix}
1 & 0 \\ -\frac{i}{2}\sqrt{ax} & 1
\end{pmatrix} \left( I + \bigO\left( \frac{ax}{z} \right) \right)\left( \frac{z}{ax} \right)^{-\frac{\sigma_{3}}{4}}Ne^{\sqrt{z+ax}\sigma_{3}}, \\
 & = & z^{-\frac{\sigma_{3}}{4}}\begin{pmatrix}
 1 & 0 \\ -\frac{i}{2}\frac{ax}{\sqrt{z}} & 1
\end{pmatrix} \left( I+\begin{pmatrix}
\bigO(\tfrac{ax}{z}) & \bigO(\sqrt{\tfrac{ax}{z}}) \\
\bigO\big(\hspace{-0.10cm}\left(\tfrac{ax}{z}\right)^{\frac{3}{2}}\hspace{-0.10cm}\big) & \bigO(\tfrac{ax}{z})
\end{pmatrix}  \right) N e^{\sqrt{z+ax}\sigma_{3}}.
\end{array}
\end{equation}
Furthermore, if we assume that $\frac{\sqrt{z}}{ax}\to \infty$, we have
\begin{equation}
Ne^{\sqrt{z+ax}\sigma_{3}} = \left( I + \begin{pmatrix}
\bigO \Big( \big(\tfrac{ax}{\sqrt{z}}\big)^{2} \Big) & \bigO \left( \tfrac{ax}{\sqrt{z}} \right) \\
\frac{i}{2}\frac{ax}{\sqrt{z}}+\bigO \Big( \big(\tfrac{ax}{\sqrt{z}}\big)^{3} \Big) & \bigO \Big( \big(\tfrac{ax}{\sqrt{z}}\big)^{2} \Big)
\end{pmatrix} \right) Ne^{\sqrt{z}\sigma_{3}}.
\end{equation}
Thus, if $z \to \infty$, $(a-1)x \to \infty$ and simultaneously $\frac{\sqrt{z}}{ax} \to \infty$, \eqref{lol13} becomes
\begin{equation}\label{lol16}
\Phi(z;x,a) = z^{-\frac{\sigma_{3}}{4}} \left( I + \begin{pmatrix}
\bigO \Big( \big(\tfrac{ax}{\sqrt{z}}\big)^{2} \Big) & \bigO \left( \tfrac{ax}{\sqrt{z}} \right) \\
\bigO \Big( \big(\tfrac{ax}{\sqrt{z}}\big)^{3} \Big) & \bigO \Big( \big(\tfrac{ax}{\sqrt{z}}\big)^{2} \Big)
\end{pmatrix} \right) Ne^{\sqrt{z}\sigma_{3}}.
\end{equation}

\subsection{Asymptotic analysis when $ax \rightarrow 0$}

In order to have the rays $\Sigma_{1}$ and $\Sigma_{3}$ of the jump contour independent of $a$ and $x$, we make the following transformation on $\Phi$:
\begin{equation}\label{W tilde def}
\widetilde{W}(z) = \Phi(z) H_{-ax}(z)^{-1}H_{0}(z).
\end{equation}
It is easy to verify that $\widetilde{W}$ satisfies the following RH problem.
\subsubsection*{RH problem for $\widetilde{W}$}
\begin{itemize}
\item[(a)] $\widetilde{W} : \mathbb{C} \setminus \Sigma_{0,0} \rightarrow \mathbb{C}^{2 \times 2}$ is analytic.
\item[(b)] $\widetilde{W}$ has the following jumps
\begin{align}
&\widetilde{W}_+(z) = \widetilde{W}_-(z)\begin{pmatrix}1&0\\e^{\pi i \alpha}&1\end{pmatrix}, &\arg(z) = \frac{2\pi}{3},\\
&\widetilde{W}_+(z) = \widetilde{W}_-(z)\begin{pmatrix}0&1\\-1&0\end{pmatrix}, &z\in (-\infty,-ax),\\
&\widetilde{W}_+(z) = \widetilde{W}_-(z)\begin{pmatrix}1&0\\e^{-\pi i \alpha}&1\end{pmatrix}, &\arg(z) = -\frac{2\pi}{3},\\
&\widetilde{W}_+(z) = \widetilde{W}_-(z) \begin{pmatrix}
e^{\pi i \alpha} & 0 \\ -2 & e^{-\pi i \alpha}
\end{pmatrix}, &z \in (-ax,0)\setminus \{-x\}.
\end{align} 
\item[(c)] As $z \rightarrow \infty$,
\begin{equation} \label{Asympt tilde Phi z to inf}
\widetilde{W}(z) = \left(I+\frac{1}{z}\Phi_{1}(x) + \bigO(z^{-2})\right) z^{-\frac{1}{4}\sigma_3}Ne^{z^{\frac{1}{2}}\sigma_3}.
\end{equation}
\item[(d)] As $z\to -ax$, $\widetilde{W}$ has the asymptotic behaviour
\begin{equation}\label{W small x at -ax}
\widetilde{W}(z)=\bigO(1) \begin{pmatrix}
1 & \frac{\log(z+ax)}{2\pi i} \\ 0 & 1
\end{pmatrix} e^{\frac{\pi i \alpha}{2}\theta(z)\sigma_{3}}H_{0}(z).
\end{equation}
As $z\to -x$,
\begin{equation}\label{W small x at -x}
\widetilde{W}(z)=\bigO(1)e^{\frac{\pi i \alpha}{2}\theta(z)\sigma_{3}}(z+x)^{\sigma_3}H_{0}(z).
\end{equation}
As $z\to 0$, 
\begin{equation}\label{W small x at 0}
\widetilde{W}(z)=\bigO(1)z^{\frac{\alpha\sigma_3}{2}}H_{0}(z).
\end{equation}
In equations \eqref{W small x at -ax}, \eqref{W small x at -x} and \eqref{W small x at 0}, the $\bigO(1)$ are analytic functions in a neighbourhood of their respective point.
\end{itemize}

\subsubsection{Global parametrix}
As $ax \to 0$, the length of $(-ax,0)$ tends to $0$ and the pole at $-x$, the algebraic singularity at $0$, as well as the logarithmic singularity at $-ax$, are merging together. Therefore, for $z$ outside of a neighbourhood of $0$, we expect that the Bessel model RH problem of order $\alpha + 2$ (presented in the appendix, see Subsection \ref{Subsection: Bessel model RH problem}) will be relevant to construct the global parametrix $P^{(\infty)}$. In a small neighbourhood of $0$ and we will construct a new local parametrix around the origin.

\subsubsection*{RH problem for $P^{(\infty)}$}
\begin{itemize}
\item[(a)] $P^{(\infty)} : \mathbb{C} \setminus \Sigma_{0,0}$ is analytic.
\item[(b)]  $P^{(\infty)}$ has the following jumps on $\Sigma_{0,0}$:
\begin{align}
&P^{(\infty)}_{+}(z) = P^{(\infty)}_{-}(z) \begin{pmatrix}0&1\\-1&0\end{pmatrix}, &z\in \mathbb{R}^{-},\\
&P^{(\infty)}_{+}(z) = P^{(\infty)}_{-}(z) \begin{pmatrix}
1 & 0 \\ e^{\pi i \alpha} & 1
\end{pmatrix}, & \arg(z) = \frac{2\pi}{3},\\
&P^{(\infty)}_{+}(z) = P^{(\infty)}_{-}(z) \begin{pmatrix}
1 & 0 \\ e^{-\pi i \alpha} & 1
\end{pmatrix}, & \arg(z) = -\frac{2\pi}{3}.
\end{align}
\item[(c)] As $z \rightarrow \infty$,
\begin{equation} \label{Asymp x small Pinf z to inf}
P^{(\infty)}(z) = \left(I+ \frac{P_{1}^{(\infty)}}{z}+\bigO(z^{-2})\right) z^{-\frac{1}{4}\sigma_3} N e^{z^{\frac{1}{2}} \sigma_{3}}.
\end{equation}
\item[(d)] As $z \to 0$, 
\begin{equation}
P^{(\infty)}(z)=  \left\{ \begin{array}{l l}

\bigO \begin{pmatrix}
|z|^{\frac{\alpha + 2}{2}} & |z|^{-\frac{\alpha + 2}{2}} \\
|z|^{\frac{\alpha + 2}{2}} & |z|^{-\frac{\alpha + 2}{2}}
\end{pmatrix}, & \mbox{ for } -\frac{2\pi}{3} < \arg (z) < \frac{2\pi}{3}, \\

\bigO \begin{pmatrix}
|z|^{-\frac{\alpha + 2}{2}} & |z|^{-\frac{\alpha + 2}{2}} \\
|z|^{-\frac{\alpha + 2}{2}} & |z|^{-\frac{\alpha + 2}{2}}
\end{pmatrix}, & \mbox{ for } \arg(z) \in (-\pi,-\frac{2\pi}{3}) \cup (\frac{2\pi}{3},\pi). \\
\end{array} \right.
\end{equation}
\end{itemize}
The only solution of this RH problem is well-known and given by
\begin{equation}\label{lol9}
P^{(\infty)}(z) = \Upsilon^{(\alpha+2)}(z),
\end{equation}
where $\Upsilon^{(\alpha+2)}$ is the solution of the Bessel model RH problem, presented in Subsection \ref{Subsection: Bessel model RH problem}.
Note that if we don't specify condition (d) in the RH problem for $P^{(\infty)}$, the solution is not unique. From a mathematical point of view, we remark that $\Upsilon^{(\alpha)}$ or $\Upsilon^{(\alpha+4)}$ for example could have also been a suitable choice for $P^{(\infty)}$, but $\Upsilon^{(\alpha+2)}$ is the only one which satisfies condition (d) and which allows us to create a local parametrix around $0$ respecting the matching condition \eqref{Matching Cond P0}. By \eqref{asymptotics_behaviour_of_Bessel_order1}, we have
\begin{align}
& (P_{1}^{(\infty)})_{12} = \frac{i}{8} (4(\alpha+2)^{2}-1). \label{lol30}
\end{align}
In the construction of the local parametrix in a neighbourhood of $0$, we will need a more explicit knowledge of the behaviour of $P^{(\infty)}$ at the origin. It can be verified (see \cite{AtkClaMez}) that $P^{(\infty)}$ can be written as
\begin{equation}\label{lol3}
P^{(\infty)}(z) = P^{(\infty)}_{0}(z) z^{\frac{\alpha+2}{2}\sigma_{3}} \begin{pmatrix}
1 & h(z) \\ 0 & 1
\end{pmatrix} H_{0}(z), \qquad z \in \mathbb{C}\setminus \Sigma_{0,0},
\end{equation}
where $P^{(\infty)}_{0}(z) = P^{(\infty)}_{0,\alpha}(z)$ is an entire function in $z$ for every $\alpha$ while 
\begin{equation}
h(z) = \left\{ \begin{array}{ll}
\frac{1}{2i \sin(\pi \alpha)}, & \alpha \notin \mathbb{Z}, \\
\frac{(-1)^{\alpha}}{2\pi i} \log z, & \alpha \in \mathbb{Z}. \\
\end{array}  \right.
\end{equation}

\subsubsection{Local parametrix near $0$}

We want to construct a function $P^{(0)}$ defined in a fixed open disk $D_{0}$ around $0$ which satisfies exactly the same RH conditions than $\widetilde{W}$ on $D_{0}$ and matches with $P^{(\infty)}$ on $\partial D_{0}$.

\subsubsection*{RH problem for $P^{(0)}$}
\begin{itemize}
\item[(a)] $P^{(0)} : D_{0} \setminus \Sigma_{0,0}$ is analytic.
\item[(b)] $P^{(0)}$ has the following jumps
\begin{align}
&P^{(0)}_+(z) = P^{(0)}_-(z)\begin{pmatrix}1&0\\e^{\pi i \alpha}&1\end{pmatrix}, &z \in \left\{ \arg(z) = \frac{2\pi}{3} \right\} \cap D_{0},\\
&P^{(0)}_+(z) = P^{(0)}_-(z)\begin{pmatrix}0&1\\-1&0\end{pmatrix}, &z\in (-\infty,-ax)\cap D_{0},\\
&P^{(0)}_+(z) = P^{(0)}_-(z)\begin{pmatrix}1&0\\e^{-\pi i \alpha}&1\end{pmatrix}, &z\in \left\{\arg(z) = -\frac{2\pi}{3}\right\}\cap D_{0},\\
&P^{(0)}_+(z) = P^{(0)}_-(z) \begin{pmatrix}
e^{\pi i \alpha} & 0 \\ -2 & e^{-\pi i \alpha}
\end{pmatrix}, &z \in (-ax,0)\setminus \{-x\}.
\end{align} 
\item[(c)] As $ax \to 0$,
\begin{equation} \label{Matching Cond P0}
P^{(0)}(z) = (I + \bigO(ax)) P^{(\infty)}(z)
\end{equation}
uniformly for $z\in \partial D_{0}$.
\item[(d)] As $z\to -ax$, $P^{(0)}$ has the asymptotic behaviour
\begin{equation}\label{small x at -ax}
P^{(0)}(z)=\bigO(1) \begin{pmatrix}
1 & \frac{\log(z+ax)}{2\pi i} \\ 0 & 1
\end{pmatrix} e^{\frac{\pi i \alpha}{2}\theta(z)\sigma_{3}}H_{0}(z).
\end{equation}
As $z\to -x$,
\begin{equation}\label{small x at -x}
P^{(0)}(z)=\bigO(1)e^{\frac{\pi i \alpha}{2}\theta(z)\sigma_{3}}(z+x)^{\sigma_3}H_{0}(z).
\end{equation}
As $z\to 0$, 
\begin{equation}\label{small x at 0}
P^{(0)}(z)=\bigO(1)z^{\frac{\alpha\sigma_3}{2}}H_{0}(z).
\end{equation}
\end{itemize}
We can check that the solution of this RH problem is explicitly given by:
\begin{equation}\label{lol 1}
P^{(0)}(z) =  P_{0}^{(\infty)}(z) \left( \frac{z+x}{z} \right)^{\sigma_{3}} \begin{pmatrix}
1 & f(z;x) \\ 0 & 1
\end{pmatrix} z^{\frac{\alpha + 2}{2}\sigma_{3}} \begin{pmatrix}
1 & h(z) \\ 0 & 1
\end{pmatrix} H_{0}(z)
\end{equation}
where
\begin{equation}
f(z;x) = \frac{-z^2}{2\pi i} \int_{-ax}^{0} \frac{|s|^{\alpha}}{s-z}ds.
\end{equation}

\subsubsection{Small norm RH problem}

Define
\begin{equation}\label{SmallNormXto0}
R(z) = \left\{ \begin{array}{l l}
\widetilde{W}(z)P^{(\infty)}(z)^{-1}, & \mbox{for } z \in \mathbb{C}\setminus \overline{D_{0}}, \\[0.1cm]
\widetilde{W}(z)P^{(0)}(z)^{-1}, & \mbox{for } z \in D_{0}.
\end{array} \right.
\end{equation}
By definition of $\widetilde{W}$, $P^{(\infty)}$ and $P^{(0)}$, $R$ is analytic on $\mathbb{C}\setminus\partial D_{0}$. Let us put the clockwise orientation on $\partial D_{0}$. The jumps of $R$ on $\partial D_{0}$ are given by 
\begin{equation}
R_{-}(z)^{-1}R_{+}(z) = P^{(0)}(z)P^{(\infty)}(z)^{-1} = I + \bigO(ax), \qquad \mbox{ as } ax \to 0,
\end{equation}
where we have used \eqref{Matching Cond P0}. Also, from \eqref{Asympt tilde Phi z to inf} and \eqref{Asymp x small Pinf z to inf}, as $z \to \infty$ we have $R(z) = I + \bigO(z^{-1})$. Thus, by standard theory for small norm RH problems, $R$ exists for sufficiently small $ax$ and satisfies
\begin{equation}\label{lol38}
R(z) = I + \bigO(ax), \qquad R^{\prime}(z) = \bigO(ax),
\end{equation}
as $ax \to 0$ uniformly in $z \in \mathbb{C}\setminus \partial D_{0}$. Also, as $z \to \infty$, we have
\begin{equation}\label{lol8}
R(z) = I + \frac{R_{1}(x;a)}{z} + \bigO(z^{-2}),
\end{equation}
where $R_{1}(x;a) = \bigO(ax)$ as $ax \to 0$. For $z \in \mathbb{C}\setminus \overline{D_{0}}$, we have $\widetilde{W}(z) = R(z) P^{(\infty)}(z)$ and therefore we obtain
\begin{align}
& \Phi_{1}(x;a) = P_{1}^{(\infty)} + R_{1}(x;a). \label{lol32}
\end{align}
In particular, from \eqref{lol30} and \eqref{lol32}, as $ax \to 0$ we obtain
\begin{equation}\label{v asymptotics for x to 0}
v(x;a) = v(0) + \bigO(ax),
\end{equation}
with $v(0) := \frac{1}{8}(4(\alpha + 2)^{2}-1)$. To obtain asymptotics for $b_{1}$ and $b_{2}$, we will proceed similarly as done in Subsection \ref{Small norm RH problem for x small}, but instead of the global parametrix, we will need the local parametrix. From \eqref{A equation in the lax pair}, \eqref{para A} and \eqref{b in the lax pair}, we have
\begin{align}
& b_{1}(\sqrt{x}) = \lim_{z \to -1} i \sqrt{x} (z+1) \left[ \partial_{z}\Phi(xz;x,a)\Phi^{-1}(xz;x,a) \right]_{12}, \label{b1 lol} \\
& b_{2}(\sqrt{x}) = \lim_{z \to -a} i \sqrt{x} (z+a) \left[ \partial_{z}\Phi(xz;x,a)\Phi^{-1}(xz;x,a) \right]_{12}.
\end{align}
On the other hand, from \eqref{W tilde def}, \eqref{lol 1} and \eqref{SmallNormXto0}, we have as $ax \to 0$ and for $z \in D_{0}$
\begin{equation}\label{lol7}
\hspace{-0.6cm}\Phi(xz;x,a) = R(xz)P_{0}^{(\infty)}(xz) \left( \frac{z+1}{z} \right)^{\hspace{-0.06cm}\sigma_{3}} \hspace{-0.15cm} \begin{pmatrix}
1 & f(xz;x) \\ 0 & 1
\end{pmatrix} (xz)^{\frac{\alpha + 2}{2}\sigma_{3}} \hspace{-0.06cm} \begin{pmatrix}
1 & h(xz) \\ 0 & 1
\end{pmatrix} \hspace{-0.06cm} H_{-ax}(z).
\end{equation}
Thus, by the estimate \eqref{lol38} and a direct calculation, we have
\begin{equation}\label{lol 2}
b_{2}(\sqrt{x}) = i \sqrt{x} (1+\bigO(ax)) \left[ P_{0}^{(\infty)}(0) \begin{pmatrix}
0 & \star \\ 0 & 0
\end{pmatrix} P_{0}^{(\infty)}(0)^{-1} \right]_{12}, \qquad \mbox{ as } ax \to 0,
\end{equation}
where in the above equation
\begin{align}
& \star = \lim_{z \to -a} (z+a) \frac{(z+1)^{2}}{z^{2}} \partial_{z}f(xz;x) \\
& \hspace{0.2cm} = \frac{-(a-1)^{2}x^{2}}{2\pi i} \lim_{z\to -a} (z+a) \int_{-ax}^{0} \frac{x|s|^{\alpha}}{(s-xz)^{2}}ds = \frac{(a-1)^{2}x^{2}}{2\pi i}(ax)^{\alpha}.
\end{align}
Therefore, \eqref{lol 2} becomes
\begin{equation}\label{lol6}
b_{2}(\sqrt{x}) = \frac{(1-a^{-1})^{2}}{2\pi}\sqrt{x} (ax)^{\alpha+2} P_{0,11}^{(\infty)}(0)^{2}(1+\bigO(ax)), \qquad \mbox{ as } ax \to 0.
\end{equation}
To compute $P_{0,11}^{(\infty)}(0)$, we can use \eqref{lol3} and \eqref{lol4}. For $z \in \{ z \in \mathbb{C}: |\arg(z)|<\tfrac{2\pi}{3} \}$ we have
\begin{equation}\label{lol5}
P_{0,11}^{(\infty)}(z) = \sqrt{\pi} I_{\alpha+2}(\sqrt{z}) z^{- \tfrac{\alpha+2}{2}}.
\end{equation}
Taking the limit $z \to 0$ in \eqref{lol5}, and using the small $z$ expansion of $I_{\alpha+2}(z)$ (see \cite[formula 10.30.1]{NIST}) we obtain $P_{0,11}^{(\infty)}(0) = \frac{\sqrt{\pi}}{2^{\alpha+2}\Gamma(\alpha +3)}$. Inserting this value in \eqref{lol6}, we have as $ax \to 0$ that
\begin{align}
& b_{2}(\sqrt{x}) = \frac{(1-a^{-1})^{2}\sqrt{x} (ax)^{\alpha+2}}{2^{2\alpha+5}\Gamma(\alpha+3)^{2}}(1+\bigO(ax)) \\
& \hspace{1.2cm} = \frac{(1-a^{-1})^{2} \sqrt{x}}{2}I_{\alpha+2}^{2}(\sqrt{ax}))(1+\bigO(ax)). \label{lol35}
\end{align}
Similarly, from \eqref{b1 lol} and \eqref{lol7}, we have as $ax \to 0$ that
\begin{equation}\label{lol34}
b_{1}(\sqrt{x}) = i \sqrt{x} \left[ P_{0}^{(\infty)}(0)\sigma_{3}P_{0}^{(\infty)}(0)^{-1} \right]_{12}(1+\bigO(ax)) = -2i \sqrt{x} P_{0,11}^{(\infty)}(0)P_{0,12}^{(\infty)}(0)(1+\bigO(ax)).
\end{equation}
Again, from \eqref{lol4} and \eqref{lol3}, we obtain for $z \in \{ z \in \mathbb{C}: |\arg(z)|<\tfrac{2\pi}{3} \}$ that
\begin{equation}
P_{0,12}^{(\infty)}(z) = \frac{i}{\sqrt{\pi}}K_{\alpha+2}(\sqrt{z})z^{\frac{\alpha+2}{2}}-P_{0,11}(z)h(z)z^{\alpha+2}.
\end{equation}
By taking the limit $z \to 0$ and using \cite[formulas 10.30.2]{NIST}, this gives $P_{0,12}^{(\infty)}(0)= \frac{i}{\sqrt{\pi}}  2^{\alpha+1} \Gamma(\alpha+2)$, and by \eqref{lol34} we have
\begin{equation}
b_{1}(\sqrt{x}) = \frac{\sqrt{x}}{\alpha+2}(1+\bigO(ax)), \qquad \mbox{ as } ax \to 0. \label{lol36}
\end{equation}
With the change of functions \eqref{change of functions}, we obtain from \eqref{lol35} and \eqref{lol36} the small $ax$ asymptotics for $q_{1}(x;a)$ and $q_{2}(x;a)$ given in \eqref{asymp q1 origin} and \eqref{asymp q2 origin}.
We will also need later the asymptotics of $\Phi(z;x,a)$ as $z \to \infty$ and simultaneously $ax \to 0$. This can be obtained from \eqref{W tilde def}, \eqref{Asymp x small Pinf z to inf}, \eqref{SmallNormXto0} and \eqref{lol8}, and by the fact that the global parametrix \eqref{lol9} is independent of $a$ and $x$, we have
\begin{align}
& \Phi(z;x,a) = \left( I+\bigO(\tfrac{ax}{z}) \right) (I+\bigO(z^{-1}))z^{-\frac{1}{4}\sigma_3}Ne^{z^\frac{1}{2}\sigma_3}, & & \mbox{ as } z \to \infty \mbox{ and } ax \to 0, \nonumber \\
& \hspace{1.54cm}= (I+\bigO(z^{-1}))z^{-\frac{1}{4}\sigma_3}Ne^{z^\frac{1}{2}\sigma_3}, & & \mbox{ as } z \to \infty \mbox{ and } ax \to 0. \label{lol15}
\end{align}
\section{Steepest descent analysis of $Y$ as $nyr \to 0$}
\label{Section Steepest descent Y}

An essential ingredient in the steepest descent analysis is the equilibrium measure $\mu$, which in our case is the unique probability measure which minimizes
\begin{equation}\label{min prob measures}
\int_{0}^{\infty}\int_{0}^{\infty} \log \frac{1}{|x-y|} d\tilde\mu(x)d\tilde\mu(y) + \int_{0}^{\infty} y d\tilde\mu(y),
\end{equation}
among all Borel probability measures $\tilde\mu$ on $(0,\infty)$. The unique solution $\mu$ of the minimization problem \eqref{min prob measures} is supported on $\overline{\mathcal{S}}$, where $\mathcal{S} := (0, 4)$, and its density is known as the Marchenko-Pastur law:
\begin{equation}\label{rho h R}
\frac{d\mu(x)}{dx} = \rho(x) = \frac{1}{2\pi} \sqrt{\frac{4-x}{x}}.
\end{equation}
The equilibrium measure $\mu$ satisfies the following identities \cite{SaTo}, known as the Euler-Lagrange variational conditions:
\begin{align}
&\label{var eq}2\int_{\mathcal{S}} \log|x-y|\rho(y)dy - x = \ell, \quad x\in \overline{\mathcal{S}}, \\
&\label{var ineq}2\int_{\mathcal{S}} \log|x-y|\rho(y)dy - x < \ell, \quad x\in (4,\infty),
\end{align}
where $\ell=-2$. We define the $g$-function by 
\begin{equation}
g(z) = \int_{\mathcal{S}}
\log(z-x) d\mu(x),
\end{equation}
where the principal branch of the logarithm is
taken, meaning $g$ is analytic on $\mathbb{C} \setminus (-\infty, 4]$. 
The $g$-function possesses the following properties
\begin{align}
g_+(x) + g_-(x) - x - \ell & = 0, & x & \in \mathcal{S} \label{g1},&\\
2g(x) - x - \ell & < 0, & x & \in (4,\infty),&\label{g2} \\
g_+(x) - g_-(x) & = 2\pi i \int^{4}_x \rho(s) ds, & x & \in \mathcal{S},& \label{g3} \\
g_+(x) - g_-(x) & = 2\pi i , & x & \in (-\infty,0).&
\end{align}
Let us also define
\begin{equation}\label{xi} 
\xi(z) = -\pi \int_{4}^{z} \tilde{\rho}(s)ds \qquad \text{for }z  \in \mathbb{C} \setminus (-\infty, 4],
\end{equation}
where the integration path does not cross $(-\infty, 4]$, and where $\tilde{\rho}(z) = \frac{1}{2\pi} \sqrt{\frac{z-4}{z}}$ is analytic in $\mathbb{C}\setminus \overline{\mathcal{S}}$ and such that $\tilde{\rho}_{\pm}(x) = \pm i \rho(x)$ for $x \in \mathcal{S}$.  By \eqref{g1} and \eqref{g3} we have, 
\begin{equation}\label{g xi 1}
2 \xi_{\pm}(x) = \pm(g_{+}(x)-g_{-}(x)) = 2g_{\pm}(x)-x-\ell, \qquad x \in \mathcal{S}.
\end{equation}
By analytically continuing $\xi-g$ on the whole complex plane from the above expression, we obtain the identity
\begin{equation}\label{g xi 2} 
2\xi(z)=2g(z)-z-\ell, \qquad z \in \mathbb{C}\setminus (-\infty,4].
\end{equation}
The jumps of $\xi$ follow from those of $g$, we have
\begin{align}
\xi_+(x) + \xi_-(x) & = 0, & x &\in \mathcal{S} \label{xi1},\\
2\xi(x) & < 0, & x & \in (4,\infty),\label{xi2} \\
\xi_+(x) - \xi_-(x) & = 2\pi i \int^{4}_x \rho(s) ds, & x & \in \mathcal{S}, \label{xi3} \\
\xi_+(x) - \xi_-(x) & = 2\pi i, & x & \in (-\infty,0). \label{xi4}
\end{align}

\subsection{Transformation to constant jumps}

The weight \eqref{weight} is defined on $(yr,\infty)$. We consider its natural extension
\begin{equation}\label{weight complex plane}
w(z) = (z-y)^{2}z^{\alpha}e^{-nz}, \qquad z \in \mathbb{C}\setminus (-\infty,0],
\end{equation}
where the principal branch is taken for the root, and we define $\Psi(z) = Y(z) w(z)^\frac{\sigma_3}{2}$. The matrix function $\Psi$ satisfies the following RH problem.
\subsubsection*{RH problem for $\Psi$}
\begin{itemize}
\item[(a)] $\Psi: \mathbb{C} \setminus ((-\infty,0] \cup \{y\} \cup [yr, \infty))  \rightarrow \mathbb{C}^{2 \times 2}$ is analytic.
\item[(b)] Let $j_\Psi(z) := \Psi_-(z)^{-1}\Psi_+(z)$. Then,
\begin{align}
j_\Psi(z) &= \begin{pmatrix}
1 & 1 \\
 0 & 1 \\
\end{pmatrix}, &z\in (yr, \infty).\\
j_\Psi(z) &= e^{\pi i \alpha \sigma_3}, &z\in (-\infty,0).
\end{align}
\item[(c)] $\Psi(z) = (I+\bigO(z^{-1}))z^{(n + \frac{\alpha+2}{2})\sigma_3}e^{-\frac{nz}{2}\sigma_{3}}$ as $z \rightarrow \infty$.
\item[(d)] $\Psi$ has the following behaviour near $0$, $y$ and $yr$:
\begin{align}
& \Psi(z) = \bigO(1) z^{\frac{\alpha}{2}\sigma_3}, & & \mbox{ as } z \to 0, \\[0.26cm]
& \Psi(z) = \bigO(1) (z-y)^{\sigma_3}, & & \mbox{ as } z \to y, \\[0.02cm]
& \Psi(z) = \bigO(1) \begin{pmatrix}
1 & -\frac{\log(yr-z)}{2 \pi i} \\ 0 & 1
\end{pmatrix}, & & \mbox{ as } z \to yr,
\end{align}
where in the three above asymptotics it can be verified that the $\bigO(1)$ terms are analytic in a neighbourhood of their respective point.
\end{itemize}

\subsection{Opening of the lenses}
We now perform the step of opening the lenses. Since the jumps for $\Psi$ are constant, the lens contours are unconstrained. In a subsequent transformation we will use the $g$-function to normalise the RH problem at infinity, at which point the lens contours will be required to stay within a region in which they converge to the identity matrix as $n \rightarrow \infty$. Let $D_{0}$ and $D_{4}$ denote small but fixed open discs centred at $0$ and $4$ respectively and $U = D_{0} \cup D_{4}$. Note that since $nyr \to 0$ as $n \to \infty$, the points $0$, $y$ and $yr$ lie in $D_0$ for sufficiently large $n$. Let us define $\gamma = (yr, 4)$. We now make the transformation 
\begin{equation}\label{S Psi transformation}
S(z) := \Psi(z) K(z)
\end{equation}
where $K$ is a piecewise function designed to open the lens,
\beq
K(z):= \left\{
\begin{array}{lr}
I,  & \mbox{for } z \in \mathbb{C} \setminus (\Omega^{\gamma}_+ \cup \Omega^{\gamma}_-), \\
\begin{pmatrix}1&0\\-1&1\end{pmatrix}, & \mbox{for } z \in \Omega^{\gamma}_+,\\
\begin{pmatrix}1&0\\1&1\end{pmatrix}, & \mbox{for } z \in \Omega^{\gamma}_-.\\
\end{array}
\right.
\eeq
The regions $\Omega^{\gamma}_+$ and $\Omega^{\gamma}_-$ are shown in Figure \ref{open lens contours}, as well as their boundaries $\partial \Omega_{+}^{\gamma} = \Sigma_{+}^{\gamma} \cup \gamma$ and $\partial \Omega_{-}^{\gamma} = \Sigma_{-}^{\gamma} \cup \gamma$. 
\begin{figure}[t]
    \begin{center}
    \setlength{\unitlength}{1truemm}
    \begin{picture}(100,55)(28,10)
        \put(58,40){\line(1,0){70}}
        \put(50,40){\line(-1,0){30}}
        \put(85,39.9){\thicklines\vector(1,0){.0001}}
        \put(82,44.5){$\Omega_{+}^{\gamma}$}
        \put(82,33){$\Omega_{-}^{\gamma}$}
        \put(120,39.9){\thicklines\vector(1,0){.0001}}
        \put(35,39.9){\thicklines\vector(1,0){.0001}}
        \put(50,40){\thicklines\circle*{1.2}}
        \put(49,36.8){$0$}
        \put(54,40){\thicklines\circle*{1.2}}
        \put(52,36.8){$y$}
        \put(58,40){\thicklines\circle*{1.2}}
        \put(57.2,36.4){$yr$}
        \put(110,40){\thicklines\circle*{1.2}}
        \put(109,36.4){$4$}
        \qbezier(58,40)(84,68)(110,40)
        \put(85,54){\thicklines\vector(1,0){.0001}}
        \put(82,55.5){$\Sigma_{+}^{\gamma}$}
        \qbezier(58,40)(84,12)(110,40)
        \put(85,26){\thicklines\vector(1,0){.0001}}
        \put(82,22){$\Sigma_{-}^{\gamma}$}
    \end{picture}
    \caption{Jump contours for the RH problem for $S$. The lens contours are labelled $\Sigma^\gamma_+$ and $\Sigma^\gamma_-$ while the upper and lower lens regions are labelled $\Omega^\gamma_+$ and $\Omega^\gamma_-$. \label{open lens contours}}
\end{center}
\end{figure}
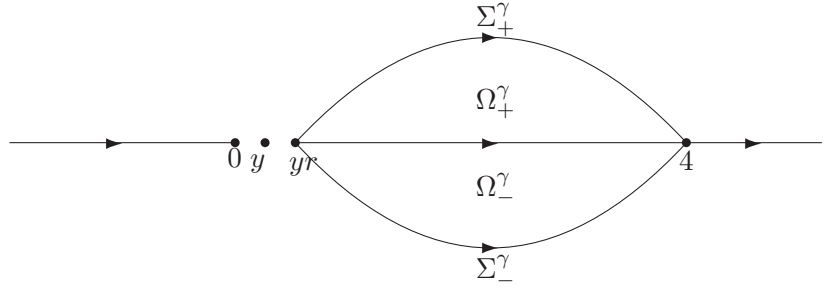
The function $S$ satisfies the following RH problem.
\subsubsection*{RH problem for $S$}
\begin{itemize}
\item[(a)] $S: \mathbb{C} \setminus ((-\infty,0] \cup \{y \} \cup [yr, \infty) \cup \Sigma^\gamma) \rightarrow \mathbb{C}^{2 \times 2}$ is analytic, where $\Sigma^{\gamma} = \Sigma_{+}^{\gamma}\cup\Sigma_{-}^{\gamma}$ is shown in Figure \ref{open lens contours}.
\item[(b)] Let $j_S(z) := S_-(z)^{-1}S_+(z)$. We have,
\begin{align}
j_S(z) &= \begin{pmatrix}
 0 & 1 \\
 -1 & 0 \\
\end{pmatrix}, &z\in \gamma,\\
j_S(z) &= \begin{pmatrix}
1 & 1\\
 0 & 1 \\
\end{pmatrix}, &z\in (4, \infty),\\
j_S(z) &= e^{\pi i \alpha \sigma_3}, &z\in (-\infty,0),\\
j_S(z) &= \begin{pmatrix}
 1 & 0 \\
 1 & 1 \\
\end{pmatrix}, &z \in \Sigma^\gamma.
\end{align}
\item[(c)] $S(z) = (I+\bigO(z^{-1}))z^{(n + \frac{\alpha +2}{2})\sigma_3}e^{-\frac{nz}{2}\sigma_{3}}$ as $z \rightarrow \infty$.
\item[(d)] Near $0$, $y$, $yr$ and $4$, the behaviour of $S$ takes the form
\begin{equation}
\begin{array}{l l}
\displaystyle S(z) = \bigO(1) z^{\frac{\alpha}{2}\sigma_3}, & \displaystyle \mbox{as } z \to 0, \\[0.25cm]
\displaystyle S(z) = \bigO(1) (z-y)^{\sigma_3}, & \displaystyle \mbox{as } z \to y, \\[0.1cm]
\displaystyle S(z) = \bigO(1) \begin{pmatrix}
1 & -\frac{\log(yr-z)}{2 \pi i} \\ 0 & 1
\end{pmatrix} K(z), & \displaystyle \mbox{as } z \to yr, \\[0.1cm]
\displaystyle S(z) = \bigO(1)K(z), & \displaystyle \mbox{as } z \to 4,
\end{array}
\end{equation}
where in the above asymptotics the $\bigO$ terms are analytic in a neighbourhood of their respective point.
\end{itemize}

\subsection{Normalisation at infinity}

The next transformation takes the form,
\begin{equation}\label{T S Transformation}
T(z) = e^{-\frac{n \ell \sigma_3}{2}} S(z) \times \begin{cases} e^{-n\xi(z) \sigma_3},&\mbox{if $z \in \mathbb{C} \setminus \overline{U},$}\\
I  ,&\mbox{if $z \in U.$ }\end{cases}
\end{equation}
The above transformation has the effect of normalising the problem at infinity.

\subsubsection*{RH problem for $T$}
\begin{itemize}
\item[(a)] $T: \mathbb{C} \setminus ((-\infty,0] \cup \{y \} \cup [yr, \infty) \cup \Sigma^\gamma \cup \partial U) \rightarrow \mathbb{C}^{2 \times 2}$ is analytic.
\item[(b)] Let $j_T(z) := T_-(z)^{-1}T_+(z)$. Then,
\begin{align*}
j_T(z) &= j_S(z), & z &\in ((-\infty,0) \cup (yr, \infty) \cup \Sigma^\gamma) \cap U,\\ 
j_T(z) &= \begin{pmatrix}
 0 & 1 \\
 -1 & 0 \\
\end{pmatrix}, &z &\in \gamma \setminus \overline{U},\\
j_T(z) &= \begin{pmatrix}
1&e^{2n \xi(z)} \\
 0 & 1 \\
\end{pmatrix}, &z&\in (4, \infty) \setminus \overline{U},\\
j_T(z) &= \begin{pmatrix}
1 & 0 \\
e^{-2 n \xi(z)} & 1 \\
\end{pmatrix}, &z &\in \Sigma^\gamma \setminus \overline{U},\\
j_T(z) &= e^{\pi i \alpha \sigma_3}, &z&\in (-\infty,0),\\
j_T(z) &= e^{-n  \xi(z) \sigma_3}, & z &\in \partial U,
\end{align*}
where the orientation of $\partial U$ is clockwise.
\item[(c)] As $z \rightarrow \infty$,
\beq
T(z) = (1+ \bigO(z^{-1})) z^{\frac{\alpha+2}{2} \sigma_3}.
\eeq
\item[(d)] Near the endpoints, the behaviour of $T$ takes the form
\begin{equation}
\begin{array}{l l}
\displaystyle T(z) = \bigO(1) z^{\frac{\alpha}{2}\sigma_3}, & \displaystyle \mbox{as } z \to 0, \\[0.25cm]
\displaystyle T(z) = \bigO(1) (z-y)^{\sigma_3}, & \displaystyle \mbox{as } z \to y, \\[0.1cm]
\displaystyle T(z) = \bigO(1) \begin{pmatrix}
1 & -\frac{\log(yr-z)}{2 \pi i} \\ 0 & 1
\end{pmatrix} K(z), & \displaystyle \mbox{as } z \to yr, \\[0.1cm]
\displaystyle T(z) = \bigO(1), & \displaystyle \mbox{as } z \to 4.
\end{array}
\end{equation}
\end{itemize}

\subsection{Global parametrix}

Finally we define the global parametrix as a function $P^{(\infty)}$ satisfying the following RH problem.
\subsubsection*{RH problem for $P^{(\infty)}$}
\begin{itemize}
\item[(a)] $P^{(\infty)}: \mathbb{C} \setminus (-\infty,4] \rightarrow \mathbb{C}^{2 \times 2}$ is analytic.
\item[(b)] $P^{(\infty)}$ has the jump relations
\begin{align}
P^{(\infty)}_+(z) &= P^{(\infty)}_-(z) \begin{pmatrix}0 &1 \\ -1 &0\end{pmatrix},\qquad  z \in \mathcal{S} \\
P^{(\infty)}_+(z) &= P^{(\infty)}_-(z) e^{\pi i \alpha \sigma_3}, \qquad z \in (-\infty,0).
\end{align}
\item[(c)] As $z \rightarrow \infty$,
\beq
P^{(\infty)}(z) = (I + \bigO(z^{-1}))z^{\frac{\alpha + 2}{2}\sigma_3}.
\eeq
\item[(d)] As $z \rightarrow \hat{z} \in \partial \mathcal{S} = \{0,4\}$, we have
\beq
P^{(\infty)}(z) = \bigO((z-\hat{z})^{-\frac{1}{4}}).
\eeq
\end{itemize}
The solution is explicitly given by
\begin{equation}
\label{pinf copy paste}
P^{(\infty)}(z) =  N^{-1} \left(\frac{z-4}{z} \right)^{-\frac{\sigma_3}{4}} N \varphi\left(\frac{z}{2}-1\right)^{\frac{\alpha+2}{2}\sigma_3},
\end{equation}
where $\varphi(z) = z + \sqrt{z^2 - 1}$ is analytic in $\mathbb{C}\setminus [-1,1]$.
We now need to construct local parametricies valid in the fixed open discs $D_{0}$ and $D_{4}$ around $0$ and $4$. 

\subsection{Local parametrix near $4$}

\subsubsection*{RH problem for $P^{(4)}$}
\begin{itemize}
\item[(a)] $P^{(4)}: D_{4}\setminus (\Sigma^{\gamma} \cup \mathbb{R}) \rightarrow \mathbb{C}^{2 \times 2}$ is analytic.
\item[(b)] $P^{(4)}$ has the jump relations
\begin{equation}\label{jumps P3}
\begin{array}{l l}
P^{(4)}_{+}(z) = P^{(4)}_{-}(z) \begin{pmatrix}
0 & 1 \\ -1 & 0
\end{pmatrix}, & \mbox{ on } (-\infty,4)\cap D_{4}, \\

P^{(4)}_{+}(z) = P^{(4)}_{-}(z) \begin{pmatrix}
 1 & 1 \\
 0 & 1
\end{pmatrix}, & \mbox{ on } (4,\infty)\cap D_{4}, \\

P^{(4)}_{+}(z) = P^{(4)}_{-}(z) \begin{pmatrix}
 1 & 0  \\ 1 & 1
\end{pmatrix}, & \mbox{ on } \Sigma^{\gamma}\cap D_{4}.
\end{array}
\end{equation}
\item[(c)] As $z \rightarrow 4$, $P^{(4)}(z) = \bigO(1)$.
\item[(d)] As $n \to \infty$, $P^{(4)}(z) = \left(I+ \bigO(n^{-1})\right) P^{(\infty)}(z) e^{n\xi(z)\sigma_{3}}$ uniformly for $z \in \partial D_{4}$.
\end{itemize}
The solution $P^{(4)}$ can be constructed in term of the solution $\widetilde{\Upsilon}$ of the Airy model RH problem parametrix, which is presented in the appendix, see Subsection \ref{Subsection: Airy model RH problem}. The local parametrix inside $D_{4}$ can then be written as
\begin{equation}\label{P at 4}
P^{(4)}(z) = \widetilde{E}(z) \widetilde\Upsilon(n^{2/3}\widetilde{f}(z)),
\end{equation}
where $\widetilde{f}(z)$ is given by 
\begin{equation}
\widetilde{f}(z) := \left( -\frac{3}{2} \xi(z) \right)^{2/3}.
\end{equation}
From the definition of $\xi$ given by \eqref{xi}, $\widetilde{f}$ is a conformal map from a neighbourhood of $4$ to a neighbourhood of $0$. The matrix function $\widetilde{E}$ is defined in $D_{4}$ by
\begin{equation}
\widetilde{E}(z) = P^{(\infty)}(z) N^{-1} \widetilde{f}(z)^{\frac{\sigma_{3}}{4}}n^{\frac{\sigma_{3}}{6}},
\end{equation}
where the principal branch is taken for $(\cdot)^{\frac{1}{4}}$. It can be directly verified from the RH problem for $P^{(\infty)}$ and the definition of $\widetilde{f}$ that $\widetilde{E}$ is analytic in $D_{4}$, and from the properties of $\widetilde{\Upsilon}$ (presented in Subsection \ref{Subsection: Airy model RH problem}) that $P^{(4)}$ given by \eqref{P at 4} satisfies indeed the above RH problem.

\subsection{Local parametrix near $0$}

Inside $D_0$ we require a local parametrix satisfying the following RH problem.
\subsubsection*{RH problem for $P^{(0)}$}
\label{local parametrix}
\begin{itemize}
\item[(a)] $P^{(0)} : D_{0} \setminus ((-\infty,0] \cup \{y \} \cup [yr, \infty) \cup \Sigma^\gamma) \rightarrow \mathbb{C}^{2 \times 2}$ is analytic.
\item[(b)] $P^{(0)}$ has the jump relations
\begin{equation}\label{jumps P3}
\begin{array}{l l}
P^{(0)}_{+}(z) = P^{(0)}_{-}(z) \begin{pmatrix}
0 & 1 \\ -1 & 0
\end{pmatrix}, & \mbox{ on } (yr,\infty)\cap D_{0}, \\

P^{(0)}_{+}(z) = P^{(0)}_{-}(z) e^{\pi i \alpha \sigma_{3}}, & \mbox{ on } (-\infty,0)\cap D_{0}, \\

P^{(0)}_{+}(z) = P^{(0)}_{-}(z) \begin{pmatrix}
 1 & 0  \\ 1 & 1
\end{pmatrix}, & \mbox{ on } \Sigma^{\gamma}\cap D_{0}.
\end{array}
\end{equation}
\item[(c)] Near $0$, $y$ and $yr$, the behaviour of $P^{(0)}$ takes the form
\begin{equation}
\begin{array}{l l}
\displaystyle P^{(0)}(z) = \bigO(1) z^{\frac{\alpha}{2}\sigma_3}, & \displaystyle \mbox{as } z \to 0, \\[0.25cm]
\displaystyle P^{(0)}(z) = \bigO(1) (z-y)^{\sigma_3}, & \displaystyle \mbox{as } z \to y, \\[0.1cm]
\displaystyle P^{(0)}(z) = \bigO(1) \begin{pmatrix}
1 & -\frac{\log(yr-z)}{2 \pi i} \\ 0 & 1
\end{pmatrix} K(z), & \displaystyle \mbox{as } z \to yr, \\[0.1cm]
\end{array}
\end{equation}
where in the above asymptotics the $\bigO$ terms are analytic in a neighbourhood of their respective point.
\item[(d)] As $n\rightarrow \infty$, $P^{(0)}(z) = (I + o(1)) P^{(\infty)}(z) e^{n \xi(z)\sigma_3}$ uniformly for $z \in \partial D_{0}$.
\end{itemize}
The solution $P^{(0)}$ uses the model RH problem $\Phi$ presented in Section \ref{section RH model Phi}, with the parameters $x$ and $a$ chosen such that
\begin{equation}\label{relations a r and x y}
x = n^{2}f(y) \qquad \mbox{ and } \qquad a = \frac{f(yr)}{f(y)},
\end{equation}
where $f$ is given by
\begin{equation}\label{conformal map at 0}
f(z) = -(\xi(z) - \xi(0))^2.
\end{equation}
From the definition of $\xi$ given by \eqref{xi}, $f$ is a conformal map from a neighbourhood of $0$ to a neighbourhood of $0$, and satisfies $f^{\prime}(0) = 4$. Note that since $nyr \to 0$ as $n \to \infty$, this implies that $x = 4n^{2}y(1+\bigO(y))$ and $a = r(1+\bigO(yr))$ as $n \to \infty$.
\begin{lemma}\label{lemma: local param}
As $n \to \infty$ and simultaneously $nyr\to 0$, the matrix function
\begin{equation} \label{Px*}
P^{(0)}(z) = E(z)  \sigma_3 \Phi(-n^2 f(z); n^2 f(y), f(yr)/f(y)) \sigma_3 e^{\frac{\pi i \alpha }{2}\theta(z) \sigma_3}
\end{equation}
satisfies the RH problem for $P^{(0)}$, where $E$ is the analytic function in $D_{0}$ given by
\begin{equation}
E(z) = (-1)^n P^{(\infty)}(z) e^{-\frac{\pi i \alpha }{2}\theta(z)\sigma_{3}} N (-f(z))^\frac{\sigma_3}{4}n^\frac{\sigma_3}{2},
\end{equation}
the function $f$ is given by \eqref{conformal map at 0}, and $\Phi(z;x,a)$ is the model RH problem introduced in Section \ref{PhiRH}. The $o(1)$ in the condition (d) of the RH problem for $P^{(0)}$ can be specified as
\begin{equation}\label{small o of 1}
\bigO(\max\{n^{-1},nry\}) = \left\{ \begin{array}{l l}
\bigO(n^{-1}), & \mbox{if } (n^{2}y,r) \mbox{ are in a compact subset of } (0,\infty)\times (1,\infty), \\
\bigO(n^{-1}), & \mbox{if } n^{2} r y \to 0, \\
\bigO\left( nry \right), & \mbox{if } nry \to 0 \mbox{ and } (r-1)n^{2}y \to \infty.
\end{array} \right.
\end{equation}
\end{lemma}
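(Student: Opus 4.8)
The plan is to verify directly that the matrix $P^{(0)}$ defined in \eqref{Px*} satisfies conditions (a)--(d) of the RH problem for $P^{(0)}$, using the properties of the model problem $\Phi$ from Section~\ref{section RH model Phi} and of the global parametrix \eqref{pinf copy paste}. The underlying structural fact is that $z\mapsto -n^{2}f(z)$ is conformal on $D_{0}$ and maps $0,y,yr$ to $0,-n^{2}f(y),-n^{2}f(yr)$, which are precisely the three singular points $0,-x,-ax$ of $\Phi$ once $x=n^{2}f(y)$ and $a=f(yr)/f(y)$; consequently $\Phi(-n^{2}f(z);\cdot)$ transplants the jump structure of $\Phi$ onto the contours of the $S$-problem inside $D_{0}$.

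The first step is to check that the prefactor $E(z)$ is analytic in $D_{0}$. The only candidates for non-analyticity are the cut $(-\infty,0]$ and the point $0$: across $(-\infty,0]\cap D_{0}$ the jump $P^{(\infty)}_{+}=P^{(\infty)}_{-}e^{\pi i\alpha\sigma_{3}}$, the sign flip of $\theta$, and the jump of the chosen branch of $(-f)^{\sigma_{3}/4}$ must cancel (this is exactly what the factor $(-1)^{n}$ and the branch conventions are arranged to do), while near $0$ the $\bigO(z^{-1/4})$ singularity of $P^{(\infty)}$ is compensated by $(-f(z))^{\sigma_{3}/4}$, since $f$ has a simple zero at $0$. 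Once $E$ is analytic, conditions (a) and (c) are immediate: analyticity of $P^{(0)}$ off the contours is clear, and the local behaviours near $0$, $y$ and $yr$ follow from feeding \eqref{Psi0}, \eqref{Psi1} and \eqref{Psia} into \eqref{Px*}. For condition (b) one compares, on each arc of $\Sigma^{\gamma}\cap D_{0}$ and on $(-\infty,0)\cap D_{0}$ and $(yr,\infty)\cap D_{0}$, the jump of $\Phi$ on the image arc $\Sigma_{j}$ conjugated by $\sigma_{3}$ and by the locally constant matrix $e^{\frac{\pi i\alpha}{2}\theta(z)\sigma_{3}}$, taking into account the orientation reversal that $z\mapsto -n^{2}f(z)$ induces on $(-\infty,0]$ and $[yr,\infty)$; a short check shows each of these reproduces exactly the corresponding jump matrix of $P^{(0)}$.

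The substance of the lemma is condition (d), the matching on $\partial D_{0}$, together with the error estimate \eqref{small o of 1}. Since $|{-n^{2}f(z)}|\asymp n^{2}$ on $\partial D_{0}$, I would insert the large-argument asymptotics of $\Phi$ into \eqref{Px*}. Using $-f(z)=(\xi(z)-\xi(0))^{2}$ from \eqref{conformal map at 0}, so that $(-n^{2}f(z))^{1/2}=n(\xi(z)-\xi(0))$ for the relevant branch, together with $\xi(0)=\pi i$ (so that $e^{-n\pi i\sigma_{3}}=(-1)^{n}I$ absorbs the $(-1)^{n}$ in $E$ and turns $e^{(-n^{2}f)^{1/2}\sigma_{3}}$ into $e^{n\xi(z)\sigma_{3}}$) and the elementary relations $\sigma_{3}N\sigma_{3}=N^{-1}$ and $\sigma_{3}(-n^{2}f)^{-\sigma_{3}/4}\sigma_{3}=(-n^{2}f)^{-\sigma_{3}/4}$, a direct computation collapses the right-hand side of \eqref{Px*} into $\big(I+\bigO(\varepsilon_{n})\big)P^{(\infty)}(z)e^{n\xi(z)\sigma_{3}}$. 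The size $\varepsilon_{n}$ of the error is then dictated by which uniform expansion of $\Phi$ is available in the given regime. In Case~1 the parameters $x=n^{2}f(y)$ and $a=f(yr)/f(y)$ stay in a compact subset of $(0,\infty)\times(1,\infty)$, so one uses \eqref{Psic} with $\Phi_{1}(x;a)=\bigO(1)$ and obtains $\varepsilon_{n}=\bigO(n^{-2})$; in Case~2 one has $ax\to0$ and uses \eqref{lol15}, again with $\varepsilon_{n}=\bigO(n^{-2})$; in Case~3 one has $(a-1)x\to\infty$ while $\sqrt{|{-n^{2}f(z)}|}/(ax)\asymp n/(rs)\to\infty$ on $\partial D_{0}$, so the refined expansion \eqref{lol16} applies and its dominant correction is of order $ax/\sqrt{|{-n^{2}f(z)}|}\asymp rs/n\asymp nry$. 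Bounding these by $\bigO(\max\{n^{-1},nry\})$ yields \eqref{small o of 1}.

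The main obstacle is exactly the Case~3 matching: one cannot use the plain $z\to\infty$ expansion \eqref{Psic}, because its coefficients $\Phi_{1}(x;a),\Phi_{2}(x;a),\dots$ blow up as $(a-1)x\to\infty$ (already $v(x;a)\sim-\tfrac{1}{2}ax$ by \eqref{v asymptotics for x to inf}), so one genuinely needs the resummed expansion \eqref{lol16}, which is uniform in both $z\to\infty$ and $(a-1)x\to\infty$ and was established in Section~\ref{Section asymptotics of the special solutions}; part of the work is to confirm that $\partial D_{0}$ lies in its range of validity $\sqrt{z}/(ax)\to\infty$, which follows from $nry\to0$ in Case~3. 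The remaining bookkeeping — the orientation reversals, the branch choices inside $E$, and the constant $\xi(0)=\pi i$ — is routine once organised carefully.
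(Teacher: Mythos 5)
Your proposal follows the same route as the paper's (very terse) proof: analyticity of $E$, transplantation of the jump and local conditions of $\Phi$ under $z\mapsto -n^{2}f(z)$, and the matching on $\partial D_{0}$ via the three expansions \eqref{Psic}, \eqref{lol15} and \eqref{lol16} in the three regimes. Your identification of Case~3 as the delicate one, and of \eqref{lol16} as the resummed expansion needed there (with the check that $\sqrt{|n^{2}f(z)|}/(ax)\asymp 1/(nry)\to\infty$ on $\partial D_{0}$), is exactly right.

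There is, however, one error in your error bookkeeping for Cases~1 and~2: the matching error there is \emph{not} $\bigO(n^{-2})$. Writing $\zeta=-n^{2}f(z)$, the relative error on $\partial D_{0}$ is $E(z)\,\sigma_{3}\big(\Phi_{1}(x)/\zeta+\cdots\big)\sigma_{3}E(z)^{-1}$, and $E(z)$ contains the diagonal factor $(-f(z))^{\sigma_{3}/4}n^{\sigma_{3}/2}$; conjugation by $\mathrm{diag}(d,d^{-1})$ with $d=(-f(z))^{1/4}n^{1/2}$ multiplies the $(1,2)$ entry by $d^{2}=n\sqrt{-f(z)}$. Hence the term $iv(x;a)/\zeta$ in $(\Phi_{1})_{12}$ produces a contribution of size $v(x;a)\big/\big(n\sqrt{-f(z)}\big)=\bigO(n^{-1})$ in Case~1 (and likewise for the $\bigO(\zeta^{-1})$ of \eqref{lol15} in Case~2), which is precisely the $J_{1}(z)/n$ term recorded in \eqref{J_1 inside D_0} and later used to compute $R_{1}$. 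Your final bound $\bigO(\max\{n^{-1},nry\})$ is therefore still correct and the lemma is unaffected, but the intermediate claim $\varepsilon_{n}=\bigO(n^{-2})$ is false, and believing it would break the subsequent computation of $R^{(1)}$ and of the differential identity. (In Case~3 no such amplification occurs because the error matrix in \eqref{lol16} is already placed to the right of $z^{-\sigma_{3}/4}$, i.e.\ in the conjugated position, which is why its $(1,2)$ entry $\bigO(ax/\sqrt{z})\asymp nry$ is read off directly, as you did.)
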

\begin{proof}
The analyticity of $E$ inside $D_{0}$ follows from the RH problem for $P^{(\infty)}$ and the definition of $f$. By definition of $\Phi$ (see Section \ref{section RH model Phi}), $P^{(0)}$ satisfies the RH problem for $P^{0}$. The explicit forms of $o(1)$ given by \eqref{small o of 1} follow from \eqref{Psic} (for $(x,a)$ in a compact subset of $(0,\infty)\times (1,\infty)$), \eqref{lol15} (for $ax \to 0$) and \eqref{lol16} (for $(a-1)x \to \infty$ and $\frac{ax}{n} \to 0$).
\end{proof}
Finally, we define $R(z)$ as follows
\begin{equation}\label{R final Y}
R(z)=\begin{cases}
T(z)P^{(\infty)}(z)^{-1},&\mbox{ for $z \in \mathbb{C} \setminus \overline{U}$, }\\
T(z)P^{(0)}(z)^{-1},&\mbox{ for $z\in D_{0}$,} \\
T(z)P^{(4)}(z)^{-1},&\mbox{ for $z\in D_{4}$.} \\
\end{cases}
\end{equation}
Using the above definition we can derive the following RH problem for $R$.
\subsubsection*{RH problem for $R$}

\begin{itemize}
\item[(a)] $R: \mathbb{C} \setminus \Sigma_{R}\rightarrow \mathbb{C}^{2 \times 2} $ is analytic, where $\Sigma_{R} = ( (4,\infty) \cup \Sigma^\gamma \cup  \partial U ) \setminus U$.

\item[(b)] Let $j_R(z) := R_-(z)^{-1}R_+(z)$. We have
\begin{align}
\label{jumps of R: lens}
j_R(z) &= P^{(\infty)}(z) \begin{pmatrix} 1 & 0 \\
 e^{-2 n \xi(z)} & 1 \end{pmatrix} P^{(\infty)}(z)^{-1}, &z \in \Sigma^\gamma \setminus \overline{U},\\ 
\label{jumps of R: gamma}
j_R(z) &=  P^{(\infty)}(z) \begin{pmatrix}  1 & e^{2n\xi(z) }  \\ 0 & 1\end{pmatrix} P^{(\infty)}(z)^{-1}, & z \in (4,\infty) \setminus \overline{D_{4}},\\ 
\label{jumps of R: disc 4}
j_R(z) &= P^{(4)}(z) \begin{pmatrix} e^{-n \xi(z)} & 0 \\
 0 & e^{n \xi(z)} \end{pmatrix} P^{(\infty)}(z)^{-1}, &z\in\partial D_{4}, \\
\label{jumps of R: disc 0}
j_R(z) &= P^{(0)}(z) \begin{pmatrix} e^{-n \xi(z)} & 0 \\
 0 & e^{n \xi(z)} \end{pmatrix} P^{(\infty)}(z)^{-1}, &z\in\partial D_{0}.
\end{align}
\item[(c)] As

\vspace{-1.57cm}\begin{equation}\label{R asymp inf Y}
\hspace{-0.4cm} z \to \infty, \mbox{ we have }R(z) = I + \frac{R_1}{z} + \bigO(z^{-2}).
\end{equation}
\item[(d)] As $z \to b \in \{0,y,yr,4\}$, $R(z) = \bigO(1)$.
\end{itemize}
From \eqref{pinf copy paste}, \eqref{P at 4} and \eqref{Px*}, as $n \to \infty$ we have
\begin{align}
& j_R(z) = I + \bigO(\max\{n^{-1},nry\}), & & \mbox{ uniformly for } z \in \partial D_{0}, \label{jumps R estimate 1} \\
& j_R(z) = I + \bigO(n^{-1}), & & \mbox{ uniformly for } z \in \partial D_{4}, \label{jumps R estimate 1 bis} \\
& j_R(z) = I + \bigO(e^{-cn}), & & \mbox{ uniformly for } z \in \Sigma_{R} \setminus \partial U. \label{jumps R estimate 2}
\end{align}
where $c > 0$ is a constant independent of $n$. It follows from standard theory for small norm RH problem that $R$ exists for $n$ sufficiently large and satisfies
\begin{equation}\label{R = I + small}
R(z) = I + \bigO(\max\{n^{-1},nry\}) \qquad \mbox{ and } \qquad  \partial_{z}R(z) = \bigO(\max\{n^{-1},nry\}),
\end{equation}
uniformly for  $z$ in compact subsets of $\mathbb{C}\setminus \Sigma_{R}$.

\subsection{Computation of $R_1$}
The quantity $R_1$ can be computed via a perturbative calculation. From \eqref{jumps R estimate 1} and \eqref{jumps R estimate 1 bis} we can write for $z \in \Sigma_{R}$,
\begin{equation}\label{JR as2}
j_R(z)=I+\frac{1}{n}J_1(z)+\bigO(n^{-2}),\qquad n\to\infty,
\end{equation}
where the matrix $J_1(z)$ is non-zero only on $\partial U$, satisfies $J_{1}(z) = \bigO(1)$ uniformly for $z \in \partial D_{4}$ and 
\begin{equation}\label{J_1 estimate}
J_{1}(z) = \bigO(\max\{1,n^{2}ry\}), \qquad \mbox{uniformly for }z \in \partial D_{0}.
\end{equation}
Therefore, by a perturbative analysis of $R$, we have
\begin{equation}\label{R large n}
R(z)=I+R^{(1)}(z)n^{-1}+\bigO(\max\{n^{-2},(nry)^{2}\}),\qquad z\in \mathbb{C} \setminus \Sigma_{R},
\end{equation}
where
\begin{equation}
R^{(1)}(z) = \bigO(\max\{1,n^{2}ry\}), \qquad \mbox{uniformly for }z \in \mathbb{C}\setminus \Sigma_{R}.
\end{equation}
The quantity $R^{(1)}(z)$ may be expressed in terms of $J_1(z)$ by substituting (\ref{JR as2}) into the jump relation $R_+(z)=R_-(z)j_R(z)$, from which we obtain the following RH problem for $R^{(1)}$.
\subsubsection*{RH problem for $R^{(1)}$}
\begin{itemize}
\item[(a)] $R^{(1)}:\mathbb C\setminus \partial U\to\mathbb C^{2\times 2}$ is analytic,
\item[(b)] $R_+^{(1)}(z)=R_-^{(1)}(z)+J_1(z)$ for $z\in \partial U$,
\item[(c)] $R^{(1)}(z)\to 0$ as $z\to\infty$. 
\end{itemize}
The above RH problem can be solved explicitly in terms of a Cauchy transform,
\begin{equation}\label{R^1 eqn}
R^{(1)}(z)=\frac{1}{2\pi i}\int_{\partial U}\frac{J_1(\xi)}{\xi-z}d\xi,
\end{equation}
where the integral is taken entry-wise. Explicit computations using \eqref{Psic}, \eqref{Px*} and \eqref{jumps of R: disc 0} gives for $z \in \partial D_{0}$
\begin{equation}\label{J_1 inside D_0}
J_1(z) = 
\frac{v(n^2 f(y);f(ry)/f(y))}{2 \sqrt{-f(z)}}P^{(\infty)}(z)\begin{pmatrix}
 -1 & -ie^{-i \pi  \alpha \theta(z)}  \\
 -ie^{i \pi  \alpha \theta(z) } & 1 \\
\end{pmatrix}P^{(\infty)}(z)^{-1}.
\end{equation}
The function $v$ is the special function appearing in the model problem $\Phi$. By using \eqref{P at 4}, \eqref{jumps of R: disc 4} and \eqref{Asymptotics Airy}, the term $J_{1}(z)$ on $\partial D_{4}$ is given by,
\begin{equation}\label{J_1 inside D_4}
J_1(z) = \frac{1}{8\widetilde{f}(z)^{3/2}} P^{(\infty)}(z)
\begin{pmatrix}
 \frac{1}{6} & i \\
 i & -\frac{1}{6} \\
\end{pmatrix} P^{(\infty)}(z)^{-1}.
\end{equation}
Putting \eqref{J_1 inside D_0} and \eqref{J_1 inside D_4} in \eqref{R^1 eqn} gives (after a residue calculation)
\begin{multline}
R^{(1)}(z) = \frac{v(n^2 f(y);f(ry)/f(y))}{2z} \begin{pmatrix}
1 & i \\ i & -1
\end{pmatrix} + \frac{5}{12(z-4)^{2}}\begin{pmatrix}
-1 & i \\ i & 1
\end{pmatrix} \\ + \frac{1}{16(z-4)}\begin{pmatrix}
1-4(\alpha+2)^{2} & \frac{i}{3}\left( 12(\alpha+2)^{2}+24(\alpha+2)+11 \right) \\
\frac{i}{3}\left( 12(\alpha+2)^{2}-24(\alpha+2)+11 \right) & 4(\alpha+2)^{2}-1
\end{pmatrix}.
\end{multline}
Therefore, we have
\begin{multline}
\hspace{-0.8cm}R_{1} = \lim_{z\to \infty} z(R(z)-I) = \frac{v(n^2 f(y);f(ry)/f(y))}{2n}\begin{pmatrix}
1 & i \\ i & -1
\end{pmatrix} + \\ \hspace{-0.8cm} \frac{1}{16n}\begin{pmatrix}
1-4(\alpha+2)^{2} & \frac{i}{3}\left( 12(\alpha+2)^{2}+24(\alpha+2)+11 \right) \\
\frac{i}{3}\left( 12(\alpha+2)^{2}-24(\alpha+2)+11 \right) & 4(\alpha+2)^{2}-1
\end{pmatrix} + \bigO(\max\{n^{-2},(nry)^{2}\}).
\end{multline}
In particular, one has
\begin{equation}\label{Trace R1}
n \mathrm{Tr}(R_1 \sigma_3) = v(n^2 f(y);f(ry)/f(y)) - v(0) + \bigO(\max\{n^{-1},n^{-1}(rn^{2}y)^{2}\}) \qquad \mbox{ as } n \to \infty,
\end{equation}
where $v(0) = \frac{1}{8}(4 (\alpha+2)^2 - 1)$.

\section{Proof of Theorem \ref{P asymptotics}} \label{proof1}


Let us define $s := 4n^{2}y$, which is a rescaling of $y$. In order to use Lemma \ref{diff identity}, we need to compute large $n$ asymptotics for $\mathrm{Tr}(Y^{-1}(z) \partial_{z}Y(z) \sigma_3)$ uniformly for $z$ in a neighbourhood of $\infty$. The large $n$ analysis for $Y$ done in Section \ref{Section Steepest descent Y} is valid when the parameters $y$ and $r$ satisfy fall in one of the three cases presented in \eqref{Case 1}, \eqref{Case 2} and \eqref{Case 3}. For large $z$, by \eqref{S Psi transformation}, \eqref{T S Transformation}, \eqref{pinf copy paste} and \eqref{R final Y}, we have
\begin{equation}
Y(z) = e^{\frac{n \ell \sigma_3}{2}} R(z) P^{(\infty)}(z) e^{n \xi(z) \sigma_3} w(z)^{-\frac{\sigma_3}{2}}.
\end{equation}
Using the above expression for $Y$, we obtain
\begin{equation*}
\begin{array}{r c l}
\displaystyle \mathrm{Tr}(Y^{-1}(z) \partial_{z}Y(z) \sigma_3) & = & \displaystyle -\partial_z \log w(z) + 2n \partial_z  \xi(z) \\[0.15cm]
& & \displaystyle + \mathrm{Tr}(P^{(\infty)}(z)^{-1}\partial_z P^{(\infty)}(z) \sigma_3) \\[0.15cm]
& & \displaystyle + \mathrm{Tr}(P^{(\infty)}(z)^{-1}R(z)^{-1} \partial_z R(z)P^{(\infty)}(z) \sigma_3).
\end{array}
\end{equation*}
Using \eqref{pinf copy paste}, as $z \to \infty$ we have
\begin{equation}
\mathrm{Tr}(P^{(\infty)}(z)^{-1}\partial_z P^{(\infty)}(z) \sigma_3) = \frac{\alpha+2}{z}  + \frac{2\alpha +4}{z^2}  + \bigO(z^{-3})
\end{equation}
Similarly, \eqref{xi} and \eqref{weight complex plane} give
\begin{equation}
-\partial_z \log w(z) + 2n \partial_z  \xi(z) = 2n\left(\frac{1}{z} + \frac{1}{z^2}\right) - \frac{\alpha+2}{z} - \frac{2y}{z^2} +\bigO(z^{-3}), \quad \mbox{ as } z \to \infty.
\end{equation}
Also, by \eqref{pinf copy paste} and \eqref{R asymp inf Y}, we have
\begin{equation}
\mathrm{Tr}(P^{(\infty)}(z)^{-1}R(z)^{-1} \partial_z R(z)P^{(\infty)}(z) \sigma_3) = -\frac{\mathrm{Tr}(R_{1}\sigma_{3})}{z^{2}}+\bigO(z^{-3}), \qquad \mbox{ as } z \to \infty.
\end{equation}
Using the above expressions and Lemma \ref{diff identity} gives,
\begin{align}
& z\mathrm{Tr}(Y^{-1}(z) \partial_{z}Y(z) \sigma_3) = 2n + \frac{2(\alpha+2)+2n-2y-\mathrm{Tr}(R_{1}\sigma_{3})}{z}+\bigO(z^{-2}), \quad \mbox{ as } z \to \infty, \\
& \label{partial y}
\partial_y \log Z_{n,\alpha}(y;r) = n + \frac{n}{2y} \mathrm{Tr}(R_1 \sigma_3).
\end{align}
By using \eqref{Trace R1} in \eqref{partial y} and rewriting it in terms of $s = 4n^{2}y$, we have as $n \to \infty$ that
\begin{equation}\label{eq 7.5}
\begin{array}{r c l}
\displaystyle \partial_{s} \log Z_{n,\alpha} \left( \frac{s}{4n^{2}};r \right)  & = & \displaystyle \frac{1}{4n} + \frac{n}{2s}\mathrm{Tr}(R_{1}\sigma_{3}), \\
& = & \displaystyle \frac{1}{2s} \left(  v\bigg(n^2 f(\tfrac{s}{4n^{2}}) ;  \frac{f(\frac{rs}{4n^{2}})}{f(\frac{s}{4n^{2}})} \bigg) - v(0) \right) + \frac{1}{s}\bigO(\max\{n^{-1},\tfrac{(rs)^{2}}{n}\}), \\
& = & \displaystyle \frac{1}{2s} \left(  v(s;r) - v(0) \right) + \frac{1}{s}\bigO(\max\{n^{-1},\tfrac{(rs)^{2}}{n}\}).
\end{array}
\end{equation}
Let us fix $r$. By integrating the left-hand side of \eqref{eq 7.5} from $\epsilon >0$ to a certain $s > \epsilon$ we obtain
\begin{equation}\label{useless}
\log\left( \frac{Z_{n,\alpha}(\tfrac{s}{4n^{2}};r)}{Z_{n,\alpha}(\tfrac{\epsilon}{4n^{2}};r)} \right) = \int_{\epsilon}^{s} \partial_{x} \log Z_{n,\alpha} \left( \frac{x}{4n^{2}};r \right)dx.
\end{equation}
Since the function $y \in [0,\infty) \mapsto Z_{n,\alpha}(y;r)$ is continuous, and since
\begin{equation}
Z_{n,\alpha}(0;r) = \widehat Z_{n,\alpha + 2} >0,
\end{equation} 
the left-hand-side of \eqref{useless} is bounded as $\epsilon \to 0$, and thus the same is true for the right-hand side. In order to use \eqref{eq 7.5} in \eqref{useless}, it is important to note that $\bigO$ term of \eqref{eq 7.5} is uniform when $s$ is in a compact subset of $(0,\infty)$ and also as $s \to 0$. Thus, we obtain
\begin{equation}
\label{small y Z}
\log \left( \frac{Z_{n,\alpha}(\frac{s}{4n^{2}};r)}{\widehat Z_{n,\alpha + 2}} \right) = \frac{1}{2} \int_0^{s} \left[v(x;r) - v(0)
 \right] \frac{dx}{x} + \bigO\big(\max\{n^{-1},\tfrac{(rs)^{2}}{n}\}\big).
\end{equation}
By an integration by parts, we have
\begin{equation*}
I(s;r) := \int_0^{s} \left[v(x;r) - v(0)
 \right] \frac{dx}{2x} = \frac{1}{2}\int_{0}^{s}v^{\prime}(x;r)\log\left( \frac{s}{x} \right)dx = -\frac{1}{4}\int_{0}^{s}\big(q_{1}^{2}(x;r)+rq_{2}^{2}(x;r)\big)\log\left( \frac{s}{x} \right)dx,
\end{equation*}
where for the last equality we have used \eqref{system for the b second} and \eqref{change of functions}. This completes the proof of Theorem \ref{P asymptotics}.
Note that \eqref{small y Z} can be rewritten as
\begin{equation}
Z_{n,\alpha}\Big(\frac{s}{4n^{2}};r\Big) = \widehat Z_{n,\alpha + 2} e^{I(s;r)} \left( 1 + \bigO\big(\max\{n^{-1},\tfrac{(rs)^{2}}{n}\}\big) \right).
\end{equation}

\section{Proof of Theorem \ref{thm-Q}} \label{proof2}

From \eqref{Qn} by changing variables $x := 4(n-1)ny$, we obtain
\begin{multline} \label{Qn in terms of x}
\hspace{-1.6cm}Q_{n,\alpha} (r) = \frac{\widehat Z_{n-1,\alpha+2}}{\widehat Z_{n,\alpha}}  \left( \frac{n-1}{n} \right)^{\hspace{-0.1cm}(n-1)(n+1+\alpha)}\hspace{-0.5cm} (4(n-1)n)^{-1-\alpha} \hspace{-0.1cm} \int_{0}^\infty \hspace{-0.15cm} \frac{x^\alpha e^{-\frac{x}{n}}}{\widehat Z_{n-1,\alpha+2}} Z_{n-1,\alpha}\hspace{-0.06cm}\left( \hspace{-0.1cm} \frac{x}{4(n-1)^{2}};r \hspace{-0.06cm} \right)\hspace{-0.06cm} dx.
\end{multline}
It is known that (see \cite[formula 17.6.5]{Mehta})
\begin{equation}
\displaystyle \widehat Z_{n,\alpha} = \frac{1}{n!}n^{-n^{2}-\alpha n} \prod_{j=1}^{n} j! \Gamma(j+\alpha),
\end{equation}
and therefore we have
\begin{equation}\label{Zrel}
\begin{array}{r c l}
\displaystyle \frac{\widehat Z_{n-1,\alpha+2}}{\widehat Z_{n,\alpha}} & = & \displaystyle \frac{(n-1)^{\alpha+1} \Gamma (n +\alpha + 1)}{(n-1)! \Gamma(\alpha + 1) \Gamma(\alpha + 2)} \left( \frac{n}{n-1} \right)^{n^{2} + \alpha n}, \\
& = & \displaystyle \frac{n^{2\alpha+2}}{\Gamma(\alpha+1)\Gamma(\alpha+2)} \left( \frac{n}{n-1} \right)^{n^{2} + \alpha n} \left( 1+\bigO\left(n^{-1}\right) \right), \quad \mbox{ as } n \to \infty.
\end{array}
\end{equation}
We can thus simplify expression \eqref{Qn in terms of x} as
\begin{equation}
Q_{n,\alpha}(r) = \frac{1+ \bigO(n^{-1})}{4^{\alpha+1}\Gamma(\alpha + 1) \Gamma(\alpha + 2)} (I_{1} + I_{2}) ,
\end{equation}
where
\begin{equation}
I_{1} = \int_{0}^{M} x^{\alpha} e^{ -\frac{x}{n}} e^{I(x;r)}dx, \qquad I_{2} = \int_{M}^{\infty}  \frac{x^\alpha e^{-\frac{x}{n}}}{\widehat{Z}_{n-1,\alpha+2}} Z_{n-1,\alpha}\left(\frac{x}{4(n-1)^{2}};r \right) dx,
\end{equation}
and $M > 0$ is a constant. The asymptotics \eqref{asymp I large} implies that for any $\epsilon >0$, we have
\begin{equation}\label{lol45}
\frac{Z_{n-1,\alpha}(\frac{x}{4(n-1)^{2}};r)}{\widehat Z_{n-1,\alpha+2}} = \bigO\big(e^{-(\frac{1}{4}-\epsilon)rx}\big), \qquad \mbox{ as } \qquad \frac{rx}{n}\to 0, \quad (r-1)x \to \infty.
\end{equation}
Note that because of the restriction $\frac{rx}{n}\to 0$, this asymptotic formula alone is not sufficient to estimate $I_{2}$. Nevertheless, we can derive the following inequality
\begin{equation}\label{lol44}
\begin{array}{r c l}
\hspace{-0.2cm}\displaystyle \frac{Z_{n-1,\alpha}(\frac{rx}{4(n-1)^{2}};r)}{\widehat Z_{n-1,\alpha+2}} & = & \displaystyle \frac{(n-1)!^{-1}}{\widehat Z_{n-1,\alpha+2}} \int_{\frac{rx}{4(n-1)^{2}}}^{\infty}\hspace{-0.2cm}...\int_{\frac{rx}{4(n-1)^{2}}}^{\infty} \Delta_{n-1}(\lambda)^{2}\prod_{i=1}^{n-1}\left(\lambda_{i}-\frac{x}{4(n-1)^{2}}\right)^{2}\lambda_{i}^{\alpha}e^{-(n-1)\lambda_{i}}d\lambda_{i} \\[0.2cm]
& \leq & \displaystyle \frac{1}{(n-1)!\widehat Z_{n-1,\alpha+2}} \int_{\frac{rx}{4(n-1)^{2}}}^{\infty}...\int_{\frac{rx}{4(n-1)^{2}}}^{\infty} \Delta_{n-1}(\lambda)^{2}\prod_{i=1}^{n-1}\lambda_{i}^{\alpha+2}e^{-(n-1)\lambda_{i}}d\lambda_{i} \\[0.4cm]
& = & \displaystyle \mathbb{P}_{n-1,\alpha+2}\left( \lambda_{\min} > \frac{rx}{4(n-1)^{2}} \right).
\end{array}
\end{equation}
It is well-known \cite{TraWidLUE} that the above quantity is bounded by $e^{- C x r}$ for sufficiently large $x$ as $n \to \infty$ (this is a large deviation principle), and where $C > 0$ is a constant. Combining \eqref{lol45} and \eqref{lol44}, we thus have
\begin{equation}
\begin{array}{r c l}
\displaystyle I_{2} & \leq & \displaystyle  \int_{M}^{\infty} x^{\alpha} e^{-\frac{x}{n}} e^{-Cxr} dx \leq e^{-\frac{C}{2}Mr}, \\
\end{array}
\end{equation}
if $M$ is chosen big enough. Therefore we obtain,
\begin{equation}\label{Prob almost final}
\lim_{n\to\infty} Q_{n,\alpha} (r) = \frac{1}{4^{\alpha+1}\Gamma(\alpha+1)\Gamma(\alpha+2)} \int_{0}^{M} x^{\alpha} e^{I(x;r)}dx + \bigO\big(e^{-\frac{C}{2} Mr}\big).
\end{equation}
Letting $M \to \infty$ in \eqref{Prob almost final} finishes the proof of Theorem \ref{thm-Q}.

\section{Appendix}
\subsection{Airy model RH problem}
\label{Subsection: Airy model RH problem}
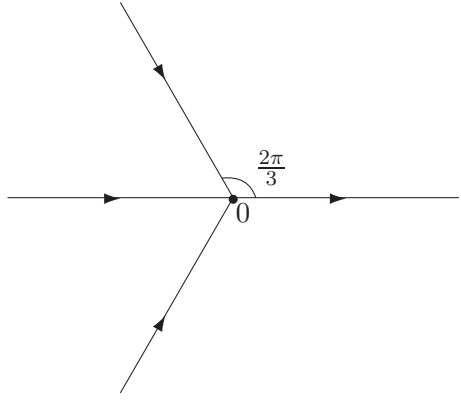
\begin{figure}[t]
    \begin{center}
    \setlength{\unitlength}{1truemm}
    \begin{picture}(100,55)(-5,10)
        \put(50,40){\line(1,0){30}}
        \put(50,40){\line(-1,0){30}}
        \put(50,39.8){\thicklines\circle*{1.2}}
        \put(50,40){\line(-0.5,0.866){15}}
        \put(50,40){\line(-0.5,-0.866){15}}
        \qbezier(53,40)(52,43)(48.5,42.598)
        \put(53,43){$\frac{2\pi}{3}$}
        \put(50.3,36.8){$0$}
        \put(65,39.9){\thicklines\vector(1,0){.0001}}
        \put(35,39.9){\thicklines\vector(1,0){.0001}}
        \put(41,55.588){\thicklines\vector(0.5,-0.866){.0001}}
        \put(41,24.412){\thicklines\vector(0.5,0.866){.0001}}
    \end{picture}
    \caption{\label{figAiry}The jump contour $\Sigma_{A}$ for $\tilde \Upsilon$.}
\end{center}
\end{figure}
\begin{itemize}
\item[(a)] $\widetilde{\Upsilon} (z) : \mathbb{C} \setminus \Sigma_{A} \rightarrow \mathbb{C}^{2 \times 2}$ is analytic where $\Sigma_{A}$ is shown in Figure \ref{figAiry}.
\item[(b)] $\widetilde{\Upsilon}$ has the jump relations
\begin{equation}\label{jumps P3}
\begin{array}{l l}
\widetilde{\Upsilon}_{+}(z) = \widetilde{\Upsilon}_{-}(z) \begin{pmatrix}
0 & 1 \\ -1 & 0
\end{pmatrix}, & \mbox{ on } \mathbb{R}^{-}, \\

\widetilde{\Upsilon}_{+}(z) = \widetilde{\Upsilon}_{-}(z) \begin{pmatrix}
 1 & 1 \\
 0 & 1
\end{pmatrix}, & \mbox{ on } \mathbb{R}^{+}, \\

\widetilde{\Upsilon}_{+}(z) = \widetilde{\Upsilon}_{-}(z) \begin{pmatrix}
 1 & 0  \\ 1 & 1
\end{pmatrix}, & \mbox{ on } e^{ \frac{2\pi}{3} i }  \mathbb{R}^{+} , \\

\widetilde{\Upsilon}_{+}(z) = \widetilde{\Upsilon}_{-}(z) \begin{pmatrix}
 1 & 0  \\ 1 & 1
\end{pmatrix}, & \mbox{ on }e^{ -\frac{2\pi}{3} i }\mathbb{R}^{+} . \\
\end{array}
\end{equation}
\item[(c)] As $z \to \infty$, we have
\begin{equation}\label{Asymptotics Airy}
\widetilde{\Upsilon}(z) = z^{-\frac{\sigma_{3}}{4}}N \left( I + \frac{1}{z^{3/2}} \widetilde{\Upsilon}_{1} + \bigO(z^{-3}) \right) e^{-\frac{2}{3}z^{3/2}\sigma_{3}},
\end{equation}
with 
\begin{equation}
\widetilde{\Upsilon}_{1} = \frac{1}{8} \begin{pmatrix}
\frac{1}{6} & i \\ i & -\frac{1}{6}
\end{pmatrix}.
\end{equation}
\item[(d)] As $z \to 0$, $\widetilde{\Upsilon}(z) = \bigO(1)$.
\end{itemize}
The following matrix-valued function solves the above Airy model RH problem (see \cite{ClaeysGravaMcLaughlin,Bleher}):
\begin{equation}
\widetilde{\Upsilon}(z) := M_{A} \times \left\{ \begin{array}{l l}
\begin{pmatrix}
\mbox{Ai}(z) & \mbox{Ai}(\omega^{2}z) \\
\mbox{Ai}^{\prime}(z) & \omega^{2}\mbox{Ai}^{\prime}(\omega^{2}z)
\end{pmatrix}e^{-\frac{\pi i}{6}\sigma_{3}}, & \mbox{for } 0 < \arg z < \frac{2\pi}{3}, \\
\begin{pmatrix}
\mbox{Ai}(z) & \mbox{Ai}(\omega^{2}z) \\
\mbox{Ai}^{\prime}(z) & \omega^{2}\mbox{Ai}^{\prime}(\omega^{2}z)
\end{pmatrix}e^{-\frac{\pi i}{6}\sigma_{3}}\begin{pmatrix}
1 & 0 \\ -1 & 1
\end{pmatrix}, & \mbox{for } \frac{2\pi}{3} < \arg z < \pi, \\
\begin{pmatrix}
\mbox{Ai}(z) & - \omega^{2}\mbox{Ai}(\omega z) \\
\mbox{Ai}^{\prime}(z) & -\mbox{Ai}^{\prime}(\omega z)
\end{pmatrix}e^{-\frac{\pi i}{6}\sigma_{3}}\begin{pmatrix}
1 & 0 \\ 1 & 1
\end{pmatrix}, & \mbox{for } -\pi < \arg z < -\frac{2\pi}{3}, \\
\begin{pmatrix}
\mbox{Ai}(z) & - \omega^{2}\mbox{Ai}(\omega z) \\
\mbox{Ai}^{\prime}(z) & -\mbox{Ai}^{\prime}(\omega z)
\end{pmatrix}e^{-\frac{\pi i}{6}\sigma_{3}}, & \mbox{for } -\frac{2\pi}{3} < \arg z < 0, \\
\end{array} \right.
\end{equation}
with $\omega = e^{\frac{2\pi i}{3}}$, Ai the Airy function and
\begin{equation}
M_{A} := \sqrt{2 \pi} e^{\frac{\pi i}{6}} \begin{pmatrix}
1 & 0 \\ 0 & -i
\end{pmatrix}.
\end{equation}

\subsection{Bessel model RH problem}
\label{Subsection: Bessel model RH problem}
This RH problem depends on a parameter $\alpha \in \mathbb{R}$.
\begin{itemize}
\item[(a)] $\Upsilon = \Upsilon^{(\alpha)} : \mathbb{C} \setminus \Sigma_{0,0}$ is analytic.
\item[(b)] $\Upsilon$ has the following jumps on $\Sigma_{0,0}\setminus \{0\}$:
\begin{align}
&\Upsilon_{+}(z) = \Upsilon_{-}(z) \begin{pmatrix}0&1\\-1&0\end{pmatrix}, &z\in \Sigma_{2},\\
&\Upsilon_{+}(z) = \Upsilon_{-}(z) \begin{pmatrix}
1 & 0 \\ e^{\pi i \alpha} & 1
\end{pmatrix}, & z\in \Sigma_{1},\\
&\Upsilon_{+}(z) = \Upsilon_{-}(z) \begin{pmatrix}
1 & 0 \\ e^{-\pi i \alpha} & 1
\end{pmatrix}, & z\in \Sigma_{3},
\end{align}
\item[(c)] As $z \rightarrow \infty$,
\begin{equation}
\Upsilon(z) = \left(I+\frac{1}{z}\Upsilon_{1} + \bigO(z^{-2})\right) z^{-\frac{1}{4}\sigma_3} N e^{z^{1/2} \sigma_{3}}.
\end{equation}
\item[(d)] As $z\to 0$,\\

(i) If $\alpha < 0$, $ \Upsilon(z) = \bigO \begin{pmatrix}
|z|^{\frac{\alpha}{2}} & |z|^{\frac{\alpha}{2}} \\
|z|^{\frac{\alpha}{2}} & |z|^{\frac{\alpha}{2}} \\
\end{pmatrix}$. \\

(ii) If $\alpha = 0$, $ \Upsilon(z) = \bigO \begin{pmatrix}
\log |z| & \log |z| \\
\log |z| & \log |z| \\
\end{pmatrix}$. \\

(iii) If $\alpha > 0$,

\begin{equation}
\Upsilon(z)=  \left\{ \begin{array}{l l}

\bigO \begin{pmatrix}
|z|^{\frac{\alpha}{2}} & |z|^{-\frac{\alpha}{2}} \\
|z|^{\frac{\alpha}{2}} & |z|^{-\frac{\alpha}{2}}
\end{pmatrix}, & \mbox{ for } -\frac{2\pi}{3} < \arg (z) < \frac{2\pi}{3}, \\

\bigO \begin{pmatrix}
|z|^{-\frac{\alpha}{2}} & |z|^{-\frac{\alpha}{2}} \\
|z|^{-\frac{\alpha}{2}} & |z|^{-\frac{\alpha}{2}}
\end{pmatrix}, & \mbox{ for } \arg(z) \in (-\pi,-\frac{2\pi}{3}) \cup (\frac{2\pi}{3},\pi). \\

\end{array} \right.
\end{equation}
\end{itemize}
The solution of this RH problem is explicitly given in terms of the Bessel functions (see \cite{KuiMcLVanVan} or \cite{AtkClaMez}), one has
\begin{equation}\label{lol4}
\Upsilon^{(\alpha)}(z) = \begin{pmatrix}
1 & 0 \\ \displaystyle \frac{i}{8}(4\alpha^2+3) & 1
\end{pmatrix} \pi^{\sigma_{3}/2} \begin{pmatrix}
\displaystyle I_{\alpha}(z^{1/2}) & \displaystyle \frac{i}{\pi}K_{\alpha}(z^{1/2}) \\[0.2cm]
\displaystyle \pi i z^{1/2}I_{\alpha}^{\prime}(z^{1/2}) & \displaystyle -z^{1/2}K_{\alpha}^{\prime}(z^{1/2})
\end{pmatrix} H_{0}(z),
\end{equation}
and
\begin{equation} \label{asymptotics_behaviour_of_Bessel_order1}
\begin{array}{r c l}
\displaystyle \left(\Upsilon_{1}\right)_{12} & = & \displaystyle \frac{i}{8} (2\alpha-1)(2\alpha+1).
\end{array}
\end{equation}

\section*{Acknowledgements}
The authors are grateful to Tom Claeys for useful discussions. MA and CC were supported by the European Research Council under the European Union's Seventh Framework Programme (FP/2007/2013)/ ERC Grant Agreement n.\, 307074, and CC was also supported by F.R.S.-F.N.R.S. SZ was funded by Nokia Technologies, Lockheed Martin and the University of Oxford through the quantum Optimisation and Machine Learning (QuOpaL) project.

\end{document}